\DeclareRobustCommand{\greektext}{%
  \fontencoding{LGR}\selectfont\def\encodingdefault{LGR}}
\DeclareRobustCommand{\textgreek}[1]{\leavevmode{\greektext #1}}
\providecommand{\tabularnewline}{\\}
\numberwithin{equation}{section}
\numberwithin{figure}{section}
\theoremstyle{plain}
\newtheorem{thm}{\protect\theoremname}
\theoremstyle{plain}
\newtheorem{prop}[thm]{\protect\propositionname}
\theoremstyle{definition}
\newtheorem{defn}[thm]{\protect\definitionname}
\theoremstyle{definition}
\newtheorem{example}[thm]{\protect\examplename}
\theoremstyle{plain}
\newtheorem{conjecture}[thm]{\protect\conjecturename}
 \let\mathscr\relax
\newcommand{\powerset}{\raisebox{.15\baselineskip}{\Large\ensuremath{\wp}}}
\providecommand{\conjecturename}{Conjecture}
\providecommand{\definitionname}{Definition}
\providecommand{\examplename}{Example}
\providecommand{\propositionname}{Proposition}
\providecommand{\theoremname}{Theorem}
\begin{document}
\title{ON PARTIAL DIFFERENTIAL ENCODINGS OF BOOLEAN FUNCTIONS}
\author{Edinah K. Gnang, Rongyu Xu}
\begin{abstract}
We introduce partial differential encodings of Boolean functions as
a way of measuring the complexity of Boolean functions. These encodings
enable us to derive from group actions non-trivial bounds on the Chow-Rank
of polynomials used to specify partial differential encodings of Boolean
functions. We also introduce variants of partial differential encodings
called partial differential programs. We show that such programs optimally
describe important families of polynomials including determinants
and permanents. Partial differential programs also enables to quantitively
contrast these two families of polynomials. Finally we derive from
polynomial constructions inspired by partial differential programs
which exhibit an unconditional exponential separation between high
order hypergraph isomorhism instances and their sub-isomorphism counterparts.
\end{abstract}

\maketitle

\section{Introduction}

In a epoch-making book titled ``An Investigation of the Laws of Thought'',
George Boole \cite{Bool54} laid the foundations for the Boolean algebra.
This algebra serves as the first of two pillars of the computing revolution.
Interestingly, George Boole also initiated the branch of mathematics
known as invariant theory \cite{Wol08}. There is a recognition \cite{Garg19,Gro16,Aar16,Gro20}
that a rich interplay relates these seemingly separate branches both
pioneered by Boole. Invariant theory emphasizes consequences which
stem from symmetries or lack thereof. The importance of symmetries
in the analysis of Boolean functions was well known to pioneers of
the field, such as Shannon, Pólya and Redfield \cite{Sha49,Pol40,Pol37,Red27}.
We investigate in the present work, partial differential incarnations
of Turing machines. Turing machines were introduced by Alan Turing
\cite{Tur36} as a mathematical model of computation. Turing machines
are the second pillar of the computing revolution. The use of differential
operators in invariant theory is also very old. Their origin can be
traced back to the work of early pioneers of invariant theory. Most
notably to the work of George Boole, Arthur Cayley and James Joseph
Sylvester \cite{Cay45,Sylvester1852} who instigated the use of differential
operators to construct invariants of group actions. The framework
is also known as Cayley's \textgreek{W} process. In complexity theory,
differential operators were investigated in the context of arithmetic
complexity by Baur and Strassen \cite{BS83}. More recently, Cornelius
Brand and Kevin Pratt \cite{br2020algorithmic} were able to match
the runtime of the fastest known deterministic algorithm for detecting
subgraphs of bounded path-width using a method of partial derivatives.
We refer the reader to excellent recent surveys on partial differential
methods in arithmetic complexity \cite{TCS2011,TCS-039}. The importance
of investigating partial differential operators is reinforced by the
central role they play in physics and machine learning. The present
work formally ties, aspects of low depth arithmetic circuit complexity
to Boolean De Morgan circuit complexity. Recent depth reduction results
\cite{10.1007/3-540-10856-4_79,4690941,10.1145/2535928,doi:10.1137/140957123,doi:10.1137/0208010}
motivate our focus on low depth arithmetic circuits. In the present
work, we introduce partial differential encodings of Boolean functions
and their relaxations. These encodings enable us to determine the
fraction of optimal encodings. Our main result is a general method
for deriving Chow-Rank bounds of polynomial from group actions. Our
method is a discrete analog of representation theory methods which
devise bound on the border rank from Lie group actions\cite{landsberg_2017,JGrochow2015}.
We also introduce variants of partial differential encodings called
partial differential programs. We show that such programs optimally
describe important families of polynomials including determinants
and permanents. Partial differential programs also enables to quantitively
contrast these two families of polynomials. Finally we derive from
polynomial constructions inspired by partial differential programs
which exhibit an unconditional exponential separation between high
order hypergraph isomorhism instances and their sub-isomorphism counterparts.

\section{Partial Differential Encodings.}

Recall the ``needles in a haystack'' conundrum \cite{Sha49}. The
conundrum roughly translates into the observation that there are at
most $s^{3s}$ Boolean circuits (expressed in the De Morgan basis)
of size $s$ for as many Boolean functions among the $2^{\left(2^{n}\right)}$
possible Boolean functions on $n$ bits. Consequently, most circuits
have size at most $O\left(\frac{2^{n}}{n}\right)$. Unfortunately,
the argument does not bound to the size of circuit encodings of specific
Boolean functions. We circumvent this drawback by considering an algebraic
variant of the conundrum. The algebraic variant is based upon Boole's
correspondence
\begin{equation}
\begin{cases}
\begin{array}{ccc}
\text{True} & \rightarrow & 1,\\
\\
\text{False} & \rightarrow & 0,\\
\\
\neg\,x_{i} & \rightarrow & 1-x_{i},\\
\\
x_{i}\vee x_{j} & \rightarrow & x_{i}+x_{j}-x_{i}\,x_{j},\\
\\
x_{i}\wedge x_{j} & \rightarrow & x_{i}\,x_{j}.
\end{array}\end{cases}\label{Algebraic correspondence}
\end{equation}
and conversely
\begin{equation}
\begin{cases}
\begin{array}{ccc}
1 & \rightarrow & \text{True},\\
\\
0 & \rightarrow & \text{False},\\
\\
1-x_{i} & \rightarrow & \neg\,x_{i},\\
\\
x_{i}\,x_{j} & \rightarrow & x_{i}\wedge x_{j},\\
\\
\left(x_{i}+x_{j}\right)\text{mod }2 & \rightarrow & \left(x_{i}\vee x_{j}\right)\wedge\neg\left(x_{i}\wedge x_{j}\right):=\left(x_{i}\,\underline{\vee}\,x_{j}\right).
\end{array}\end{cases}\label{Boolean correspondence}
\end{equation}
To reflect the fact that the variables $\left\{ x_{i}:i\in\mathbb{Z}_{n}\right\} $\footnote{For notational convenience let $\mathbb{Z}_{n}\,:=\left[0,n\right)\cap\mathbb{Z}$.}
are Boolean, algebraic expression are often taken modulo the binary
algebraic relations
\begin{equation}
\left\{ \begin{array}{c}
\left(x_{i}\right)^{2}\equiv x_{i}\\
i\in\mathbb{Z}_{n}
\end{array}\right\} .\label{Binary Algebraic Relations}
\end{equation}

\begin{prop}
\label{Canonical_Boolean_Interpolant}An arbitrary Boolean function
\[
F:\left\{ 0,1\right\} ^{n}\rightarrow\left\{ 0,1\right\} 
\]
admits a canonical depth--3 $\sum\prod\sum$ arithmetic formula expression
over $\mathbb{Q}$ as well as a canonical depth--2 $\prod\sum$ arithmetic
formula over $\mathbb{C}$ prescribed modulo relations described in
Eq. (\ref{Binary Algebraic Relations}).
\end{prop}

\begin{proof}
By Lagrange's interpolation theorem $F\left(\mathbf{x}\right)$ admits
a unique multilinear interpolant given by
\begin{equation}
F\left(\mathbf{x}\right)=\sum_{\begin{array}{c}
\mathbf{b}\in\left\{ 0,1\right\} ^{n\times1}\\
\text{s.t. }F\left(\mathbf{b}\right)=1
\end{array}}\prod_{i\in\mathbb{Z}_{n}}\left(\prod_{d_{i}\in\left\{ 0,1\right\} \backslash\left\{ b_{i}\right\} }\left(\frac{x_{i}-d_{i}}{b_{i}-d_{i}}\right)\right)=\sum_{\begin{array}{c}
\mathbf{b}\in\left\{ 0,1\right\} ^{n\times1}\\
\text{s.t. }F\left(\mathbf{b}\right)=1
\end{array}}\prod_{i\in\mathbb{Z}_{n}}\left(\frac{x_{i}-\left(1-b_{i}\right)}{2b_{i}-1}\right).\label{first PDE}
\end{equation}
Thereby expressing the desired depth--3 $\sum\prod\sum$ formula.
Furthermore, Lagrange's interpolation construction yields the congruence
identity
\[
\sum_{\begin{array}{c}
\mathbf{b}\in\left\{ 0,1\right\} ^{n\times1}\\
\text{s.t. }F\left(\mathbf{b}\right)=1
\end{array}}\prod_{i\in\mathbb{Z}_{n}}\left(\frac{x_{i}-\left(1-b_{i}\right)}{2b_{i}-1}\right)\equiv\sum_{\begin{array}{c}
\mathbf{b}\in\left\{ 0,1\right\} ^{n\times1}\\
\text{s.t. }F\left(\mathbf{b}\right)=1
\end{array}}\prod_{\mathbf{d}\in\left\{ 0,1\right\} ^{n\times1}\backslash\left\{ \mathbf{b}\right\} }\left(\frac{\underset{j\in\mathbb{Z}_{n}}{\sum}2^{j}\,x_{j}-\underset{k\in\mathbb{Z}_{n}}{\sum}2^{k}\,d_{k}}{\underset{j\in\mathbb{Z}_{n}}{\sum}2^{j}\,b_{j}-\underset{k\in\mathbb{Z}_{n}}{\sum}2^{k}\,d_{k}}\right)\text{ mod}\left\{ \begin{array}{c}
\left(x_{i}\right)^{2}-x_{i}\\
i\in\mathbb{Z}_{n}
\end{array}\right\} 
\]
By the fundamental theorem of algebra there exist $\left\{ r_{i}:0\le i<2^{n}\right\} \subset\mathbb{C}$
such that 
\[
r_{0}\prod_{0<i<2^{n}}\left(r_{i}+\sum_{j\in\mathbb{Z}_{n}}2^{j}\,x_{j}\right)=\sum_{\begin{array}{c}
\mathbf{b}\in\left\{ 0,1\right\} ^{n\times1}\\
\text{s.t. }F\left(\mathbf{b}\right)=1
\end{array}}\prod_{\mathbf{d}\in\left\{ 0,1\right\} ^{n\times1}\backslash\left\{ \mathbf{b}\right\} }\left(\frac{\underset{j\in\mathbb{Z}_{n}}{\sum}2^{j}\,x_{j}-\underset{k\in\mathbb{Z}_{n}}{\sum}2^{k}\,d_{k}}{\underset{j\in\mathbb{Z}_{n}}{\sum}2^{j}\,b_{j}-\underset{k\in\mathbb{Z}_{n}}{\sum}2^{k}\,d_{k}}\right).
\]
Consequently 
\begin{equation}
F\left(\mathbf{x}\right)\equiv r_{0}\prod_{0<i<2^{n}}\left(r_{i}+\sum_{j\in\mathbb{Z}_{n}}2^{j}\,x_{j}\right)\mod\left\{ \begin{array}{c}
\left(x_{i}\right)^{2}-x_{i}\\
i\in\mathbb{Z}_{n}
\end{array}\right\} .\label{first_PDP}
\end{equation}
Thereby expressing the desired depth--2 $\prod\sum$ formula expressing
the Boolean function $F$ modulo relations described in Eq. (\ref{Binary Algebraic Relations}).
\end{proof}
Arithmetic formulas derived in the proof of Prop. (\ref{Canonical_Boolean_Interpolant})
feature expressions of the form 
\begin{equation}
\sum_{0\le u<\rho}\,\prod_{0\le v<d}\left(\mathbf{B}\left[u,v,0\right]+\underset{w\in\mathbb{Z}_{n}}{\sum}\mathbf{B}\left[u,v,w+1\right]x_{w}\right).\label{sum_of_product of linear form}
\end{equation}
We say that the hypermatrice $\mathbf{B}\in\mathbb{C}^{\rho\times d\times\left(n+1\right)}$
underlies the corresponding depth--3 $\sum\prod\sum$ arithmetic
formulas.

\subsection{Partial Differential Encoding of Boolean functions and their relaxations.}

For notational convenience, let $\mathbb{Z}_{n}$ denote the set formed
by the first $n$ consecutive non-negative integers i.e.
\[
\mathbb{Z}_{n}:=\left[0,n\right)\cap\mathbb{Z}.
\]
For simplicity take $n$ to be a perfect square. Edges of the complete
graph on $\sqrt{n}$ vertices allowing for loop edges are associated
with members of $\mathbb{Z}_{n}$ as prescribed by the following identification
:
\[
\text{the edge }\left(i,j\right)\in\mathbb{Z}_{\sqrt{n}}\times\mathbb{Z}_{\sqrt{n}}\text{ is associated with the integer }i\sqrt{n}+j\in\mathbb{Z}_{n}.
\]
Depth--3 arithmetic formulas expressing Boolean functions suggest
alternative partial differential encoding of Boolean functions.
\begin{defn}
A Partial Differential Encoding (or PDE for short) of a Boolean function
\[
F:\left\{ 0,1\right\} ^{n\times1}\rightarrow\left\{ 0,1\right\} 
\]
is one of two encodings of the Boolean function $F$. The first is
of the form :
\[
F\left(\mathbf{1}_{T}\right)=\left(\left.\left(\prod_{i\sqrt{n}+j\in T}\frac{\partial}{\partial a_{i,j}}\right)\sum_{0\le u<\rho}\,\prod_{0\le v<d}\left(\mathbf{B}\left[u,v,0\right]+\sum_{0\le i,j<\sqrt{n}}\mathbf{B}\left[u,v,1+i\sqrt{n}+j\right]a_{i,j}\right)\right\rfloor _{\mathbf{A}=\mathbf{0}_{\sqrt{n}\times\sqrt{n}}}\right)^{m},
\]
for all $T\subseteq\mathbb{Z}_{n}$ and where $\mathbf{1}_{T}$ denotes
the indicator vector of the edge subset $T$. Note that such a PDE
is specified via a mulitlinear polynomial in the $n$ variables $\left\{ a_{0,0},\cdots,a_{\sqrt{n}-1,\sqrt{n}-1}\right\} $.
In particular when $m=1$ the said multilinear polynomial is
\[
\sum_{0\le u<\rho}\,\prod_{0\le v<d}\left(\mathbf{B}\left[u,v,0\right]+\sum_{0\le i,j<\sqrt{n}}\mathbf{B}\left[u,v,1+i\sqrt{n}+j\right]a_{i,j}\right)=\sum_{\begin{array}{c}
\mathbf{b}\in\left\{ 0,1\right\} ^{n\times1}\\
\text{s.t. }F\left(\mathbf{b}\right)=1
\end{array}}\prod_{0\le i,j<\sqrt{n}}\left(a_{i,j}\right)^{b_{i\sqrt{n}+j}}.
\]
In its second form, a PDE of $F$ is specified by in polynomial in
the $\sqrt{n}$ variables $\left\{ x_{0},\cdots,x_{\sqrt{n}-1}\right\} $
not necessarily multilinear as follows
\[
F\left(\mathbf{1}_{T}\right)=\left(\left.\prod_{0\le i,j<\sqrt{n}}\left(\frac{\partial}{\sqrt[j]{j!}\,\partial x_{i}}\right)^{j\,\mathbf{1}_{T}\left[i\sqrt{n}+j\right]}\sum_{0\le u<\rho}\prod_{0\le v<d}\left(\mathbf{H}\left[u,v,0\right]+\sum_{0\le w<\sqrt{n}}\mathbf{H}\left[u,v,1+w\right]x_{w}\right)\right\rfloor _{\mathbf{x}=\mathbf{0}_{\sqrt{n}\times1}}\right)^{m}.
\]
In particular when $m=1$ the polynomial used to specify the PDE is
\[
\sum_{0\le u<\rho}\prod_{0\le v<d}\left(\mathbf{H}\left[u,v,0\right]+\sum_{0\le w<\sqrt{n}}\mathbf{H}\left[u,v,1+w\right]x_{w}\right)=\sum_{\begin{array}{c}
\mathbf{b}\in\left\{ 0,1\right\} ^{n\times1}\\
\text{s.t. }F\left(\mathbf{b}\right)=1
\end{array}}\prod_{0\le i,j<\sqrt{n}}\left(x_{i}\right)^{j\,b_{i\sqrt{n}+j}}.
\]
In both forms, the positive integer $m$ is called the exponent parameter
of the PDE. We see that hypermatrices $\mathbf{B}\in\mathbb{C}^{\rho\times d\times\left(1+n\right)}$
and $\mathbf{H}\in\mathbb{C}^{\rho\times d\times\left(1+\sqrt{n}\right)}$
completely specify the PDE. Similarly, a PDE relaxation is encodings
of one of the form 
\[
\left.\left(\prod_{i\sqrt{n}+j\in T}\frac{\partial}{\partial a_{i,j}}\right)\sum_{0\le u<\rho}\,\prod_{0\le v<d}\left(\mathbf{B}\left[u,v,0\right]+\sum_{0\le i,j<\sqrt{n}}\mathbf{B}\left[u,v,1+i\sqrt{n}+j\right]a_{i,j}\right)\right\rfloor _{\mathbf{A}=\mathbf{0}}\text{is }\begin{cases}
\begin{array}{cc}
\ne0 & \text{if }F\left(\mathbf{1}_{T}\right)=1\\
\\
0 & \text{otherwise}
\end{array},\end{cases}
\]
or alternatively as
\[
\left.\prod_{i\sqrt{n}+j\in T}\left(\frac{\partial}{\sqrt[j]{j!}\,\partial x_{i}}\right)^{j}\sum_{0\le u<\rho}\,\prod_{0\le v<d}\left(\mathbf{H}\left[u,v,0\right]+\sum_{0\le w<\sqrt{n}}\mathbf{H}\left[u,v,1+w\right]x_{w}\right)\right\rfloor _{\mathbf{x}=\mathbf{0}}\text{is }\begin{cases}
\begin{array}{cc}
\ne0 & \text{if }F\left(\mathbf{1}_{T}\right)=1\\
\\
0 & \text{otherwise}
\end{array}.\end{cases}
\]
\end{defn}

More generally, PDEs and their relaxations can be defined for Boolean
function on $m$--uniform hypergraphs. In that setting a PDE relaxation
is expressed as
\[
\left.\left(\prod_{\text{lex}\left(i_{0},\cdots,i_{m-1}\right)\in T}\frac{\partial}{\partial a_{i_{0},\cdots,i_{m-1}}}\right)\sum_{0\le u<\rho}\,\prod_{0\le v<d}\left(\mathbf{B}\left[u,v,0\right]+\sum_{0\le i_{0},\cdots,i_{m-1}<n}\mathbf{B}\left[u,v,1+\text{lex}\left(i_{0},\cdots,i_{m-1}\right)\right]a_{i_{0},\cdots,i_{m-1}}\right)\right\rfloor _{\mathbf{A}=\mathbf{0}}
\]
\[
\text{is }\begin{cases}
\begin{array}{cc}
\ne0 & \text{ if }F\left(\mathbf{1}_{T}\right)=1\\
\\
0 & \text{otherwise}
\end{array},\end{cases}
\]
where 
\[
\text{lex}\left(i_{0},\cdots,i_{m-1}\right)=\sum_{0\le k<m}i_{k}\,n^{\frac{k}{m}},\ \forall\,\left(i_{0},\cdots,i_{m-1}\right)\in\left(\mathbb{Z}_{\sqrt[k]{n}}\right)^{m}.
\]
PDEs exemplify our sought variant of the ``needles in a haystack''
conundrum. Consider Boolean functions specified in terms of a given
arbitrary subset $S\subseteq\mathbb{Z}_{n}$ such that
\begin{equation}
F_{\subseteq S}\left(\mathbf{1}_{T}\right)=\begin{cases}
\begin{array}{cc}
1 & \text{ if }T\subseteq S\\
\\
0 & \text{otherwise}
\end{array}, & F_{\supseteq S}\left(\mathbf{1}_{T}\right)=\end{cases}\begin{cases}
\begin{array}{cc}
1 & \text{ if }T\supseteq S\\
\\
0 & \text{otherwise}
\end{array} & \text{ and }F_{=S}\left(\mathbf{1}_{T}\right)=\end{cases}\begin{cases}
\begin{array}{cc}
1 & \text{ if }T=S\\
\\
0 & \text{otherwise}
\end{array}.\end{cases}\label{Subset Superset Boolean function specification}
\end{equation}
In other words these Boolean functions test wether or not the input
graph whose edges make up the subset $T\subset\mathbb{Z}_{n}$ is
a subgraph respectively supergraph or equal to of some fixed given
graph whose edge make up the subset $S\subseteq\mathbb{Z}_{n}$. PDEs
of $F_{\subseteq S}$, $F_{\supseteq S}$ and $F_{=S}$ are given
by
\[
\forall\,\mathbf{1}_{T}\in\left\{ 0,1\right\} ^{n\times1},\ F_{\subseteq S}\left(\mathbf{1}_{T}\right)=\left(\left.\left(\prod_{i\sqrt{n}+j\in T}\frac{\partial}{\partial a_{i,j}}\right)P_{\subseteq S}\left(\mathbf{A}\right)\right\rfloor _{\mathbf{A}=\mathbf{0}_{\sqrt{n}\times\sqrt{n}}}\right)^{m},
\]
\[
\forall\,\mathbf{1}_{T}\in\left\{ 0,1\right\} ^{n\times1},\ F_{\supseteq S}\left(\mathbf{1}_{T}\right)=\left(\left.\left(\prod_{i\sqrt{n}+j\in T}\frac{\partial}{\partial a_{i,j}}\right)P_{\supseteq S}\left(\mathbf{A}\right)\right\rfloor _{\mathbf{A}=\mathbf{0}_{\sqrt{n}\times\sqrt{n}}}\right)^{m},
\]
and
\[
\forall\,\mathbf{1}_{T}\in\left\{ 0,1\right\} ^{n\times1},\ F_{=S}\left(\mathbf{1}_{T}\right)=\left(\left.\left(\prod_{i\sqrt{n}+j\in T}\frac{\partial}{\partial a_{i,j}}\right)P_{=S}\left(\mathbf{A}\right)\right\rfloor _{\mathbf{A}=\mathbf{0}_{\sqrt{n}\times\sqrt{n}}}\right)^{m},
\]
where 
\[
P_{\subseteq S}\left(\mathbf{A}\right)\in\left\{ \sum_{R\subseteq S}\omega_{R}\prod_{i\sqrt{n}+j\in R}a_{i,j}\,:\,\begin{array}{c}
\left(\omega_{R}\right)^{m}=1\\
R\subseteq S
\end{array}\right\} ,
\]
\[
P_{\supseteq S}\left(\mathbf{A}\right)\in\left\{ \sum_{R\supseteq S}\omega_{R}\prod_{i\sqrt{n}+j\in R}a_{i,j}\,:\,\begin{array}{c}
\left(\omega_{R}\right)^{m}=1\\
R\supseteq S
\end{array}\right\} ,
\]
and 
\[
P_{=S}\left(\mathbf{A}\right)\in\left\{ \omega_{S}\prod_{i\sqrt{n}+j\in S}a_{i,j}\,:\,\left(\omega_{S}\right)^{m}=1\right\} .
\]

\begin{example}
For instance take $n=4$, and for simplicity take the exponent parameter
to be $m=1$. In that setting the edges of the complete graph allowing
for loop edges on $2$ vertices are identified with members of $\mathbb{Z}_{4}=\left\{ \text{lex}\left(0,0\right)=0,\,\text{lex}\left(0,1\right)=1,\,\text{lex}\left(1,0\right)=2,\,\text{lex}\left(1,1\right)=3\right\} $.
Further let the chosen subset of edges which make our chosen graph
be given by $S=\left\{ 0,1,2\right\} $ then
\[
\begin{array}{ccc}
P_{\subseteq S}\left(\mathbf{A}\right) & = & a_{00}a_{01}a_{10}+a_{00}a_{01}+a_{00}a_{10}+a_{01}a_{10}+a_{00}+a_{01}+a_{10}+1,\\
\\
P_{\supseteq S}\left(\mathbf{A}\right) & = & a_{00}a_{01}a_{10}a_{11}+a_{00}a_{01}a_{10},\\
\\
P_{=S}\left(\mathbf{A}\right) & = & a_{00}a_{01}a_{10}.
\end{array}
\]
In particular given $T=\left\{ 1,2\right\} $, 
\end{example}

\begin{figure}
\begin{tikzpicture}
	\node (0) at (-2,0) {0};
	\node (1) at ( 2,0) {1};

	\path [->] (0) edge  [bend left] node[above,sloped] {$\text{lex}(0,1)=1$} (1)
					  (1) edge [bend left] node[below,sloped] {$\text{lex}(1,0)=2$} (0);

\end{tikzpicture} \centering \caption{Graph $T$ and its edges.}
\end{figure}

\begin{example}
the corresponding indicator vector is
\[
\mathbf{1}_{T}=\left(\begin{array}{rrrr}
0 & 1 & 1 & 0\end{array}\right)^{\top}.
\]
We see that 
\[
F_{\subseteq S}\left(\mathbf{1}_{T}\right)=\left(\left.\left(\prod_{2i+j\in T}\frac{\partial}{\partial a_{i,j}}\right)P_{\subseteq S}\left(\mathbf{A}\right)\right\rfloor _{\mathbf{A}=\mathbf{0}_{2\times2}}\right)=1,
\]
\[
F_{\supseteq S}\left(\mathbf{1}_{T}\right)=\left(\left.\left(\prod_{2i+j\in T}\frac{\partial}{\partial a_{i,j}}\right)P_{\supseteq S}\left(\mathbf{A}\right)\right\rfloor _{\mathbf{A}=\mathbf{0}_{2\times2}}\right)=0,
\]
\[
F_{=S}\left(\mathbf{1}_{T}\right)=\left(\left.\left(\prod_{2i+j\in T}\frac{\partial}{\partial a_{i,j}}\right)P_{=S}\left(\mathbf{A}\right)\right\rfloor _{\mathbf{A}=\mathbf{0}_{2\times2}}\right)=0.
\]
\end{example}

For a fixed exponent parameter $m$, there are exactly $m^{\left(2^{\left|S\right|}\right)}$
distinct choices for $m$-th roots of unity which make up non-vanishing
coefficients of $P_{\subseteq S}\left(\mathbf{A}\right)$ and $m^{\left(2^{n-\left|S\right|}\right)}$
distinct choices for $m$-th roots of unity which make up non-vanishing
coefficients of $P_{\supseteq S}\left(\mathbf{A}\right)$. On the
one hand, such PDEs of $F_{\subseteq S}$ make up the ``haystack''.
On the other hand, the ``needles'' embedded in this haystack are
\emph{optimal} PDEs. A PDE of $F_{\subseteq S}$ is optimal if the
hypermatrix which underlies depth--3 $\sum\prod\sum$ arithmetic
formula used to specify the PDE is such that product of dimensions
$\rho\cdot d$ is the minimum possible. Let $\mathbf{B}\in\mathbb{C}^{\rho\times d\times\left(1+n\right)}$
underly the depth--3 $\sum\prod\sum$ arithmetic formula expressing
$P_{\subseteq S}$, such that $\rho$ is the smallest possible integer,
then recall that $\rho$ is the Chow-rank (over $\mathbb{C}$) of
the polynomial $P_{\subseteq S}$. For instance, recall that for a
multilinear polynomial of total degree two in the $n$ variables $a_{0,0},\cdots,a_{i,j},\cdots,a_{\sqrt{n},\sqrt{n}}$
given 
\[
\left(\begin{array}{c}
a_{0,0}\\
\vdots\\
a_{i,j}\\
\vdots\\
a_{\sqrt{n},\sqrt{n}}
\end{array}\right)^{\top}\left(\mathbf{M}\circ\left(\mathbf{1}_{n\times n}-\mathbf{I}_{n}\right)\right)\left(\begin{array}{c}
a_{0,0}\\
\vdots\\
a_{i,j}\\
\vdots\\
a_{\sqrt{n},\sqrt{n}}
\end{array}\right)\text{ where }\mathbf{M}\in\mathbb{C}^{n\times n},
\]
where $\circ$ denotes the entry-wise product
\[
\text{Chow-rank}\left\{ \left(\begin{array}{c}
a_{0,0}\\
\vdots\\
a_{i,j}\\
\vdots\\
a_{\sqrt{n},\sqrt{n}}
\end{array}\right)^{\top}\left(\mathbf{M}\circ\left(\mathbf{1}_{n\times n}-\mathbf{I}_{n}\right)\right)\left(\begin{array}{c}
a_{0,0}\\
\vdots\\
a_{i,j}\\
\vdots\\
a_{\sqrt{n},\sqrt{n}}
\end{array}\right)\right\} =\underset{\begin{array}{c}
\mathbf{H}\in\mathbb{C}^{n\times n}\\
\mathbf{H}^{\top}=-\mathbf{H}
\end{array}}{\text{Inf}}\text{tensor-rank}\left\{ \mathbf{H}+\mathbf{M}\circ\left(\mathbf{1}_{n\times n}-\mathbf{I}_{n}\right)\right\} .
\]
A similar definition extends to higher total degree multilinear polynomials.
Reading directly from the expanded forms on the left hand side of
equalities
\[
\left(\sum_{R\subseteq S}\omega_{R}\prod_{i\sqrt{n}+j\in R}a_{i,j}\right)=\sum_{0\le u<\rho}\,\prod_{0\le v<d}\left(\mathbf{B}\left[u,v,0\right]+\sum_{0\le i,j<\sqrt{n}}\mathbf{B}\left[u,v,1+i\sqrt{n}+j\right]a_{i,j}\right),
\]
\[
\left(\sum_{R\supseteq S}\omega_{R}\prod_{i\sqrt{n}+j\in R}a_{i,j}\right)=\sum_{0\le u<\rho^{\prime}}\,\prod_{0\le v<d^{\prime}}\left(\mathbf{B}^{\prime}\left[u,v,0\right]+\sum_{0\le i,j<\sqrt{n}}\mathbf{B}^{\prime}\left[u,v,1+i\sqrt{n}+j\right]a_{i,j}\right),
\]
yields respective Chow-rank and degree bounds $\rho\le2^{\left|S\right|}$,
$d\ge\left|S\right|$ and $\rho^{\prime}\le2^{n-\left|S\right|}$,
$d^{\prime}\ge n$. Multilinear polynomials
\begin{equation}
P_{\subseteq S}\left(\mathbf{A}\right)=\prod_{i\sqrt{n}+j\in S}\left(1+a_{i,j}\right)\text{ and }P_{\supseteq S}\left(\mathbf{A}\right)=\left(\prod_{i\sqrt{n}+j\in S}a_{i,j}\right)\prod_{i\sqrt{n}+j\in\overline{S}}\left(1+a_{i,j}\right),\label{Optimal Product Expression}
\end{equation}
yield optimal PDEs 
\[
F_{\subseteq S}\left(\mathbf{1}_{T}\right)=\left(\left.\left(\prod_{i\sqrt{n}+j\in T}\frac{\partial}{\partial a_{i,j}}\right)P_{\subseteq S}\left(\mathbf{A}\right)\right\rfloor _{\mathbf{A}=\mathbf{0}_{\sqrt{n}\times\sqrt{n}}}\right)^{m},
\]
and
\[
F_{\supseteq S}\left(\mathbf{1}_{T}\right)=\left(\left.\left(\prod_{i\sqrt{n}+j\in T}\frac{\partial}{\partial a_{i,j}}\right)P_{\supseteq S}\left(\mathbf{A}\right)\right\rfloor _{\mathbf{A}=\mathbf{0}_{\sqrt{n}\times\sqrt{n}}}\right)^{m},
\]
These PDEs are optimal in the sense that the both the total degree
and the Chow-rank of polynomials $P_{\subseteq S}$ and $P_{\supseteq S}$
used to specify PDEs for $F_{\subseteq S}$ and $F_{\supseteq S}$
are as small possible.\\

\begin{prop}
Optimal choices for $P_{\subseteq S}$ and $P_{\supseteq S}$ are
\[
P_{\subseteq S}\left(\mathbf{A}\right)\in\left\{ \omega_{S}\prod_{i\sqrt{n}+j\in S}\left(1+\omega_{i,j}\,a_{i,j}\right):\begin{array}{c}
\left(\omega_{i,j}\right)^{m}=1\\
\forall\,i\sqrt{n}+j\in S
\end{array}\right\} \subset\left\{ \sum_{R\subseteq S}\omega_{R}\prod_{i\sqrt{n}+j\in R}a_{i,j}:\begin{array}{c}
\left(\omega_{R}\right)^{m}=1\\
R\subseteq S
\end{array}\right\} ,
\]
and
\[
P_{\supseteq S}\left(\mathbf{A}\right)\in\left\{ \left(\omega_{S}\prod_{i\sqrt{n}+j\in S}a_{i,j}\right)\prod_{i\sqrt{n}+j\in\overline{S}}\left(1+\omega_{i,j}\,a_{i,j}\right):\begin{array}{c}
\left(\omega_{i,j}\right)^{m}=1\\
\forall\,i\sqrt{n}+j\in\overline{S}\\
\left(\omega_{S}\right)^{m}=1
\end{array}\right\} \subset\left\{ \sum_{R\supseteq S}\omega_{R}\prod_{i\sqrt{n}+j\in R}a_{i,j}:\begin{array}{c}
\left(\omega_{R}\right)^{m}=1\\
R\supseteq S
\end{array}\right\} .
\]
In which case the sparse and thin hypermatrices which underly the
respective depth--3 $\sum\prod\sum$ arithmetic formulas are of size
$1\times\left|S\right|\times\left(1+n\right)$ and $1\times n\times\left(1+n\right)$.
\end{prop}

\begin{proof}
Prime factors in the factorization of the integer count for the number
of non-vanishing monomial terms in the expanded form of $P_{\subseteq S}\left(\mathbf{x}\right)$
and $P_{\supseteq S}\left(\mathbf{x}\right)$ yield lower bounds for
the number of non-vanishing terms which make up each linear form.
There are $2^{\left|S\right|}$ non-vanishing monomial terms in the
expanded form of $P_{\subseteq S}\left(\mathbf{x}\right)$ and $2^{\left(n-\left|S\right|\right)}$
non-vanishing terms in the expanded form of $P_{\supseteq S}\left(\mathbf{x}\right)$.
The chosen expression for $P_{\subseteq S}\left(\mathbf{x}\right)$
and $P_{\supseteq S}\left(\mathbf{x}\right)$ have Chow rank one.
Consequently the Chow decomposition upper-bound matches the factorization
lower bound. Thus completing the proof.
\end{proof}
The fractions of optimal PDEs for $F_{\subseteq S}$ and $F_{\supseteq S}$
are respectively $m^{\left(\left|S\right|-2^{\left|S\right|}\right)}$
and $m^{\left(n-\left|S\right|\right)-2^{\left(n-\left|S\right|\right)}}$.
Optimal PDEs devised for Boolean functions $F_{\subseteq S}$ and
$F_{\supseteq S}$, epitomize their membership into the complexity
class \textbf{P/Poly.} Namely the class of Boolean functions which
admit efficient PDEs (i.e. PDEs whose underlying hypermatrices are
upper bounded in size by some polynomial in $n$). We conclude this
section by describing some PDEs as well as some PDE relaxations realizing
some important families of Boolean functions.
\begin{example}
Let
\[
F_{\text{func}}:\left\{ 0,1\right\} ^{\sqrt{n}\times\sqrt{n}}\rightarrow\left\{ 0,1\right\} ,
\]
The Boolean function $F_{\text{func}}$ takes as input the adjacency
matrix $\mathbf{M}\in\left\{ 0,1\right\} ^{\sqrt{n}\times\sqrt{n}}$
of a directed graph $G$ (allowing for loop edges) and tests whether
or not every vertex in the input graph has out-degree equal to one.
\[
F_{\text{funct}}\left(\mathbf{M}\right)=\begin{cases}
\begin{array}{cc}
1 & \text{ if }G\text{ is a functional directed graph}\\
\\
0 & \text{otherwise}
\end{array},\end{cases}
\]
where $\mathbf{M}\in\left\{ 0,1\right\} ^{\sqrt{n}\times\sqrt{n}}$
denotes the adjacency matrix of the input $\sqrt{n}$-vertex graph
$G$. The entries of $\mathbf{M}$ are such that 
\[
\mathbf{M}\left[u,v\right]=\begin{cases}
\begin{array}{cc}
1 & \text{ if }\left(u,v\right)\in E\left(G\right)\\
\\
0 & \text{otherwise}
\end{array}, & 0\le u,v<\sqrt{n}\end{cases}.
\]
PDEs of $F_{\text{Func}}$ with exponent parameter $m$ are of the
form
\[
F_{\text{funct}}\left(\mathbf{M}\right)=\left(\left.\left(\prod_{0\le i,j<\sqrt{n}}\left(\frac{\partial}{\partial a_{i,j}}\right)^{\mathbf{M}\left[i,j\right]}\right)P_{\text{funct}}\left(\mathbf{A}\right)\right\rfloor _{\mathbf{A}=\mathbf{0}_{\sqrt{n}\times\sqrt{n}}}\right)^{m},\text{ for all }\mathbf{A}_{G}\in\left\{ 0,1\right\} ^{\sqrt{n}\times\sqrt{n}},
\]
where 
\[
P_{\text{funct}}\left(\mathbf{A}\right)\in\left\{ \sum_{f\in\left(\mathbb{Z}_{\sqrt{n}}\right)^{\mathbb{Z}_{\sqrt{n}}}}\omega_{f}\,\prod_{i\in\mathbb{Z}_{n}}a_{i,f\left(i\right)}:\begin{array}{c}
\left(\omega_{f}\right)^{m}=1\\
f\in\left(\mathbb{Z}_{\sqrt{n}}\right)^{\mathbb{Z}_{\sqrt{n}}}
\end{array}\right\} .
\]
The integer factorization lower-bound argument used to prove Prop.
(3) can be applied to $P_{\text{funct}}$. The Boolean function $F_{\text{func}}$
also lies in the class \textbf{P/Poly} since $P_{\text{func}}$ can
be taken such that
\[
P_{\text{funct}}\left(\mathbf{A}\right)\in\left\{ \prod_{i\in\mathbb{Z}_{\sqrt{n}}}\sum_{j\in\mathbb{Z}_{\sqrt{n}}}\omega_{i,j}\,a_{i,j}:\begin{array}{c}
\left(\omega_{ij}\right)^{m}=1\\
0\le i,j<\sqrt{n}
\end{array}\right\} \subset\left\{ \sum_{f\in\mathbb{Z}_{n}^{\mathbb{Z}_{n}}}\omega_{f}\prod_{i\in\mathbb{Z}_{n}}\mathbf{A}\left[i,f\left(i\right)\right]:\begin{array}{c}
\left(\omega_{f}\right)^{m}=1\\
f\in\left(\mathbb{Z}_{\sqrt{n}}\right)^{\mathbb{Z}_{\sqrt{n}}}
\end{array}\right\} .
\]
When $\sqrt{n}$ is prime the count factorization lower-bound of $\sqrt{n}$
non-vanishing terms per linear functional matches the number of terms
in the irreducible factors of the Chow-rank one decomposition. Finally,
for a fixed exponent parameter $m$, the fraction of optimal PDEs
is $m^{\left(n-n^{\frac{\sqrt{n}}{2}}\right)}$. Over the transformation
monoid of functions whose domain and codomain are both $\mathbb{Z}_{\sqrt{n}}$
i.e. functions in $\left(\mathbb{Z}_{\sqrt{n}}\right)^{\mathbb{Z}_{\sqrt{n}}}$
we describe an additional families of Boolean functions 
\[
F_{\mathbb{Z}_{\sqrt{n}}^{\mathbb{Z}_{\sqrt{n}}}\circ h}:\left(\mathbb{Z}_{\sqrt{n}}\right)^{\mathbb{Z}_{\sqrt{n}}}\rightarrow\left\{ 0,1\right\} 
\]
defined such that 
\[
F_{\left(\mathbb{Z}_{\sqrt{n}}\right)^{\mathbb{Z}_{\sqrt{n}}}\circ h}\left(f\right)=\begin{cases}
\begin{array}{cc}
1 & \text{ if }\exists\,g\in\left(\mathbb{Z}_{\sqrt{n}}\right)^{\mathbb{Z}_{\sqrt{n}}}\text{s.t. }f=g\circ h\\
\\
0 & \text{otherwise}
\end{array}, & \forall\,f\in\left(\mathbb{Z}_{\sqrt{n}}\right)^{\mathbb{Z}_{\sqrt{n}}}.\end{cases}
\]
In other words, the Boolean function tests whether or not the input
function $f\in\left(\mathbb{Z}_{\sqrt{n}}\right)^{\mathbb{Z}_{\sqrt{n}}}$
lies in the $h$--right coset of the transformation monoid $\left(\mathbb{Z}_{\sqrt{n}}\right)^{\mathbb{Z}_{\sqrt{n}}}$.
The Boolean function $F_{\left(\mathbb{Z}_{\sqrt{n}}\right)^{\mathbb{Z}_{\sqrt{n}}}\circ h}$
admits a PDE relaxation with exponent parameter $m=1$ given by
\[
F_{\left(\mathbb{Z}_{\sqrt{n}}\right)^{\mathbb{Z}_{\sqrt{n}}}\circ h}\left(f\right)=\left(\left.\left(\prod_{i\in\mathbb{Z}_{\sqrt{n}}}\frac{\partial}{\partial a_{if\left(i\right)}}\right)P_{\left(\mathbb{Z}_{\sqrt{n}}\right)^{\mathbb{Z}_{\sqrt{n}}}\circ h}\left(\mathbf{A}\right)\right\rfloor _{\mathbf{A}=\mathbf{0}_{\sqrt{n}\times\sqrt{n}}}\right),
\]
expressed in terms of the multivariate polynomial
\[
P_{\left(\mathbb{Z}_{\sqrt{n}}\right)^{\mathbb{Z}_{\sqrt{n}}}\circ h}\left(\mathbf{A}\right)=\prod_{i\in\mathbb{Z}_{n}}\left(\sum_{j\in\mathbb{Z}_{n}}\prod_{k\in h^{\left(-1\right)}\left(\left\{ i\right\} \right)}a_{k,j}\right).
\]
\end{example}

\section{PDEs of cardinality variants of $F_{\subseteq S}$ and $F_{\supseteq S}$
and Group Orbitals.}

We discuss symmetric variants of Boolean functions $F_{\subseteq S}$,
$F_{\supseteq S}$ and $F_{=S}\in\left\{ 0,1\right\} ^{\left\{ 0,1\right\} ^{n}}$
defined such that 
\begin{equation}
F_{\le\left|S\right|}\left(\mathbf{1}_{T}\right)=\begin{cases}
\begin{array}{cc}
1 & \text{ if }\left|T\right|\le\left|S\right|\\
\\
0 & \text{otherwise}
\end{array}, & F_{\ge\left|S\right|}\left(\mathbf{1}_{T}\right)=\end{cases}\begin{cases}
\begin{array}{cc}
1 & \text{ if }\left|T\right|\ge\left|S\right|\\
\\
0 & \text{otherwise}
\end{array} & \text{ and }F_{=\left|S\right|}\left(\mathbf{1}_{T}\right)=\end{cases}\begin{cases}
\begin{array}{cc}
1 & \text{ if }\left|T\right|=\left|S\right|\\
\\
0 & \text{otherwise}
\end{array}.\end{cases}\label{Cardinality Variant Boolean function Specification}
\end{equation}
Prior to describing PDE constructions for the Boolean functions $F_{\le\left|S\right|}$,
$F_{\ge\left|S\right|}$ and $F_{=\left|S\right|}$ we start by defining
important notions used to devise various polynomial constructions.
\begin{defn}
Given a multivariate polynomial $P\in\mathbb{C}\left[y_{0},\cdots,y_{m-1}\right]$
the canonical representative of the congruence class 
\[
P\left(y_{0},\cdots,y_{m-1}\right)\mod\left(\prod_{i\in\mathbb{Z}_{n}}\left(y_{i}\right)^{f\left(i\right)}-\prod_{j\in\mathbb{Z}_{m}}\left(x_{j}\right)^{g\left(j\right)}\right)
\]
where $f\in\mathbb{Z}_{d}^{\mathbb{Z}_{n}}$ and $g\in\mathbb{Z}_{d^{\prime}}^{\mathbb{Z}_{m}}$
is the multivariate polynomial in $\mathbb{C}\left[x_{0},\cdots,x_{m-1},y_{0},\cdots,y_{n-1}\right]$
obtained by replacing in the expanded form of $P$ every occurrence
of the monomial $\underset{i\in\mathbb{Z}_{n}}{\prod}\left(y_{i}\right)^{f\left(i\right)}$
by the monomial $\underset{j\in\mathbb{Z}_{m}}{\prod}\left(x_{j}\right)^{g\left(j\right)}$.
\end{defn}

Note that the congruence relation relating $P$ to its canonical representative
is an immediate consequence of the binomial theorem. For instance,
given a polynomial interpolation of a Boolean function 
\[
F:\left\{ 0,1\right\} ^{n\times1}\rightarrow\left\{ 0,1\right\} 
\]
given by 
\[
F\left(y_{0},\cdots,y_{n-1}\right)=\sum_{\begin{array}{c}
\mathbf{b}\in\left\{ 0,1\right\} ^{n\times1}\\
\text{s.t. }F\left(\mathbf{b}\right)=1
\end{array}}\prod_{i\in\mathbb{Z}_{n}}\left(\frac{y_{i}-\left(1-b_{i}\right)}{2b_{i}-1}\right)
\]
The multilinear polynomial used to specify the corresponding PDE with
exponent parameter $m=1$ is the canonical representative of the congruence
class 
\[
F\left(y_{0},\cdots,y_{n-1}\right)\mod\left\{ \prod_{i\in\mathbb{Z}_{n}}\left(\frac{y_{i}-\left(1-b_{i}\right)}{2b_{i}-1}\right)-\prod_{i\in\mathbb{Z}_{n}}\left(x_{i}\right)^{b_{i}}:\begin{array}{c}
\mathbf{b}\in\left\{ 0,1\right\} ^{n\times1}\\
\text{s.t. }F\left(\mathbf{b}\right)=1
\end{array}\right\} .
\]

\begin{defn}
Let $\mathcal{G}$ denote an arbitrary subgroup of the symmetric group
S$_{n}$, let
\[
\text{lex}_{\mathcal{G}}:\mathcal{G}\to\mathbb{Z}_{\left|\mathcal{G}\right|}
\]
denote an arbitrary bijective/lexicographic map. Let $\mathbf{Z}$
denote a symbolic $n\times\left|\mathcal{G}\right|$ matrix. The $n\times1$
$\mathcal{G}$--orbital vector depicts orbits of the action of $\mathcal{G}$
on the vertex set of a complete directed graph allowing for loop edges
\[
\mathcal{O}_{\mathbf{Z},\mathcal{G}}\left[i\sqrt{n}+j\right]=\prod_{\sigma\in\mathcal{G}}\mathbf{Z}\left[\sigma\left(i\sqrt{n}+j\right),\text{lex}_{\mathcal{G}}\left(\sigma\right)\right],\ \forall\,\left(i,j\right)\in\mathbb{Z}_{\sqrt{n}}\times\mathbb{Z}_{\sqrt{n}}.
\]
\end{defn}

We illustrate an orbital construction used to devise PDEs for Boolean
functions $F_{\le\left|S\right|}$, $F_{\ge\left|S\right|}$ and $F_{=\left|S\right|}$
from PDEs for Boolean functions $F_{\subseteq S}$, $F_{\supseteq S}$
and $F_{=S}$. We take the group to be the whole symmetric group.
As a result, each entry of the orbital vector $\mathcal{O}_{\mathbf{Z}}$
depicts the action of the symmetric group S$_{n}\subset\mathbb{Z}_{n}^{\mathbb{Z}_{n}}$
on the corresponding edge. In particular, each entry of $\mathcal{O}_{\mathbf{Z}}$
is a monomial in the entries of a symbolic $n\times\left(n!\right)$
matrix $\mathbf{Z}$ such that
\begin{equation}
\mathcal{O}_{\mathbf{Z}}\left[i\sqrt{n}+j\right]=\prod_{\sigma\in\text{S}_{n}}\mathbf{Z}\left[\sigma\left(i\sqrt{n}+j\right),\text{lex}_{\text{S}_{n}}\left(\sigma\right)\right],\ \forall\,\left(i,j\right)\in\mathbb{Z}_{\sqrt{n}}\times\mathbb{Z}_{\sqrt{n}},\label{Orbital vector definition}
\end{equation}
where 
\[
\text{lex}_{\text{S}_{n}}\left(\sigma\right)=\sum_{k\in\mathbb{Z}_{n}}\left(n-1-k\right)!\,\left|\left\{ \sigma\left(i\right)>\sigma\left(k\right)\,:\,0\le i<k<n\right\} \right|,\text{ for all }\sigma\in\text{S}_{n}.
\]
For this choice lex$_{\text{S}_{n}}\left(\text{id}\right)=0$ and
lex$_{\text{S}_{n}}\left(n-1-\text{id}\right)=n!-1$. Similarly, let
\[
\text{lex}_{\powerset\left(\mathbb{Z}_{n}\right)}:\powerset\left(\mathbb{Z}_{n}\right)\to\left(\mathbb{Z}_{2^{n}}\right),
\]
map bijectively members of the power set $\powerset\left(\mathbb{Z}_{n}\right)$
to $\mathbb{Z}_{2^{n}}$ as follows 
\[
\text{lex}_{\powerset\left(\mathbb{Z}_{n}\right)}\left(R\right)=\sum_{j\in R}2^{j},\ \text{ for all }R\subseteq\mathbb{Z}_{n}.
\]
For simplicity we take the exponent parameter $m=1$. Let the canonical
representative of $P_{\subseteq S}\left(\mathcal{O}_{\mathbf{Z}}\right)$
modulo binomial relations 
\begin{equation}
P_{\subseteq S}\left(\mathcal{O}_{\mathbf{Z}}\right)\text{ mod }\left\{ \prod_{i\in R}\mathbf{Z}\left[i,\text{lex}_{\text{S}_{n}}\left(\sigma\right)\right]-\prod_{j\in R}\mathbf{Y}\left[j,\text{lex}_{\powerset\left(\mathbb{Z}_{n}\right)}\left(R\right)\right]:\begin{array}{c}
R\subseteq\mathbb{Z}_{n}\\
\left|R\right|\le\left|S\right|\\
\sigma\in\text{S}_{n}
\end{array}\right\} ,\label{Canonical}
\end{equation}
denote the unique polynomial depending upon entries of $\mathbf{Y}$
and crucially not depend upon any entry of $\mathbf{Z}$. Note that
the order with which we perform the reduction modulo prescribed binomial
relations matters. The canonical representative of the congruence
class in Eq. (\ref{Canonical}), is obtained by reducing $P_{\subseteq S}\left(\mathcal{O}_{\mathbf{Z}}\right)$
modulo relations taken in decreasing order of magnitude of the cardinality
of the set parameter $R$.
\begin{prop}
The canonical representative of the congruence class
\[
P_{\subseteq S}\left(\mathcal{O}_{\mathbf{Z}}\right)\text{ mod }\left\{ \prod_{i\in R}\mathbf{Z}\left[i,\text{lex}_{\text{S}_{n}}\left(\sigma\right)\right]-\prod_{j\in R}\mathbf{Y}\left[j,\text{lex}_{\powerset\left(\mathbb{Z}_{n}\right)}\left(R\right)\right]:\begin{array}{c}
R\subseteq\mathbb{Z}_{n}\\
\left|R\right|\le\left|S\right|\\
\sigma\in\text{S}_{n}
\end{array}\right\} ,
\]
is the orbit list generating polynomial
\[
\sum_{0\le t\le\left|S\right|}{\left|S\right| \choose t}\prod_{\begin{array}{c}
R\subseteq\mathbb{Z}_{n}\\
\left|R\right|=t
\end{array}}\left(\prod_{j\in R}\mathbf{Y}\left[j,\text{lex}_{\powerset\left(\mathbb{Z}_{n}\right)}\left(R\right)\right]\right)^{\left(n-\left|R\right|\right)!\,\left|R\right|!}.
\]
\end{prop}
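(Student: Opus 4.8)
The plan is to carry out the prescribed reduction explicitly on the expanded form of $P_{\subseteq S}\left(\mathcal{O}_{\mathbf{Z}}\right)$ and to check that the resulting $\mathbf{Z}$-free polynomial agrees, term by term, with the claimed orbit list generating polynomial. First I would expand $P_{\subseteq S}\left(\mathbf{x}\right)=\prod_{i\in S}\left(1+\mathbf{x}\left[i\right]\right)=\sum_{R\subseteq S}\prod_{j\in R}\mathbf{x}\left[j\right]$ and substitute the orbital vector, using $\mathcal{O}_{\mathbf{Z}}\left[j\right]=\prod_{\sigma\in\text{S}_{n^{k}}}\mathbf{Z}\left[\sigma\left(j\right),\text{lex}\left(\sigma\right)\right]$. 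Commuting the two finite products and reindexing $i=\sigma\left(j\right)$ rewrites each summand as a product over permutations of single-column factors, namely $\prod_{j\in R}\mathcal{O}_{\mathbf{Z}}\left[j\right]=\prod_{\sigma\in\text{S}_{n^{k}}}\prod_{i\in\sigma\left(R\right)}\mathbf{Z}\left[i,\text{lex}\left(\sigma\right)\right]$, where the image set $\sigma\left(R\right)$ has cardinality $\left|R\right|\le\left|S\right|$ because $\sigma$ is a bijection.

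Next I would apply the relations. Since $\left|\sigma\left(R\right)\right|=\left|R\right|\le\left|S\right|$, each column factor $\prod_{i\in\sigma\left(R\right)}\mathbf{Z}\left[i,\text{lex}\left(\sigma\right)\right]$ is exactly the $\mathbf{Z}$-monomial appearing in the prescribed relation indexed by the pair consisting of the set $\sigma\left(R\right)$ and the permutation $\sigma$, and it therefore reduces in a single step to $\prod_{j\in\sigma\left(R\right)}\mathbf{Y}\left[j,\text{lex}\left(\sigma\left(R\right)\right)\right]$, which is entirely $\mathbf{Z}$-free and depends on $\sigma$ only through the image set $\sigma\left(R\right)$. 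This is precisely where the stipulation that reductions be performed in decreasing order of $\left|R\right|$ is essential: applying the full-cardinality relation for the set $\sigma\left(R\right)$ first clears every $\mathbf{Z}$ variable of that column at once, whereas reducing a proper subset first would strand an irreducible residual $\mathbf{Z}$-factor and spoil the $\mathbf{Z}$-freeness of the normal form. I would record that this yields a genuinely $\mathbf{Z}$-free, hence canonical, representative.

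The key combinatorial step is then a multiplicity count. Fixing $R$ with $\left|R\right|=t$, as $\sigma$ ranges over $\text{S}_{n^{k}}$ the image $\sigma\left(R\right)$ ranges over all $t$-subsets of $\mathbb{Z}_{n^{k}}$, and by the orbit--stabilizer principle exactly $t!\cdot\left(n^{k}-t\right)!$ permutations carry $R$ onto any prescribed target $t$-subset $R'$, corresponding to a bijection $R\to R'$ times a bijection of the two complements. Collecting the identical $\mathbf{Y}$-monomials contributed by these permutations gives $\prod_{j\in R}\mathcal{O}_{\mathbf{Z}}\left[j\right]\equiv\prod_{\left|R'\right|=t}\left(\prod_{j\in R'}\mathbf{Y}\left[j,\text{lex}\left(R'\right)\right]\right)^{t!\cdot\left(n^{k}-t\right)!}$, an expression that depends on $R$ only through $t=\left|R\right|$. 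Summing over $R\subseteq S$ and noting that there are ${\left|S\right| \choose t}$ subsets of $S$ of each cardinality $t$ then collects the sum into the stated orbit list generating polynomial after renaming $R'$ to $R$.

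I expect the main obstacle to be the bookkeeping around well-definedness rather than the arithmetic: one must argue carefully that the decreasing-cardinality reduction order produces a genuinely $\mathbf{Z}$-free normal form and that this normal form does not depend on incidental choices, so that the canonical representative in the statement is unambiguous. Once that confluence point is settled, the orbit--stabilizer multiplicity count and the binomial grouping are routine.
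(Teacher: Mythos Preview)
Your proposal is correct and follows essentially the same approach as the paper: expand $P_{\subseteq S}\left(\mathcal{O}_{\mathbf{Z}}\right)$, apply the relations in decreasing cardinality order to replace each column factor $\prod_{i\in\sigma(R)}\mathbf{Z}\left[i,\text{lex}(\sigma)\right]$ by the corresponding $\mathbf{Y}$-monomial, and then collect terms. Your write-up is in fact more explicit than the paper's own proof, which simply asserts the substitution and the final form; in particular, your orbit--stabilizer count justifying the exponent $\left(n^{k}-t\right)!\cdot t!$ and the binomial grouping over $R\subseteq S$ make transparent the steps the paper leaves implicit.
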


\begin{proof}
The canonical representative is devised by substituting into the expanded
form of $P_{\subseteq S}\left(\mathcal{O}_{\mathbf{Z}}\right)$ each
monomial occurrence of the form
\[
\prod_{i\in R}\mathbf{Z}\left[i,\text{lex}_{\text{S}_{n}}\left(\sigma\right)\right],\quad\text{ for all }\:\begin{array}{c}
R\subseteq\mathbb{Z}_{n}\\
\sigma\in\text{S}_{n}
\end{array},
\]
with the corresponding monomial
\[
\prod_{j\in R}\mathbf{Y}\left[j,\text{lex}_{\powerset\left(\mathbb{Z}_{n}\right)}\left(R\right)\right].
\]
The canonical representative is thus given by 
\[
\begin{array}{cc}
 & \underset{\begin{array}{c}
R\subseteq\mathbb{Z}_{n}\\
\left|R\right|\le\left|S\right|
\end{array}}{\sum}\underset{\begin{array}{c}
T\subseteq\mathbb{Z}_{n}\\
\left|T\right|=\left|R\right|
\end{array}}{\prod}\left(\underset{j\in T}{\prod}\mathbf{Y}\left[j,\text{lex}_{\powerset\left(\mathbb{Z}_{n}\right)}\left(T\right)\right]\right)^{\left(n-\left|T\right|\right)!\,\left|T\right|!}\\
\\
= & \underset{0\le t\le\left|S\right|}{\sum}{\left|S\right| \choose t}\underset{\begin{array}{c}
R\subseteq\mathbb{Z}_{n}\\
\left|R\right|=t
\end{array}}{\prod}\left(\underset{j\in R}{\prod}\mathbf{Y}\left[j,\text{lex}_{\powerset\left(\mathbb{Z}_{n}\right)}\left(R\right)\right]\right)^{\left(n-\left|R\right|\right)!\,\left|R\right|!}.
\end{array}
\]
Similarly, the canonical representative of the congruence
\[
P_{\supseteq S}\left(\mathcal{O}_{\mathbf{Z}}\right)\text{ mod }\left\{ \prod_{i\in R}\mathbf{Z}\left[i,\text{lex}_{\text{S}_{n}}\left(\sigma\right)\right]-\prod_{j\in R}\mathbf{Y}\left[j,\text{lex}_{\powerset\left(\mathbb{Z}_{n}\right)}\left(R\right)\right]:\begin{array}{c}
R\subseteq\mathbb{Z}_{n},\\
\left|R\right|\ge\left|S\right|,\\
\sigma\in\text{S}_{n}
\end{array}\right\} ,
\]
is the polynomial
\[
\begin{array}{cc}
 & \underset{\begin{array}{c}
R\subseteq\mathbb{Z}_{n}\\
\left|R\right|\ge\left|S\right|
\end{array}}{\sum}\underset{\begin{array}{c}
T\subseteq\mathbb{Z}_{n}\\
\left|T\right|=\left|R\right|
\end{array}}{\prod}\left(\underset{j\in R}{\prod}\mathbf{Y}\left[j,\text{lex}_{\powerset\left(\mathbb{Z}_{n}\right)}\left(T\right)\right]\right)^{\left(n-\left|T\right|\right)!\,\left|T\right|!},\\
\\
= & \underset{\left|S\right|\le t\le n}{\sum}{n-\left|S\right| \choose t-\left|S\right|}\underset{\begin{array}{c}
R\subseteq\mathbb{Z}_{n}\\
\left|R\right|=t
\end{array}}{\prod}\left(\underset{j\in R}{\prod}\mathbf{Y}\left[j,\text{lex}_{\powerset\left(\mathbb{Z}_{n}\right)}\left(R\right)\right]\right)^{\left(n-\left|R\right|\right)!\,\left|R\right|!}.
\end{array}
\]
and the canonical representative of the congruence class
\[
P_{=S}\left(\mathcal{O}_{\mathbf{Z}}\right)\text{ mod }\left\{ \prod_{i\in R}\mathbf{Z}\left[i,\text{lex}_{\text{S}_{n}}\left(\sigma\right)\right]-\prod_{j\in R}\mathbf{Y}\left[j,\text{lex}_{\powerset\left(\mathbb{Z}_{n}\right)}\left(R\right)\right]:\begin{array}{c}
R\subseteq\mathbb{Z}_{n},\\
\left|R\right|\ge\left|S\right|,\\
\sigma\in\text{S}_{n}
\end{array}\right\} ,
\]
is the polynomial
\[
\prod_{\begin{array}{c}
T\subseteq\mathbb{Z}_{n}\\
\left|T\right|=\left|S\right|
\end{array}}\left(\prod_{j\in R}\mathbf{Y}\left[j,\text{lex}_{\powerset\left(\mathbb{Z}_{n}\right)}\left(T\right)\right]\right)^{\left(n-\left|S\right|\right)!\,\left|S\right|!},
\]
Let $P_{\le\left|S\right|}$ and $P_{\ge\left|S\right|}$ 
\begin{equation}
P_{\le\left|S\right|}=\left(P_{\subseteq S}\left(\mathcal{O}_{\mathbf{Z}}\right)\text{mod}\left\{ \prod_{i\in R}\mathbf{Z}\left[i,\text{lex}_{\text{S}_{n}}\left(\sigma\right)\right]-\left(\begin{array}{cc}
1 & \frac{\underset{i\sqrt{n}+j\in R}{\prod}a_{i,j}}{\left(n-\left|R\right|\right)!\,\left|R\right|!\,{\left|S\right| \choose \left|R\right|}}\\
0 & 1
\end{array}\right):\begin{array}{c}
R\subseteq\mathbb{Z}_{n},\\
\left|R\right|\le\left|S\right|,\\
\sigma\in\text{S}_{n}
\end{array}\right\} \right)\left[0,1\right],\label{ImplicitA}
\end{equation}
and 
\begin{equation}
P_{\ge\left|S\right|}=\left(P_{\supseteq S}\left(\mathcal{O}_{\mathbf{Z}}\right)\text{mod}\left\{ \prod_{i\in R}\mathbf{Z}\left[i,\text{lex}_{\text{S}_{n}}\left(\sigma\right)\right]-\left(\begin{array}{cc}
1 & \frac{\underset{i\sqrt{n}+j\in R}{\prod}a_{i,j}}{\left(n-\left|R\right|\right)!\,\left|R\right|!\,{n-\left|S\right| \choose \left|R\right|-\left|S\right|}}\\
0 & 1
\end{array}\right):\begin{array}{c}
R\subseteq\mathbb{Z}_{n},\\
\left|R\right|\ge\left|S\right|,\\
\sigma\in\text{S}_{n}
\end{array}\right\} \right)\left[0,1\right].\label{ImplicitB}
\end{equation}
The respective representative of the congruence classes are polynomials
in the class which depend only upon entries of $\mathbf{A}$ and do
not depend upon entries of $\mathbf{Z}$.
\end{proof}
\begin{prop}
Polynomials $P_{\le\left|S\right|}$ and $P_{\ge\left|S\right|}$
are used to specify PDEs
\begin{equation}
F_{\le\left|S\right|}\left(\mathbf{1}_{T}\right)=\left(\left.\left(\underset{i\sqrt{n}+j\in T}{\prod}\frac{\partial}{\partial a_{ij}}\right)P_{\le\left|S\right|}\left(\mathbf{A}\right)\right\rfloor _{\mathbf{A}=\mathbf{0}_{\sqrt{n}\times\sqrt{n}}}\right)^{m}\text{and }F_{\ge\left|S\right|}\left(\mathbf{1}_{T}\right)=\left(\left.\left(\underset{i\sqrt{n}+j\in T}{\prod}\frac{\partial}{\partial a_{ij}}\right)P_{\ge\left|S\right|}\left(\mathbf{A}\right)\right\rfloor _{\mathbf{A}=\mathbf{0}_{\sqrt{n}\times\sqrt{n}}}\right)^{m}.\label{Implicit PDE}
\end{equation}
\end{prop}

\begin{proof}
Similarly to the argument used to prove Prop. (3), the canonical representative
for the first of these congruence classes is obtained by successively
replacing into the expanded form of $P_{\subseteq S}\left(\mathcal{O}_{\mathbf{Z}}\right)$
every occurrence of monomials of the form
\[
\prod_{i\in R}\mathbf{Z}\left[i,\text{lex}_{\text{S}_{n}}\left(\sigma\right)\right],\;\forall\,\begin{array}{c}
R\subseteq\mathbb{Z}_{n}\\
\sigma\in\text{S}_{n}
\end{array},
\]
with the corresponding upper triangular $2\times2$ matrix 
\[
\left(\begin{array}{cc}
1 & \frac{\underset{i\sqrt{n}+j\in R}{\prod}a_{i,j}}{\left(n-\left|R\right|\right)!\,\left|R\right|!\,{\left|S\right| \choose \left|R\right|}}\\
0 & 1
\end{array}\right)
\]
followed taking the $\left[0,1\right]$ entry of the $2\times2$ matrix
resulting from the said substitutions. Similarly, the canonical representatives
for the second of the two congruence classes is obtained by successively
replacing into the expanded form of $P_{\supseteq S}\left(\mathcal{O}_{\mathbf{Z}}\right)$
every occurrence of monomials in the entries of $\mathbf{Z}$ given
by
\[
\prod_{i\in R}\mathbf{Z}\left[i,\text{lex}_{\text{S}_{n}}\left(\sigma\right)\right],\;\forall\,\begin{array}{c}
R\subseteq\mathbb{Z}_{n^{k}}\\
\sigma\in\text{S}_{n^{k}}
\end{array},
\]
with the corresponding upper triangular $2\times2$ matrix 
\[
\left(\begin{array}{cc}
1 & \frac{\underset{i\sqrt{n}+j\in R}{\prod}a_{i,j}}{\left(n-\left|R\right|\right)!\,\left|R\right|!\,{\left|S\right| \choose \left|R\right|}}\\
0 & 1
\end{array}\right)
\]
followed by taking the $\left[0,1\right]$ entry of the $2\times2$
matrix resulting from the said substitutions. 
\end{proof}
When $\left|S\right|$ is not a fixed constant independent of $n$
say $\left|S\right|=O\left(\sqrt{n}\right)$, then the PDE construction
above do not certify membership of $F_{\le\left|S\right|}$ and $F_{\ge\left|S\right|}$
into the complexity class \textbf{P/Poly}. In the setting where $\left|S\right|$
depends on $n$, Ben Or \cite{WN96} devises an optimal Chow decompositions
for $P_{\le\left|S\right|}\left(\mathbf{x}\right)$ and $P_{\ge\left|S\right|}\left(\mathbf{x}\right)$
via Cramer's rule as follows
\[
P_{\le\left|S\right|}\left(\mathbf{A}\right)=\sum_{0\le i<\left|S\right|}\frac{\det\mathbf{V}_{i}}{\underset{0\le u<v<n}{\prod}\left(\exp\left\{ \frac{2\pi\,v\,\sqrt{-1}}{n}\right\} -\exp\left\{ \frac{2\pi\,u\,\sqrt{-1}}{n}\right\} \right)},
\]
and 
\[
P_{\ge\left|S\right|}\left(\mathbf{A}\right)=\sum_{\left|S\right|\le i\le n}\frac{\det\mathbf{V}_{i}}{\underset{0\le u<v<n}{\prod}\left(\exp\left\{ \frac{2\pi\,v\,\sqrt{-1}}{n}\right\} -\exp\left\{ \frac{2\pi\,u\,\sqrt{-1}}{n}\right\} \right)},
\]
The $n\times n$ matrix $\mathbf{V}_{i}$ has entries given by 
\[
\mathbf{V}_{k}\left[u,v\right]=\begin{cases}
\begin{array}{ccc}
\underset{i\sqrt{n}+j\in\mathbb{Z}_{n}}{\prod}\left(1+\exp\left\{ \frac{2\pi\,u\,v\,\sqrt{-1}}{n}\right\} \,a_{i,j}\right) &  & \text{if }k=v\\
\exp\left\{ \frac{2\pi\,u\,v\,\sqrt{-1}}{n}\right\}  &  & \text{otherwise}
\end{array}.\end{cases}
\]
Such expansions describe depth--3 $\sum\prod\sum$ arithmetic formula
whose underlying hypermatrix is of size $n\times n\times\left(n+1\right)$.
\begin{example}
Let us illustrate the orbital construction in the case $n=4$, and
$S=\left\{ 0,1,3\right\} $. It follows from the setup that
\[
P_{\subseteq S}\left(\mathbf{A}\right)=\left(1+a_{00}\right)\left(1+a_{01}\right)\left(1+a_{11}\right).
\]
\[
\Rightarrow P_{\subseteq S}\left(\mathcal{O}_{\mathbf{Z}}\right)=\left(1+\prod_{\sigma\in\text{S}_{4}}Z\left[\sigma\left(2\cdot0+0\right),\text{lex}\left(\sigma\right)\right]\right)\left(1+\prod_{\sigma\in\text{S}_{4}}Z\left[\sigma\left(2\cdot0+1\right),\text{lex}\left(\sigma\right)\right]\right)\left(1+\prod_{\sigma\in\text{S}_{4}}Z\left[\sigma\left(2\cdot1+1\right),\text{lex}\left(\sigma\right)\right]\right).
\]
Hence 
\[
\left(P_{\subseteq S}\left(\mathcal{O}_{\mathbf{Z}}\right)\text{ mod }\left\{ \prod_{i\in R}\mathbf{Z}\left[i,\text{lex}_{\text{S}_{n}}\left(\sigma\right)\right]-\prod_{j\in R}\mathbf{Y}\left[j,\text{lex}_{\powerset\left(\mathbb{Z}_{n}\right)}\left(R\right)\right]:\begin{array}{c}
R\subseteq\mathbb{Z}_{n}\\
\left|R\right|\le\left|S\right|\\
\sigma\in\text{S}_{n}
\end{array}\right\} \right)\equiv
\]
\[
{3 \choose 0}+{3 \choose 1}\underset{\begin{array}{c}
R\subseteq\mathbb{Z}_{4}\\
\left|R\right|=1
\end{array}}{\prod}\left(\underset{j\in R}{\prod}\mathbf{Y}\left[j,\text{lex}_{\powerset\left(\mathbb{Z}_{4}\right)}\left(R\right)\right]\right)^{\left(4-\left|R\right|\right)!\,\left|R\right|!}+
\]
\[
{3 \choose 2}\underset{\begin{array}{c}
R\subseteq\mathbb{Z}_{4}\\
\left|R\right|=1
\end{array}}{\prod}\left(\underset{j\in R}{\prod}\mathbf{Y}\left[j,\text{lex}_{\powerset\left(\mathbb{Z}_{4}\right)}\left(R\right)\right]\right)^{\left(4-\left|R\right|\right)!\,\left|R\right|!}+{3 \choose 3}\underset{\begin{array}{c}
R\subseteq\mathbb{Z}_{4}\\
\left|R\right|=3
\end{array}}{\prod}\left(\underset{j\in R}{\prod}\mathbf{Y}\left[j,\text{lex}_{\powerset\left(\mathbb{Z}_{4}\right)}\left(R\right)\right]\right)^{\left(4-\left|R\right|\right)!\,\left|R\right|!}.
\]
Finally 
\[
\left(P_{\subseteq S}\left(\mathcal{O}_{\mathbf{Z}}\right)\text{ mod }\left\{ \prod_{i\in R}\mathbf{Z}\left[i,\text{lex}_{\text{S}_{n}}\left(\sigma\right)\right]-\left(\begin{array}{cc}
1 & \frac{\underset{i\sqrt{n}+j\in R}{\prod}a_{i,j}}{\left(n-\left|R\right|\right)!\,\left|R\right|!\,{\left|S\right| \choose \left|R\right|}}\\
0 & 1
\end{array}\right):\begin{array}{c}
R\subseteq\mathbb{Z}_{n}\\
\left|R\right|\le\left|S\right|\\
\sigma\in\text{S}_{n}
\end{array}\right\} \right)\equiv
\]
\[
\left(\begin{array}{ccc}
1 &  & 1+\underset{0\le\text{lex}\left(i_{0},j_{0}\right)<4}{\sum}a_{i_{0}j_{0}}+\underset{0\le\text{lex}\left(i_{0},j_{0}\right)<\text{lex}\left(i_{1},j_{1}\right)<4}{\sum}a_{i_{0}j_{0}}a_{i_{1}j_{1}}+\underset{0\le\text{lex}\left(i_{0},j_{0}\right)<\text{lex}\left(i_{1},j_{1}\right)<\text{lex}\left(i_{2},j_{2}\right)<4}{\sum}a_{i_{0}j_{0}}a_{i_{1}j_{1}}a_{i_{2}j_{2}}\\
\\
0 &  & 1
\end{array}\right)
\]
\end{example}

\section{Partial Differential Programs.}

We introduce here a variant of PDEs called \emph{Partial Differential
Programs} ( or PDPs for short). A PDP differs from a PDE in the fact
that the multilinear polynomial used to specify a PDE is implicitly
specified up to a polynomial size set of algebraic relations presented
in their expanded form. In fact the interpolation construction described
in Eq. (\ref{first_PDP}), illustrates such an implicit description.
PDPs are specified via smaller $\sum\prod\sum$ arithmetic formulas
compared to their PDE counterparts. PDPs also broaden the scope of
our proposed model of computation. This broadening hinges upon the
fact that in PDPs, polynomials used to specify PDEs are implicitly
prescribed by supplying a member of their congruence class. We refer
to such implicit descriptions of polynomials as \emph{programs}. For
a concrete example, consider a PDE for the Boolean function specified
by the truth table :
\begin{center}
\begin{tabular}{ccc}
\toprule 
\multirow{1}{*}{$x_{0}$} & $x_{1}$ & $F\left(\mathbf{x}\right)$\tabularnewline
\midrule
$0$ & $0$ & $1$\tabularnewline
\midrule
$0$ & $1$ & \multicolumn{1}{c}{$0$}\tabularnewline
\midrule
$1$ & $0$ & $1$\tabularnewline
\midrule
$1$ & $1$ & $1$\tabularnewline
\bottomrule
\end{tabular}
\par\end{center}

\[
\implies F\left(\mathbf{1}_{T}\right)=\left(\left.\left(\frac{\partial}{\partial x_{0}}\right)^{\mathbf{1}_{T}\left[0\right]}\left(\frac{\partial}{\partial x_{1}}\right)^{\mathbf{1}_{T}\left[1\right]}P_{F}\left(\mathbf{x}\right)\right\rfloor _{\mathbf{x}=\mathbf{0}_{2\times1}}\right).
\]
The multilinear polynomial $P_{F}\left(\mathbf{x}\right)$ used to
specify a PDE for $F$ with exponent parameter equal to one is given
by
\[
P_{F}\left(\mathbf{x}\right)=1+x_{0}+x_{0}x_{1}.
\]
Trivially, a $3\times2\times3$ hypermatrix underlies the depth--3
$\sum\prod\sum$ arithmetic formula which expresses the expanded form
of $P_{F}\left(\mathbf{x}\right)$. However, hypermatrices which underly
optimal depth--3 $\sum\prod\sum$ arithmetic formula for $P_{F}\left(\mathbf{x}\right)$
are of size $2\times2\times3$ as seen from the equality
\[
\begin{array}{cccc}
P_{F}\left(\mathbf{x}\right) & = & \left(\mathbf{B}\left[0,0,0\right]+\mathbf{B}\left[0,0,1\right]x_{0}+\mathbf{B}\left[0,0,2\right]x_{1}\right)\times\\
 &  & \left(\mathbf{B}\left[0,1,0\right]+\mathbf{B}\left[0,1,1\right]x_{0}+\mathbf{B}\left[0,1,2\right]x_{1}\right)\\
 &  & +\\
 &  & \left(\mathbf{B}\left[1,0,0\right]+\mathbf{B}\left[1,0,1\right]x_{0}+\mathbf{B}\left[1,0,2\right]x_{1}\right)\times\\
 &  & \left(\mathbf{B}\left[1,1,0\right]+\mathbf{B}\left[1,1,1\right]x_{0}+\mathbf{B}\left[1,1,2\right]x_{1}\right)
\end{array}
\]
where for instance we take non zero entries of $\mathbf{B}\in\left\{ 0,1\right\} ^{2\times2\times3}$
to be
\[
\mathbf{B}\left[0,0,0\right]=\mathbf{B}\left[0,1,0\right]=\mathbf{B}\left[1,0,1\right]=\mathbf{B}\left[1,1,0\right]=\mathbf{B}\left[1,1,2\right]=1.
\]
Hence taking
\[
P_{F}\left(\mathbf{x}\right)\in\left\{ \mu+u_{0}\,x_{0}\left(1+u_{1}x_{1}\right)\,:\,\mu^{m}=\left(u_{0}\right)^{m}=\left(u_{1}\right)^{m}=1\right\} .
\]
yields optimal PDEs for $F$ with exponent parameter $m$ of the form
\[
F\left(\mathbf{1}_{T}\right)=\left(\left.\left(\frac{\partial}{\partial x_{0}}\right)^{\mathbf{1}_{T}\left[0\right]}\left(\frac{\partial}{\partial x_{1}}\right)^{\mathbf{1}_{T}\left[1\right]}\mu+u_{0}x_{0}\left(1+u_{1}x_{1}\right)\right\rfloor _{\mathbf{x}=\mathbf{0}_{2\times1}}\right)^{m},
\]
Alternatively, we prescribe $P_{F}\left(\mathbf{x}\right)$ up to
congruence modulo the Boolean relations
\[
\left(x_{0}\right)^{2}\equiv x_{0}\:\text{ and }\:\left(x_{1}\right)^{2}\equiv x_{1}.
\]
In which case a PDP for $F$ is specified by a polynomial $Q_{F}\left(\mathbf{x}\right)$
member of the congruence class of $P_{F}\left(\mathbf{x}\right)$.
We write 
\[
F\left(\mathbf{1}_{T}\right)=\left(\left.\left(\frac{\partial}{\partial x_{0}}\right)^{\mathbf{1}_{T}\left[0\right]}\left(\frac{\partial}{\partial x_{1}}\right)^{\mathbf{1}_{T}\left[1\right]}Q_{F}\left(\mathbf{x}\right)\text{ mod}\left\{ \begin{array}{c}
\left(x_{0}\right)^{2}-x_{0}\\
\left(x_{1}\right)^{2}-x_{1}
\end{array}\right\} \right\rfloor _{\mathbf{x}=\mathbf{0}_{2\times1}}\right).
\]
Choices for a hypermatrix $\mathbf{B}^{\prime}$ which underlies optimal
depth--3 $\sum\prod\sum$ arithmetic formula for $Q_{F}\left(\mathbf{x}\right)$
are of size $1\times2\times3$ as seen from the expression
\[
Q_{F}\left(\mathbf{x}\right)=\left(\mathbf{B}^{\prime}\left[0,0,0\right]+\mathbf{B}^{\prime}\left[0,0,1\right]x_{0}+\mathbf{B}^{\prime}\left[0,0,2\right]x_{1}\right)\left(\mathbf{B}^{\prime}\left[0,1,0\right]+\mathbf{B}^{\prime}\left[0,1,1\right]x_{0}+\mathbf{B}^{\prime}\left[0,1,2\right]x_{1}\right),
\]
where for instance an optimal PDP for $F$ is completely determined
by the taking $\mathbf{B}^{\prime}$ such that
\[
\mathbf{B}^{\prime}\left[0,0,0\right]=\mathbf{B}^{\prime}\left[0,0,1\right]=1,\mathbf{B}^{\prime}\left[0,1,0\right]=\mathbf{B}^{\prime}\left[0,1,2\right]=1,\mathbf{B}^{\prime}\left[0,0,2\right]=-\frac{1}{2}.
\]
Our example illustrates an instance in where PDPs are smaller than
optimal PDEs for the same Boolean function. Reduction in size is achieved
at the expense of introducing some non-trivial algebraic relations.
By definition, PDEs form a proper subset of PDPs. There are finitely
many PDEs for any given Boolean function (assuming a fixed exponent
parameter $m$). By contrast there are infinitely many PDPs for a
given Boolean functions (assuming a fixed exponent parameter $m$).

\subsection{Orbital Chow-rank bound.}

We describe here the simplest illustration of a general method for
devising Chow rank bounds from group actions. As a concrete illustration
for the orbital bound argument, we derive bounds on the size of a
hypermatrix $\mathbf{H}$ which underlies an optimal Chow-decompositions
over $\mathbb{C}$ for $Q_{\le\left|S\right|}$, $Q_{\ge\left|S\right|}$
and $Q_{=\left|S\right|}\left(\mathbf{x}\right)$ used to specify
PDPs for Boolean functions
\[
F_{\le\left|S\right|}\left(\mathbf{1}_{T}\right)=\begin{cases}
\begin{array}{cc}
1 & \text{ if }\left|T\right|\le\left|S\right|\\
\\
0 & \text{otherwise}
\end{array} & ,\quad\end{cases}F_{\ge\left|S\right|}\left(\mathbf{1}_{T}\right)=\begin{cases}
\begin{array}{cc}
1 & \text{ if }\left|T\right|\ge\left|S\right|\\
\\
0 & \text{otherwise}
\end{array},\end{cases}
\]
\[
F_{=\left|S\right|}\left(\mathbf{1}_{T}\right)=\begin{cases}
\begin{array}{cc}
1 & \text{ if }\left|T\right|=\left|S\right|\\
\\
0 & \text{otherwise}
\end{array}.\end{cases}
\]
PDPs for such Boolean $F_{\le\left|S\right|}$ and $F_{\ge\left|S\right|}$
are respectively of the form 
\[
F_{\le\left|S\right|}\left(\mathbf{1}_{T}\right)=\left(\left.\left(\underset{i\in T}{\prod}\frac{\partial}{\partial x_{i}}\right)Q_{\le\left|S\right|}\left(x_{0},\cdots,x_{n-1}\right)\text{mod}\left\{ \begin{array}{c}
\left(x_{i}\right)^{2}-x_{i}\\
i\in\mathbb{Z}_{n}
\end{array}\right\} \right\rfloor _{\mathbf{x}=\mathbf{0}_{n\times1}}\right)^{m},
\]
\[
F_{\ge\left|S\right|}\left(\mathbf{1}_{T}\right)=\left(\left.\left(\underset{i\in T}{\prod}\frac{\partial}{\partial x_{i}}\right)Q_{\ge\left|S\right|}\left(x_{0},\cdots,x_{n-1}\right)\text{mod}\left\{ \begin{array}{c}
\left(x_{i}\right)^{2}-x_{i}\\
i\in\mathbb{Z}_{n}
\end{array}\right\} \right\rfloor _{\mathbf{x}=\mathbf{0}_{n\times1}}\right)^{m}.
\]

\begin{thm}
\label{Orbital-bound}A hypermatrix $\mathbf{H}\in\mathbb{C}^{\rho\times d\times\left(1+n\right)}$
can be chosen to underly a depth--3 $\sum\prod\sum$ arithmetic formula
for $Q_{\le\left|S\right|}$, $Q_{\ge\left|S\right|}$ and $Q_{=\left|S\right|}$
respectively used to specify PDPs for Boolean functions $F_{\le\left|S\right|}$,
$F_{\ge\left|S\right|}$ and $F_{=\left|S\right|}$ such that $\rho=1$.
\end{thm}

\begin{proof}
For each Boolean function $F_{\le\left|S\right|}$, $F_{\ge\left|S\right|}$
and $F_{=\left|S\right|}$, $\mathbf{H}$ is determined by congruence
identities of the form 
\[
\sum_{\left|R\right|\le\left|S\right|}\prod_{i\in R}x_{i}\equiv\sum_{0\le u<\rho}\prod_{0\le v<d}\left(\mathbf{H}\left[u,v,0\right]+\sum_{w\in\mathbb{Z}_{n}}\mathbf{H}\left[u,v,1+w\right]x_{w}\right)\text{mod}\left\{ \begin{array}{c}
\left(x_{i}\right)^{2}-x_{i}\\
i\in\mathbb{Z}_{n}
\end{array}\right\} ,
\]
\[
\sum_{\left|R\right|\ge\left|S\right|}\prod_{i\in R}x_{i}\equiv\sum_{0\le u<\rho}\prod_{0\le v<d}\left(\mathbf{H}\left[u,v,0\right]+\sum_{w\in\mathbb{Z}_{n}}\mathbf{H}\left[u,v,1+w\right]x_{w}\right)\text{mod}\left\{ \begin{array}{c}
\left(x_{i}\right)^{2}-x_{i}\\
i\in\mathbb{Z}_{n}
\end{array}\right\} ,
\]
\[
\sum_{\left|R\right|=\left|S\right|}\prod_{i\in R}x_{i}\equiv\sum_{0\le u<\rho}\prod_{0\le v<d}\left(\mathbf{H}\left[u,v,0\right]+\sum_{w\in\mathbb{Z}_{n}}\mathbf{H}\left[u,v,1+w\right]x_{w}\right)\text{mod}\left\{ \begin{array}{c}
\left(x_{i}\right)^{2}-x_{i}\\
i\in\mathbb{Z}_{n}
\end{array}\right\} .
\]
Expanding the right hand side yields
\[
\sum_{\left|R\right|\le\left|S\right|}\prod_{i\in R}x_{i}=\sum_{R\subseteq\mathbb{Z}_{n}}K_{R}\left(\mathbf{H}\right)\,\prod_{i\in R}x_{i},\ \sum_{\left|R\right|\ge\left|S\right|}\prod_{i\in R}x_{i}=\sum_{R\subseteq\mathbb{Z}_{n}}K_{R}\left(\mathbf{H}\right)\,\prod_{i\in R}x_{i},
\]
\[
\sum_{\left|R\right|=\left|S\right|}\prod_{i\in R}x_{i}=\sum_{R\subseteq\mathbb{Z}_{n}}K_{R}\left(\mathbf{H}\right)\,\prod_{i\in R}x_{i}.
\]
For each one of the congruence identities we have that for all $\forall\,R\subseteq\mathbb{Z}_{n}$,
the polynomial in the entries of the unknown matrix $\mathbf{H}$
given by $K_{R}\left(\mathbf{H}\right)$ is given by
\[
K_{R}\left(\mathbf{H}\right)=\left(\left.\sum_{\left\{ 1\le d_{i}<\rho\,:i\in R\right\} }\prod_{i\in R}\left(\frac{\partial}{\sqrt[d_{i}]{d_{i}!}\,\partial x_{i}}\right)^{d_{i}}\sum_{0\le u<\rho}\prod_{0\le v<d}\left(\mathbf{H}\left[u,v,0\right]+\sum_{w\in\mathbb{Z}_{n}}\mathbf{H}\left[u,v,1+w\right]x_{w}\right)\right\rfloor _{\mathbf{x}=\mathbf{0}_{n\times1}}\right),
\]
substituting each entry of $\mathbf{x}$ for the corresponding entry
of the orbital vector $\mathcal{O}_{\mathbf{Z}}$ yields constraints
\[
\sum_{\left|R\right|\le\left|S\right|}\prod_{i\in R}\mathcal{O}_{\mathbf{Z}}\left[i\right]=\sum_{R\subseteq\mathbb{Z}_{n}}K_{R}\left(\mathbf{H}\right)\,\prod_{i\in R}\mathcal{O}_{\mathbf{Z}}\left[i\right],
\]
\[
\sum_{\left|R\right|\ge\left|S\right|}\prod_{i\in R}\mathcal{O}_{\mathbf{Z}}\left[i\right]=\sum_{R\subseteq\mathbb{Z}_{n}}K_{R}\left(\mathbf{H}\right)\,\prod_{i\in R}\mathcal{O}_{\mathbf{Z}}\left[i\right],
\]
\[
\sum_{\left|R\right|=\left|S\right|}\prod_{i\in R}\mathcal{O}_{\mathbf{Z}}\left[i\right]=\sum_{R\subseteq\mathbb{Z}_{n}}K_{R}\left(\mathbf{H}\right)\,\prod_{i\in R}\mathcal{O}_{\mathbf{Z}}\left[i\right].
\]
Equating corresponding coefficients on both sides of the equal sign
in the respective canonical representatives of
\[
\sum_{\left|R\right|\le\left|S\right|}\prod_{i\in R}\mathcal{O}_{\mathbf{Z}}\left[i\right]\text{ mod}\left\{ \prod_{i\in R}\mathbf{Z}\left[i,\text{lex}\left(\sigma\right)\right]-\sqrt[\left(n-\left|R\right|\right)!\cdot\left|R\right|!]{\frac{\underset{j\in R}{\prod}\mathbf{Y}\left[j,\text{lex}\left(R\right)\right]}{{n \choose \left|R\right|}}}:\begin{array}{c}
\left|R\right|\le\left|S\right|\\
\sigma\in\text{S}_{n}
\end{array}\right\} ,
\]
\[
\sum_{\left|R\right|\ge\left|S\right|}\prod_{i\in R}\mathcal{O}_{\mathbf{Z}}\left[i\right]\text{ mod}\left\{ \prod_{i\in R}\mathbf{Z}\left[i,\text{lex}\left(\sigma\right)\right]-\sqrt[\left(n-\left|R\right|\right)!\cdot\left|R\right|!]{\frac{\underset{j\in R}{\prod}\mathbf{Y}\left[j,\text{lex}\left(R\right)\right]}{{n \choose \left|R\right|}}}:\begin{array}{c}
\left|R\right|\ge\left|S\right|\\
\sigma\in\text{S}_{n}
\end{array}\right\} ,
\]
\[
\sum_{\left|R\right|=\left|S\right|}\prod_{i\in R}\mathcal{O}_{\mathbf{Z}}\left[i\right]\text{ mod}\left\{ \prod_{i\in R}\mathbf{Z}\left[i,\text{lex}\left(\sigma\right)\right]-\sqrt[\left(n-\left|R\right|\right)!\cdot\left|R\right|!]{\frac{\underset{j\in R}{\prod}\mathbf{Y}\left[j,\text{lex}\left(R\right)\right]}{{n \choose \left|R\right|}}}:\begin{array}{c}
\left|R\right|=\left|S\right|\\
\sigma\in\text{S}_{n}
\end{array}\right\} .
\]
yields respectively 
\[
\sum_{0\le t\le\left|S\right|}\prod_{\begin{array}{c}
R\subseteq\mathbb{Z}_{n}\\
\left|R\right|=t
\end{array}}\left(\prod_{i\in R}\mathbf{Y}\left[i,\text{lex}\left(R\right)\right]\right),\,\sum_{\left|S\right|\le t\le n}\prod_{\begin{array}{c}
R\subseteq\mathbb{Z}_{n}\\
\left|R\right|=t
\end{array}}\left(\prod_{i\in R}\mathbf{Y}\left[i,\text{lex}\left(R\right)\right]\right),
\]
\[
\prod_{\begin{array}{c}
R\subseteq\mathbb{Z}_{n}\\
\left|R\right|=\left|S\right|
\end{array}}\left(\prod_{i\in R}\mathbf{Y}\left[i,\text{lex}\left(R\right)\right]\right).
\]
with the corresponding coefficients in the canonical representative
of the congruence class 
\[
\sum_{R\subseteq\mathbb{Z}_{n}}K_{R}\left(\mathbf{H}\right)\,\prod_{i\in R}\mathcal{O}_{\mathbf{Z}}\left[i\right]\text{ mod}\left\{ \prod_{i\in R}\mathbf{Z}\left[i,\text{lex}\left(\sigma\right)\right]-\sqrt[\left(n-\left|R\right|\right)!\cdot\left|R\right|!]{\frac{\underset{j\in R}{\prod}\mathbf{Y}\left[j,\text{lex}\left(R\right)\right]}{{n \choose \left|R\right|}}}:\begin{array}{c}
\left|R\right|\le\left|S\right|\\
\sigma\in\text{S}_{n}
\end{array}\right\} ,
\]
yields for each of the the three constraints a different systems of
$\left(n+1\right)$ equations in the $\rho\cdot d\cdot\left(1+n\right)$
unknown entries for $\mathbf{H}\in\mathbb{C}^{\rho\times d\times\left(1+n\right)}$
respectively of the form
\[
\left\{ 1=\sum_{\begin{array}{c}
R\subseteq\mathbb{Z}_{n}\\
\left|R\right|=t
\end{array}}K_{R}\left(\mathbf{H}\right)\,:\,0\le t\le\left|S\right|\right\} \cup\left\{ 0=\sum_{\begin{array}{c}
R\subseteq\mathbb{Z}_{n}\\
\left|R\right|=t
\end{array}}K_{R}\left(\mathbf{H}\right)\,:\,\left|S\right|<t\le n\right\} .
\]
\[
\left\{ 1=\sum_{\begin{array}{c}
R\subseteq\mathbb{Z}_{n}\\
\left|R\right|=t
\end{array}}K_{R}\left(\mathbf{H}\right)\,:\,\left|S\right|\le t\le n\right\} \cup\left\{ 0=\sum_{\begin{array}{c}
R\subseteq\mathbb{Z}_{n}\\
\left|R\right|=t
\end{array}}K_{R}\left(\mathbf{H}\right)\,:\,0\le t<\left|S\right|\right\} .
\]
\[
\left\{ 1=\sum_{\begin{array}{c}
R\subseteq\mathbb{Z}_{n}\\
\left|R\right|=\left|S\right|
\end{array}}K_{R}\left(\mathbf{H}\right)\,:\,\left|R\right|=\left|S\right|\right\} \cup\left\{ 0=\sum_{\begin{array}{c}
R\subseteq\mathbb{Z}_{n}\\
\left|R\right|=t
\end{array}}K_{R}\left(\mathbf{H}\right)\,:\,0\le t\ne\left|S\right|\le n\right\} .
\]
We know that by eliminating variables via the method of resultants,
the latter system of equation necessarily admits a solution whenever
the number unknowns namely $\rho\,d\,\left(1+n\right)$ matches or
exceeds the number of algebraically independent constraints $\le\left(1+n\right)$.
Hence when $\rho=\left\lceil \frac{1+n}{\left(1+n\right)\,d}\right\rceil $,
the number of variables matches or exceeds the number of algebraically
independent constraints. It follows from the degree lower bounds $d\ge\left|S\right|$
that the desired claim holds.
\end{proof}
We now proceed to devise the PDPs for $F_{\le\left|S\right|}$, $F_{\ge\left|S\right|}$
specified in terms of Chow rank one polynomials $Q_{\le\left|S\right|}$
and $Q_{\ge\left|S\right|}$ as suggested by Thrm. (\ref{Orbital-bound}).
Note that Boolean functions $F_{\le\left|S\right|}$, $F_{\ge\left|S\right|}$
are both symmetric with respect to permutations of their input variables.
So too are polynomials $P_{\le\left|S\right|}$ and $P_{\ge\left|S\right|}$
used to specify their PDEs. So we can express them in a way that the
fundamental theorem of symmetric polynomials tells us. Recall the
well known Newton--Girard identities. These identities relate the
densest (in their monomial support) set of generators for the ring
of symmetric polynomials given by
\[
e_{t}\left(\mathbf{x}\right)=\sum_{\begin{array}{c}
R\subseteq\mathbb{Z}_{k}\\
\left|R\right|=t
\end{array}}\prod_{j\in R}x_{j},\quad\forall\:t\in\mathbb{Z}_{n+1}\backslash\left\{ 0\right\} ,
\]
to the sparsest set of generators ( for the same polynomial ring )
given by
\[
p_{t}\left(\mathbf{x}\right)=\sum_{0\le i<n}\left(x_{i}\right)^{t},\quad\forall\:t\in\mathbb{Z}_{n+1}\backslash\left\{ 0\right\} 
\]

\begin{prop}
\label{Newton-Girard}For all integer $0<t\le n$, we have
\[
e_{t}\left(\mathbf{x}\right)=\left(-1\right)^{t}\sum_{{m_{1}+2m_{2}+\cdots+tm_{t}=t\atop m_{1}\ge0,\ldots,m_{t}\ge0}}\prod_{1\le i\le t}\frac{\left(-p_{i}\left(\mathbf{x}\right)\right)^{m_{i}}}{m_{i}!\,i^{m_{i}}}.
\]
\end{prop}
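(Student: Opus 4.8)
The plan is to prove this classical Newton identity through the exponential formula applied to the generating function of the elementary symmetric polynomials. First I would introduce the formal power series in an auxiliary variable $T$, with coefficients in $\mathbb{Q}\left[\mathbf{x}\left[0\right],\ldots,\mathbf{x}\left[n^{k}-1\right]\right]$, given by
\begin{equation}
E\left(T\right)=\prod_{0\le i<n^{k}}\left(1+\mathbf{x}\left[i\right]\,T\right)=\sum_{0\le t\le n^{k}}e_{t}\left(\mathbf{x}\right)\,T^{t},
\end{equation}
where the right-hand equality is just the expanded product, matching the definition of $e_{t}\left(\mathbf{x}\right)$ as the sum of squarefree degree-$t$ monomials. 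Since the proposition only concerns $t\in\mathbb{Z}_{n^{k}+1}\backslash\left\{0\right\}$, i.e. $1\le t\le n^{k}$, the coefficient we must extract always lies within the range of nonzero terms of $E\left(T\right)$.

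Next I would take the formal logarithm of $E\left(T\right)$. Because each factor has constant term $1$, $\log E\left(T\right)$ is a well-defined power series with zero constant term, and applying the Mercator series $\log\left(1+u\right)=\sum_{\ell\ge1}\frac{\left(-1\right)^{\ell-1}}{\ell}u^{\ell}$ to each factor and summing over $i$ collapses the inner sums into power sums:
\begin{equation}
\log E\left(T\right)=\sum_{0\le i<n^{k}}\sum_{\ell\ge1}\frac{\left(-1\right)^{\ell-1}}{\ell}\left(\mathbf{x}\left[i\right]\right)^{\ell}T^{\ell}=\sum_{\ell\ge1}\frac{\left(-1\right)^{\ell-1}}{\ell}\,p_{\ell}\left(\mathbf{x}\right)\,T^{\ell}.
\end{equation}
Exponentiating and expanding the exponential of a sum as a product of exponential series then gives
\begin{equation}
E\left(T\right)=\prod_{\ell\ge1}\exp\left(\frac{\left(-1\right)^{\ell-1}}{\ell}\,p_{\ell}\left(\mathbf{x}\right)\,T^{\ell}\right)=\prod_{\ell\ge1}\sum_{m_{\ell}\ge0}\frac{1}{m_{\ell}!}\left(\frac{\left(-1\right)^{\ell-1}}{\ell}\,p_{\ell}\left(\mathbf{x}\right)\right)^{m_{\ell}}T^{\ell\,m_{\ell}}.
\end{equation}
Reading off the coefficient of $T^{t}$ selects exactly the tuples $\left(m_{1},\ldots,m_{t}\right)$ of nonnegative integers with $m_{1}+2m_{2}+\cdots+t\,m_{t}=t$, yielding
\begin{equation}
e_{t}\left(\mathbf{x}\right)=\sum_{{m_{1}+2m_{2}+\cdots+tm_{t}=t\atop m_{1}\ge0,\ldots,m_{t}\ge0}}\prod_{1\le i\le t}\frac{1}{m_{i}!\,i^{m_{i}}}\left(\left(-1\right)^{i-1}p_{i}\left(\mathbf{x}\right)\right)^{m_{i}}.
\end{equation}

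The remaining step is purely a reconciliation of signs with the stated form, and this is the fiddly part I would carry out carefully rather than the conceptual obstacle. On the one hand the product of signs above is $\prod_{i}\left(\left(-1\right)^{i-1}\right)^{m_{i}}=\left(-1\right)^{\sum_{i}\left(i-1\right)m_{i}}=\left(-1\right)^{t-\sum_{i}m_{i}}$, using the constraint $\sum_{i}i\,m_{i}=t$. On the other hand the target expression contributes $\left(-1\right)^{t}\prod_{i}\left(-1\right)^{m_{i}}=\left(-1\right)^{t+\sum_{i}m_{i}}$, and since $\left(-1\right)^{2\sum_{i}m_{i}}=1$ these two signs agree. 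The only thing to justify beyond routine algebra is the formal legitimacy of the $\log$/$\exp$ manipulations; this is immediate because the exponent is a power series with vanishing constant term, so $\exp$ is defined degree by degree and no convergence issues arise. I do not anticipate a genuine obstacle here, as the entire argument is a standard formal-power-series computation; the main care is in tracking the sign bookkeeping so that the final expression lands precisely in the form claimed.
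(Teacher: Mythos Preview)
Your argument is correct. The generating-function route you take---writing $E(T)=\prod_{i}(1+\mathbf{x}[i]T)$, taking the formal logarithm to obtain $\sum_{\ell\ge1}\frac{(-1)^{\ell-1}}{\ell}p_{\ell}(\mathbf{x})T^{\ell}$, re-exponentiating, and extracting the coefficient of $T^{t}$---is a standard and fully rigorous derivation of the explicit partition-sum formula, and your sign bookkeeping is right.

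It is, however, a genuinely different route from the paper's. The paper first derives a Newton-type recurrence linking the $e_{t}$'s and $p_{t}$'s (by specializing a symmetric expression in auxiliary variables $\mathbf{y}$ to obtain $0=\sum_{t}p_{t}(\mathbf{x})\,e_{n^{k}-t}(\mathbf{x})$) and then solves the resulting triangular linear system for the $e_{t}$'s in terms of the $p_{i}$'s by back-substitution, quoting the closed form as the outcome. Your approach bypasses the recurrence entirely and lands on the closed form in one shot via the exponential formula; this is cleaner and leaves no gap between the recurrence and the explicit solution, at the cost of introducing the formal $\log$/$\exp$ machinery. The paper's approach, by contrast, is more elementary in that it only uses polynomial identities and linear algebra, but it is terser about how back-substitution actually produces the stated partition sum. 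Either argument suffices for the proposition; yours is arguably more transparent about where the combinatorial structure of the answer comes from.
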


\begin{proof}
Consider the polynomial
\[
\frac{1}{\left(n-1\right)!}\sum_{\sigma\in\text{S}_{n}}\prod_{\gamma\in\text{S}_{n}}\left(x_{\sigma\left(0\right)}-y_{\gamma\left(1\right)}\right)^{\frac{1}{\left(n-1\right)!}}=\sum_{i\in\mathbb{Z}_{n}}\prod_{j\in\mathbb{Z}_{n}}\left(x_{i}-y_{j}\right)
\]
 
\[
\implies0=\sum_{0\le t\le n}p_{t}\left(\mathbf{x}\right)e_{n-t}\left(\mathbf{x}\right).
\]
Solving via back-substitution the resulting triangular system of linear
equations in the unknowns $e_{t}\left(\mathbf{x}\right)$, for all
integer $0<t\le n$ yields the Newton-Girard identity
\[
e_{t}\left(\mathbf{x}\right)=\left(-1\right)^{t}\sum_{{m_{1}+2m_{2}+\cdots+tm_{t}=t\atop m_{1}\ge0,\ldots,m_{t}\ge0}}\prod_{1\le i\le t}\frac{\left(-p_{i}\left(\mathbf{x}\right)\right)^{m_{i}}}{m_{i}!\,i^{m_{i}}}.
\]
\end{proof}
Using the Newton--Girard identity and exploiting the binary algebraic
relations in Eq. (\ref{Binary Algebraic Relations}), we eliminate
cross terms from multilinear polynomials $P_{\le\left|S\right|}$
and $P_{\ge\left|S\right|}$.
\begin{thm}
\label{Optimal PDPs}Boolean functions $F_{\le\left|S\right|}$ and
$F_{\ge\left|S\right|}$ admit PDPs respectively expressed in terms
of polynomials 
\[
Q_{\le\left|S\right|}\left(\mathbf{x}\right)\in\left\{ \sum_{\begin{array}{c}
R\subseteq\mathbb{Z}_{n}\\
\left|R\right|\le\left|S\right|
\end{array}}\omega_{R}\prod_{j\in R}x_{j}\,:\,\left(\omega_{R}\right)^{m}=1\right\} 
\]
and 
\[
Q_{\ge\left|S\right|}\left(\mathbf{x}\right)\in\left\{ \sum_{\begin{array}{c}
R\subseteq\mathbb{Z}_{n}\\
\left|R\right|\ge\left|S\right|
\end{array}}\omega_{R}\prod_{j\in R}x_{j}\,:\,\left(\omega_{R}\right)^{m}=1\right\} ,
\]
which can both be chosen to have Chow-rank $1$.
\end{thm}

\begin{proof}
We describe an elimination procedure which exploits congruence identities
\[
p_{t}\left(\mathbf{x}\right)\equiv p_{1}\left(\mathbf{x}\right)\text{ mod}\left\{ \begin{array}{c}
\left(x_{i}\right)^{2}-x_{i}\\
i\in\mathbb{Z}_{n}
\end{array}\right\} ,\quad\forall\:t\in\mathbb{Z}_{n+1}\backslash\left\{ 1,0\right\} .
\]
to reduce the number of terms. By Prop. (\ref{Newton-Girard}) taken
modulo binary algebraic relation described in Eq. (\ref{Binary Algebraic Relations}),
we have
\[
e_{t}\left(\mathbf{x}\right)\equiv\left(-1\right)^{t}\sum_{{m_{1}+2m_{2}+\cdots+tm_{t}=t\atop m_{1}\ge0,\ldots,m_{t}\ge0}}\prod_{1\le i\le t}\frac{\left(-p_{1}\left(\mathbf{x}\right)\right)^{m_{i}}}{m_{i}!\,i^{m_{i}}}\text{ mod}\left\{ \begin{array}{c}
\left(x_{i}\right)^{2}-x_{i}\\
i\in\mathbb{Z}_{n}
\end{array}\right\} .
\]
It follows that within respective congruence classes
\[
\left(P_{\le\left|S\right|}\left(\mathbf{x}\right)\text{ mod}\left\{ \begin{array}{c}
\left(x_{i}\right)^{2}-x_{i}\\
i\in\mathbb{Z}_{n}
\end{array}\right\} \right)\;\text{ and }\;\left(P_{\ge\left|S\right|}\left(\mathbf{x}\right)\text{ mod}\left\{ \begin{array}{c}
\left(x_{i}\right)^{2}-x_{i}\\
i\in\mathbb{Z}_{n}
\end{array}\right\} \right)
\]
lies univariate polynomials in the linear functional $\underset{i\in\mathbb{Z}_{n}}{\sum}x_{i}$
of degree $\left|S\right|$ and $n$ respectively. By the Fundamental
Theorem of Algebra, there exists
\[
\left\{ \alpha,\beta\right\} \cup\left\{ \alpha_{i},\beta_{j}:\begin{array}{c}
0\le i<\left|S\right|\\
0\le j<n
\end{array}\right\} \subset\mathbb{C},
\]
with which we express optimal PDPs for $F_{\le\left|S\right|}$, $F_{\ge\left|S\right|}$
and $F_{=\left|S\right|}$ as follows
\begin{equation}
F_{\le\left|S\right|}\left(\mathbf{1}_{T}\right)=\left(\left.\alpha\prod_{i\in\mathbb{Z}_{n}}\left(\frac{\partial}{\partial x_{i}}\right)^{\mathbf{1}_{T}\left[i\right]}\prod_{0\le j<\left|S\right|}\left(\alpha_{j}+\sum_{i\in\mathbb{Z}_{n}}x_{i}\right)\text{mod}\left\{ \begin{array}{c}
\left(x_{i}\right)^{2}-x_{i}\\
i\in\mathbb{Z}_{n}
\end{array}\right\} \right\rfloor _{\mathbf{x}=\mathbf{0}_{n\times1}}\right).\label{Opt_PDP_A}
\end{equation}
The hypermatrix underlying the polynomial used to specify the PDP
above is of size $1\times\left|S\right|\times\left(1+n\right)$.
\begin{equation}
F_{\ge\left|S\right|}\left(\mathbf{1}_{T}\right)=\left(\left.\beta\prod_{i\in\mathbb{Z}_{n}}\left(\frac{\partial}{\partial x_{i}}\right)^{\mathbf{1}_{T}\left[i\right]}\prod_{0\le j<n}\left(\beta_{j}+\sum_{i\in\mathbb{Z}_{n}}x_{i}\right)\text{mod}\left\{ \begin{array}{c}
\left(x_{i}\right)^{2}-x_{i}\\
i\in\mathbb{Z}_{n}
\end{array}\right\} \right\rfloor _{\mathbf{x}=\mathbf{0}_{n\times1}}\right).\label{Opt_PDP_B}
\end{equation}
The hypermatrix underlying the depth--3 $\sum\prod\sum$ arithmetic
formula used to specify the PDP is of size $1\times n\times\left(1+n\right)$. 
\end{proof}
We see that Ben Or \cite{WN96} constructions yield optimal PDEs for
Boolean functions $F_{\le\left|S\right|}$, $F_{\ge\left|S\right|}$
specified respectively via polynomials $P_{\le\left|S\right|}\left(\mathbf{x}\right)$
and $P_{\ge\left|S\right|}\left(\mathbf{x}\right)$ whose Chow rank
are at most $O\left(n\right)$ however Thrm. (\ref{Optimal PDPs})
establishes that PDPs for the same Boolean function are specified
respectively via polynomials $Q_{\le\left|S\right|}\left(\mathbf{x}\right)$
and $Q_{\ge\left|S\right|}\left(\mathbf{x}\right)$ whose Chow rank
is equal to one. We further remark that the distinction between PDEs
and PDPs is akin to the distinctions between time complexity \cite{Wig19}
and Kolmogorov complexity \cite{Kolmogorovcomplexity}.

\section{PDEs/PDPs over the transformation monoid $\mathbb{Z}_{n}^{\mathbb{Z}_{n}}$.}

We emphasize salient features of PDEs/PDPs by focusing on Boolean
functions whose domain are members of the transformation monoid $\mathbb{Z}_{n}^{\mathbb{Z}_{n}}$
in other words functions whose domain and codomain is $\mathbb{Z}_{n}$.
Given an arbitrary $T\subseteq\mathbb{Z}_{n}^{\mathbb{Z}_{n}}$, let
$F_{T}\,:\mathbb{Z}_{n}^{\mathbb{Z}_{n}}\rightarrow\left\{ 0,1\right\} $
be such that
\begin{equation}
F_{T}\left(f\right)=\begin{cases}
\begin{array}{cc}
1 & \text{ if }f\in T\\
0 & \text{otherwise}
\end{array} & \text{for all }f\in\mathbb{Z}_{n}^{\mathbb{Z}_{n}}\end{cases}.\label{Monoid-Boolean-Function}
\end{equation}
The Boolean function $F_{T}$ therefore tests for membership of an
input function $f\in\mathbb{Z}_{n}^{\mathbb{Z}_{n}}$ into the subset
$T$. PDEs of $F_{T}$ with exponent parameter $m$ are of the form
\[
F_{T}\left(f\right)=\left(\frac{\partial^{n}\,P_{T}\left(\mathbf{A}\right)}{\underset{i\in\mathbb{Z}_{n}}{\prod}\partial a_{i,f\left(i\right)}}\right)^{m},\:\text{ for all }\:f\in\mathbb{Z}_{n}^{\mathbb{Z}_{n}},
\]
where
\[
P_{T}\left(\mathbf{A}\right)\in\left\{ \sum_{g\in T}\omega_{g}\,\prod_{i\in\mathbb{Z}_{n}}a_{i,g\left(i\right)}\,:\,\begin{array}{c}
\left(\omega_{g}\right)^{m}=1,\\
g\in T
\end{array}\right\} .
\]
Note when expressing such PDEs, evaluations of each entries of $\mathbf{A}$
at $0$ are no longer needed. Furthermore it is easy to see that for
all $S,T\subset\mathbb{Z}_{n}^{\mathbb{Z}_{n}}$,
\[
\neg\,F_{T}\left(f\right)=\left(\frac{\partial^{n}\,P_{\overline{T}}\left(\mathbf{A}\right)}{\underset{i\in\mathbb{Z}_{n}}{\prod}\partial a_{i,f\left(i\right)}}\right)^{m},\quad\text{ for all }\:f\in\mathbb{Z}_{n}^{\mathbb{Z}_{n}},
\]
where $\overline{T}:=\mathbb{Z}_{n}^{\mathbb{Z}_{n}}\backslash T$.
\[
F_{S}\left(f\right)\vee F_{T}\left(f\right)=\left(\frac{\partial^{n}\,P_{S\cup T}\left(\mathbf{A}\right)}{\underset{i\in\mathbb{Z}_{n}}{\prod}\partial a_{i,f\left(i\right)}}\right)^{m},\quad\text{ for all }\:f\in\mathbb{Z}_{n}^{\mathbb{Z}_{n}},
\]
and
\[
F_{S}\left(f\right)\wedge F_{T}\left(f\right)=\left(\frac{\partial^{n}\,P_{S\cap T}\left(\mathbf{A}\right)}{\underset{i\in\mathbb{Z}_{n}}{\prod}\partial a_{i,f\left(i\right)}}\right)^{m},\quad\text{ for all }\:f\in\mathbb{Z}_{n}^{\mathbb{Z}_{n}}.
\]
There are $m^{\left|T\right|}$ distinct choices for $P_{T}\left(\mathbf{A}\right)$
with Chow-rank trivially upper-bounded by $\left|T\right|$. 
\begin{example}
Take 
\[
T=\left\{ f\in\mathbb{Z}_{n}^{\mathbb{Z}_{n}}\begin{array}{c}
f\left(0\right)=0\\
f\left(i\right)<i,\,\forall\,i\in\mathbb{Z}_{n}\backslash\left\{ 0\right\} 
\end{array}\right\} .
\]
the an optimal PDE of $F_{T}$ with exponent parameter $m=1$ is given
by 
\[
F_{T}\left(f\right)=\left(\frac{\partial^{n}\,P_{T}\left(\mathbf{A}\right)}{\underset{i\in\mathbb{Z}_{n}}{\prod}\partial a_{i,f\left(i\right)}}\right)^{m},\:\text{ for all }\:f\in\mathbb{Z}_{n}^{\mathbb{Z}_{n}},
\]
where 
\[
P_{T}\left(\mathbf{A}\right)=\mathbf{A}\left[0,0\right]\prod_{i\in\mathbb{Z}_{n}\backslash\left\{ 0\right\} }\sum_{i<j<n}\mathbf{A}\left[i,j\right]=\sum_{\begin{array}{c}
f\in\mathbb{Z}_{n}^{\mathbb{Z}_{n}}\\
f\left(0\right)=0\\
f\left(i\right)<i\,\forall\,i\in\mathbb{Z}_{n}\backslash\left\{ 0\right\} 
\end{array}}\prod_{i\in\mathbb{Z}_{n}}\mathbf{A}\left[i,f\left(i\right)\right].
\]
Consider for example the Boolean function
\end{example}

\[
F_{\text{S}_{n}}:\left(\mathbb{Z}_{n}\right)^{\mathbb{Z}_{n}}\rightarrow\left\{ 0,1\right\} 
\]
defined such that 
\[
F_{\text{S}_{n}}\left(f\right)=\begin{cases}
\begin{array}{cc}
1 & \text{ if }f\in\text{S}_{n}\\
0 & \text{otherwise}
\end{array} & \text{for all }f\in\mathbb{Z}_{n}^{\mathbb{Z}_{n}}.\end{cases}
\]
The corresponding PDE with exponent parameter $m$ specified via a
multilinear polynomial is given by 
\[
F_{\text{S}_{n}}\left(f\right)=\left(\left.\left(\prod_{i\in\mathbb{Z}_{\sqrt{n}}}\frac{\partial}{\partial a_{i,f\left(i\right)}}\right)P_{\text{S}_{n}}\left(\mathbf{A}\right)\right\rfloor _{\mathbf{A}=\mathbf{0}_{n\times n}}\right)^{m},
\]
where 
\[
P_{\text{S}_{n}}\left(\mathbf{A}\right)\in\left\{ \sum_{\sigma\in\text{S}_{n}}\omega_{\sigma}\prod_{i\in\mathbb{Z}_{n}}a_{i,\sigma\left(i\right)}:\begin{array}{c}
\left(\omega_{\sigma}\right)^{m}=1\\
\sigma\in\text{S}_{n}
\end{array}\right\} .
\]
Alternatively an optimal PDE with exponent parameter $m=2$ is devised
 for $F_{\text{S}_{n}}$ by specifying it using non-multilinear polynomial
as follows
\[
F_{\text{S}_{n}}\left(f\right)=\left(\left.\prod_{i\in\mathbb{Z}_{\sqrt{n}}}\left(\frac{\partial}{\sqrt[f\left(i\right)]{f\left(i\right)!}\,\partial x_{i}}\right)^{f\left(i\right)}p_{\text{S}_{n}}\left(x_{0},\cdots,x_{n-1}\right)\right\rfloor _{\mathbf{x}=\mathbf{0}_{n\times1}}\right)^{2},
\]
where 
\[
p_{\text{S}_{n}}\left(x_{0},\cdots,x_{n-1}\right)\in\left\{ \prod_{0\le i<j<n}\left(\omega_{v}\,x_{v}-\omega_{u}\,x_{u}\right):\begin{array}{c}
\left(\omega_{u}\right)^{2}=1\\
u\in\mathbb{Z}_{n}
\end{array}\right\} .
\]

\begin{prop}
Let $\mathcal{O}_{\mathbf{Z}}$ denote the orbital vector $n\times1$
vector with entries 
\[
\mathcal{O}_{\mathbf{Z}}\left[i\right]=\prod_{\gamma\in\text{S}_{n}}\mathbf{Z}\left[\gamma\left(i\right),\text{lex}\left(\gamma\right)\right],\quad\forall\:i\in\mathbb{Z}_{n}.
\]
 The $\left[0,1\right]$ entry of the canonical representative of
the congruence class
\[
\prod_{i\in\mathbb{Z}_{n}}\left(\mathcal{O}_{\mathbf{Z}}\left[i\right]\right)^{i}\text{ mod}\left\{ \begin{array}{c}
\underset{i\in\mathbb{Z}_{n}}{\prod}\left(\mathbf{Z}\left[\gamma\left(i\right),\text{lex}\left(\gamma\right)\right]\right)^{i}-\left(\begin{array}{cc}
1 & \underset{i\in\mathbb{Z}_{n}}{\prod}\left(x_{\gamma\left(i\right)}\right)^{i}\\
0 & 1
\end{array}\right)\\
\gamma\in\text{S}_{n}
\end{array}\right\} 
\]
yields the polynomial $p_{\text{S}_{n}}\left(x_{0},\cdots,x_{n-1}\right)$
used to specify the PDE 
\[
F_{\text{S}_{n}}\left(f\right)=\left(\left.\prod_{i\in\mathbb{Z}_{\sqrt{n}}}\left(\frac{\partial}{\sqrt[f\left(i\right)]{f\left(i\right)!}\,\partial x_{i}}\right)^{f\left(i\right)}p_{\text{S}_{n}}\left(x_{0},\cdots,x_{n-1}\right)\right\rfloor _{\mathbf{x}=\mathbf{0}_{n\times1}}\right)^{m},
\]
with exponent parameter $m=1$.
\end{prop}

\begin{proof}
When the exponent parameter $m=1$, 
\[
p_{\text{S}_{n}}\left(x_{0},\cdots,x_{n-1}\right)=\text{Per}\left\{ \text{Vandermonde}\left(\mathbf{x}\right)\right\} .
\]
Recall that 
\[
\left(\text{Vandermonde}\left(\mathbf{x}\right)\right)\left[i,j\right]=\left(x_{i}\right)^{j},\ \forall\,0\le i,j<n.
\]
We see that 
\[
\left(\prod_{i\in\mathbb{Z}_{n}}\left(\mathcal{O}_{\mathbf{Z}}\left[i\right]\right)^{i}\text{ mod}\left\{ \begin{array}{c}
\underset{i\in\mathbb{Z}_{n}}{\prod}\left(\mathbf{Z}\left[\gamma\left(i\right),\text{lex}\left(\gamma\right)\right]\right)^{i}-\left(\begin{array}{cc}
1 & \underset{i\in\mathbb{Z}_{n}}{\prod}\left(x_{\gamma\left(i\right)}\right)^{i}\\
0 & 1
\end{array}\right)\\
\gamma\in\text{S}_{n}
\end{array}\right\} \right)\left[0,1\right]\equiv\text{Per}\left\{ \text{Vandermonde}\left(\mathbf{x}\right)\right\} .
\]
From which the desired claim follows.
\end{proof}
\begin{prop}
An optimal PDE for the Boolean function $F_{\text{S}_{n}}$ with exponent
parameter $m=2$ is 
\[
F_{\text{S}_{n}}\left(f\right)=\left(\left.\prod_{i\in\mathbb{Z}_{\sqrt{n}}}\left(\frac{\partial}{\sqrt[f\left(i\right)]{f\left(i\right)!}\,\partial x_{i}}\right)^{f\left(i\right)}p_{\text{inv}}\left(x_{0},\cdots,x_{n-1}\right)\right\rfloor _{\mathbf{x}=\mathbf{0}_{n\times1}}\right)^{m},
\]
where 
\[
p_{\text{S}_{n}}\in\left\{ \det\left(\text{Vandermonde}\left(\text{diag}\left(\mathbf{s}\right)\mathbf{x}\right)\right):\mathbf{I}_{n}=\text{diag}\left(\mathbf{s}\right)^{m}\right\} \subset\left\{ \sum_{\sigma\in\text{S}_{n}}\omega_{\sigma}\prod_{i\in\mathbb{Z}_{n}}\left(x_{\sigma\left(i\right)}\right)^{i}:\begin{array}{c}
\left(\omega_{\sigma}\right)^{m}=1\\
\sigma\in\text{S}_{n}
\end{array}\right\} .
\]
\end{prop}

\begin{proof}
The proof immediately follows from the Chow-rank one decomposition
of the well known determinant of the Vandermonde matrix given by 
\[
\prod_{0\le i<j<n}\left(x_{j}-x_{i}\right)=\det\left(\text{Vandermonde}\left(\mathbf{x}\right)\right).
\]
\end{proof}
\begin{defn}
The subset $T$ is a \emph{normal} subset of $\mathbb{Z}_{n}^{\mathbb{Z}_{n}}$
if
\[
T=\sigma T\sigma^{\left(-1\right)}:=\left\{ \sigma f\sigma^{\left(-1\right)}:f\in T\right\} ,\quad\text{for all}\:\sigma\in\text{S}_{n}.
\]
For instance $\mathbb{Z}_{n}^{\mathbb{Z}_{n}}$, S$_{n}$, S$_{n}$$\backslash\mathbb{Z}_{n}^{\mathbb{Z}_{n}}$
and $\left\{ f\in\mathbb{Z}_{n}^{\mathbb{Z}_{n}}:1=\left|f^{\left(n-1\right)}\left(\mathbb{Z}_{n}\right)\right|\right\} $
all form normal subsets of the transformation monoid $\mathbb{Z}_{n}^{\mathbb{Z}_{n}}$.
Consider linear transformations prescribed with respect to the standard
Euclidean basis over the finite field with $p$ elements ($\mathbb{F}_{p}$)
where $p$ is prime. With respect to the standard Euclidean basis,
linear transformations are represented by matrices in $\mathbb{F}_{p}^{\left\lfloor \text{log}_{p}\left(n\right)\right\rfloor \times\left\lfloor \text{log}_{p}\left(n\right)\right\rfloor }$.
We abuse of notation slightly and view such linear transformations
as members of the transformation monoid $\left(\mathbb{F}_{p}^{\left\lfloor \text{log}_{p}\left(n\right)\right\rfloor \times1}\right)^{\left(\mathbb{F}_{p}\right)^{\left\lfloor \text{log}_{p}\left(n\right)\right\rfloor \times1}}$.
Assume for simplicity throughout the subsequent discussion that $n$
is a power of $p$ and let the canonical embedding of the vector space
$\mathbb{F}_{p}^{\text{log}_{p}\left(n\right)\times1}$ into $\mathbb{Z}_{n}$
be prescribed by the map
\[
\eta\,:\,\mathbb{F}_{p}^{\left\lfloor \text{log}_{p}\left(n\right)\right\rfloor \times1}\rightarrow\mathbb{Z}_{n},\:\text{ such that }\:\eta\left(\mathbf{b}\right)=\sum_{0\le i<\text{lg}n}b_{i}\,p^{i},\text{ for all }\mathbf{b}\in\left(\mathbb{F}_{p}\right)^{\text{log}_{p}\left(n\right)\times1}.
\]
Via this embedding, the monoid of linear transformations from $\mathbb{F}_{p}^{\text{log}_{p}\left(n\right)\times1}$
to $\mathbb{F}_{p}^{\text{log}_{p}\left(n\right)\times1}$ is isomorphic
to a sub-monoid of $\mathbb{Z}_{n}^{\mathbb{Z}_{n}}$ of order $n^{\text{lg}n}$.
We call this particular monoid the monoid of endomorphism and is denoted
for notational convenience $\text{End}_{\text{log}_{p}\left(n\right)}\left(\mathbb{F}_{p}\right)$.
The largest group in $\text{End}_{\text{log}_{p}\left(n\right)}\left(\mathbb{F}_{p}\right)$
is isomorphic to GL$_{\text{log}_{p}\left(n\right)}\left(\mathbb{F}_{p}\right)$
thus isomorphic to a subgroup of the permutation group S$_{n}\subseteq\mathbb{Z}_{n}^{\mathbb{Z}_{n}}$
of order
\[
\prod_{0\le k<\text{log}_{p}\left(n\right)}\left(n-p^{k}\right).
\]
We abuse notion and identify GL$_{\text{log}_{p}\left(n\right)}\left(\mathbb{F}_{2}\right)$
with its isomorphic image in S$_{n}\subseteq\mathbb{Z}_{n}^{\mathbb{Z}_{n}}$.
Consider the $\text{log}_{p}\left(n\right)\times\text{log}_{p}\left(n\right)$
orbital matrix $\mathcal{O}_{\mathbf{Z}}$ depicts orbits of the action
of the Abelian group $\mathbb{F}_{p}^{\text{log}_{p}\left(n\right)\times1}$
\[
\mathcal{O}_{\mathbf{Z}}\left[i,j\right]=\prod_{\mathbf{b}\in\mathbb{F}_{p}^{\text{log}_{p}\left(n\right)}}\left(\mathbf{Z}\left[i,\eta\left(\mathbf{b}\right)\right]\right)^{b_{j}},\ \forall\,\left(i,j\right)\in\mathbb{Z}_{\text{log}_{p}\left(n\right)}\times\mathbb{Z}_{\text{log}_{p}\left(n\right)}.
\]
Using the orbital matrix we devise a listing of $\text{End}_{\text{log}_{p}\left(n\right)}$
as follows
\end{defn}

\begin{prop}
The $\left[0,1\right]$ entry of the $2\times2$ canonical representative
of the congruence class
\[
\prod_{0\le i,j<\text{log}_{p}\left(n\right)}\sum_{k\in\mathbb{F}_{p}}\left(\mathcal{O}_{\mathbf{Z}}\left[i,j\right]\right)^{k}
\]
\[
\text{ mod }\left\{ \prod_{\begin{array}{c}
0\le u<\text{log}_{p}\left(n\right)\\
\mathbf{b}\in\mathbb{F}_{p}^{\text{log}_{p}\left(n\right)}
\end{array}}\left(\mathbf{Z}\left[u,\eta\left(\mathbf{b}\right)\right]\right)^{\underset{0\le v<\text{log}_{p}\left(n\right)}{\sum}\mathbf{M}\left[u,v\right]b_{v}}-\left(\begin{array}{cc}
1 & \underset{i\in\mathbb{Z}_{n}}{\prod}\mathbf{A}\left[i,\eta\left(\mathbf{M}\eta^{-1}\left(i\right)\right)\right]\\
0 & 1
\end{array}\right):\mathbf{M}\in\mathbb{F}_{p}^{\text{log}_{p}\left(n\right)\times\text{log}_{p}\left(n\right)}\right\} ,
\]
yields the listing
\[
P_{\text{End}_{\text{log}_{p}\left(n\right)}}\left(\mathbf{A}\right)=\sum_{f\in\text{End}_{\text{log}_{p}\left(n\right)}\left(\mathbb{F}_{p}\right)}\prod_{i\in\mathbb{Z}_{n}}\mathbf{A}\left[i,f\left(i\right)\right].
\]
\end{prop}

\begin{proof}
The proof stems from the matrix identity
\[
\prod_{0\le i,j<\text{log}_{p}\left(n\right)}\sum_{0\le k<p}\left(\mathbf{X}\left[i,j\right]\right)^{k}=\sum_{\mathbf{M}\in\mathbb{F}_{p}^{\text{log}_{p}\left(n\right)\times\text{log}_{p}\left(n\right)}}\prod_{0\le i,j<\text{log}_{p}\left(n\right)}\left(\mathbf{X}\left[i,j\right]\right)^{\mathbf{M}\left[i,j\right]}.
\]
Substituting into the polynomial identity the orbital matrix $\mathcal{O}_{\mathbf{Z}}$
for the matrix $\mathbf{X}$ yields the equality
\[
\prod_{0\le i,j<n}\sum_{0\le k<p}\left(\prod_{\mathbf{b}\in\mathbb{F}_{p}^{\text{log}_{p}\left(n\right)}}z_{i,\eta\left(\mathbf{b}\right)}\right)^{k\,b_{j}}=\sum_{\mathbf{M}\in\mathbb{F}_{p}^{\text{log}_{p}\left(n\right)\times\text{log}_{p}\left(n\right)}}\prod_{\mathbf{b}\in\mathbb{F}_{p}^{\text{log}_{p}\left(n\right)}}\left(\prod_{0\le i<\text{log}_{p}\left(n\right)}\left(z_{i,\eta\left(\mathbf{b}\right)}\right)^{\underset{0\le j<\text{log}_{p}\left(n\right)}{\sum}\mathbf{M}\left[i,j\right]b_{j}}\right).
\]
The canonical representative for the first of these congruence classes
is obtained by successively replacing into the expanded form of $P_{\subseteq S}\left(\mathcal{O}_{\mathbf{Z}}\right)$
every occurrence of monomials of the form
\[
\prod_{\begin{array}{c}
0\le u,v<\text{log}_{p}\left(n\right)\\
\mathbf{b}\in\mathbb{F}_{p}^{\text{log}_{p}\left(n\right)}
\end{array}}\left(\mathbf{Z}\left[u,\eta\left(\mathbf{b}\right)\right]\right)^{\mathbf{M}\left[u,v\right]b_{v}},
\]
with the $2\times2$ upper triangular matrix
\[
\left(\begin{array}{cc}
1 & \underset{i\in\mathbb{Z}_{n}}{\prod}\mathbf{A}\left[i,\eta\left(\mathbf{M}\eta^{-1}\left(i\right)\right)\right]\\
0 & 1
\end{array}\right),
\]
followed by taking the $\left[0,1\right]$ entry of the $2\times2$
matrix resulting from the said substitutions.
\end{proof}
The ensuing PDE with exponent parameter $m=1$ is 
\[
F_{\text{End}_{\text{log}_{p}\left(n\right)}}\left(f\right)=\left(\left.\left(\prod_{i\in\mathbb{Z}_{n}}\frac{\partial}{\partial a_{i,f\left(i\right)}}\right)P_{\text{End}_{\text{log}_{p}\left(n\right)}}\left(\mathbf{A}\right)\right\rfloor _{\mathbf{A}=\mathbf{0}_{n\times n}}\right).
\]
For simplicity take $p=2$ and let lg denote the logarithm base 2.
The canonical embedding of the vector space $\left(\mathbb{F}_{2}\right)^{\text{lg}n\times1}$
into $\mathbb{Z}_{n}$ is prescribed by the map
\[
\eta\,:\,\left(\mathbb{F}_{2}\right)^{\text{lg}n\times1}\rightarrow\mathbb{Z}_{n},\:\text{ such that }\:\eta\left(\mathbf{b}\right)=\sum_{0\le i<\text{lg}n}b_{i}\,2^{i},\text{ for all }\mathbf{b}\in\left(\mathbb{F}_{2}\right)^{\text{lg}n\times1}.
\]
Via this embedding, the monoid of linear transformations from $\left(\mathbb{F}_{2}\right)^{\text{lg}n\times1}$
to $\left(\mathbb{F}_{2}\right)^{\text{lg}n\times1}$ is isomorphic
to a sub-monoid of $\left(\mathbb{Z}_{n}\right)^{\mathbb{Z}_{n}}$
of order $n^{\text{lg}n}$. We call this particular monoid the monoid
of endomorphism and is denoted for notational convenience $\text{End}_{\text{lg}n}\left(\mathbb{F}_{2}\right)$.
Boolean functions of fundamental importance are Boolean function which
test for membership in a subset $T\subset\text{End}_{\text{lg}n}\left(\mathbb{F}_{2}\right)\subset\mathbb{Z}_{n}^{\mathbb{Z}_{n}}$
subject to the invariance
\[
\sigma T\gamma=T,\quad\forall\,\left(\sigma,\gamma\right)\in\text{GL}_{\text{lg}n}\left(\mathbb{F}_{2}\right)\times\text{GL}_{\text{lg}n}\left(\mathbb{F}_{2}\right).
\]
Their importance stem from their invariance to coordinate change.
PDPs of $F_{T}$ are of the form 
\[
F_{T}\left(f\right)=\left(\frac{\partial^{n}}{\underset{i\in\mathbb{Z}_{n}}{\prod}\partial a_{i,f\left(i\right)}}\left(Q_{T}\left(\mathbf{A}\right)\text{mod}\left\{ \begin{array}{c}
\left(a_{ij}\right)^{2}-a_{ij}\\
0\le i,j<n
\end{array}\right\} \right)\right)^{m},\;\text{ for all}\:f\in\mathbb{Z}_{n}^{\mathbb{Z}_{n}}.
\]
Let the rank of $f\in\text{End}_{\text{lg}n}\left(\mathbb{F}_{2}\right)$
denote the rank of the corresponding $\text{lg}n\times\text{lg}n$
matrix.
\begin{thm}
\label{PDP-Chow-Bound}Let $S\subset\mathbb{Z}_{1+\text{lg}n}$ and
$T\subset\text{End}_{\text{lg}n}\left(\mathbb{F}_{2}\right)\subset\mathbb{Z}_{n}^{\mathbb{Z}_{n}}$
be such that 
\[
T=\left\{ f\in\text{End}_{\text{lg}n}\left(\mathbb{F}_{2}\right):\text{Rank}\left(f\right)\in S\right\} 
\]
then there exist a PDP for $F_{T}$ specified via a polynomial $Q_{T}\left(\mathbf{A}\right)$
up to binary algebraic relations of Chow rank at most $\left\lceil \frac{1+\text{lg}n}{\left(1+n^{2}\right)n}\right\rceil =1$
whereby 
\[
\left(Q_{T}\left(\mathbf{A}\right)\text{mod}\left\{ \begin{array}{c}
\left(a_{ij}\right)^{2}-a_{ij}\\
0\le i,j<n
\end{array}\right\} \right)\equiv P_{T}\left(\mathbf{A}\right)\in\left\{ \sum_{g\in T}\omega_{g}\,\prod_{i\in\mathbb{Z}_{n}}a_{i\,g\left(i\right)}\,:\,\begin{array}{c}
\left(\omega_{g}\right)^{m}=1,\\
g\in T
\end{array}\right\} 
\]
\end{thm}

\begin{proof}
For simplicity we take the exponent parameter $m=1$. Consider the
symbolic listing of $\text{End}_{\text{lg}n}\left(\mathbb{F}_{2}\right)$
given by 
\[
\sum_{\mathbf{M}\in\left(\mathbb{F}_{2}\right)^{\text{lg}n\times\text{lg}n}}\prod_{i\in\mathbb{Z}_{n}}\mathbf{A}\left[i,\,\eta\left(\mathbf{M}\eta^{\left(-1\right)}\left(i\right)\right)\right]=\sum_{f\in\text{End}_{\text{lg}n}\left(\mathbb{F}_{2}\right)}\prod_{i\in\mathbb{Z}_{n}}a_{i\,f\left(i\right)}.
\]
Given the prescribed invariance 
\[
\sigma T\gamma=T,\:\text{ for all }\:\left(\sigma,\gamma\right)\in\text{GL}_{\text{lg}n}\left(\mathbb{F}_{2}\right)\times\text{GL}_{\text{lg}n}\left(\mathbb{F}_{2}\right),
\]
We consider the $n\times n$ orbital matrix associated with the corresponding
group action as follows
\[
\mathcal{O}_{\mathbf{Z}}\left[i,j\right]=\prod_{\left(\sigma,\gamma\right)\in\text{GL}_{\text{lg}n}\left(\mathbb{F}_{2}\right)\times\text{GL}_{\text{lg}n}\left(\mathbb{F}_{2}\right)}\mathbf{Z}\left[\sigma^{\left(-1\right)}\left(i\right),\gamma\left(j\right),\text{lex}_{\text{GL}}\left(\sigma,\gamma\right)\right],\text{ for all }0\le i,j<n.
\]
The lexicographic map above is an arbitrary bijection from $\text{GL}_{\text{lg}n}\left(\mathbb{F}_{2}\right)\times\text{GL}_{\text{lg}n}\left(\mathbb{F}_{2}\right)$
to $\mathbb{Z}_{\left|\text{GL}_{\text{lg}n}\left(\mathbb{F}_{2}\right)\right|^{2}}$.
Let $\text{lex}_{\text{End}}$ denote an arbitrary bijection from
$\text{End}_{\text{lg}n}\left(\mathbb{F}_{2}\right)$ to $\mathbb{Z}_{n^{\lg n}}$
then the corresponding orbit list generating polynomial is
\[
\sum_{f\in\text{End}_{\text{lg}n}\left(\mathbb{F}_{2}\right)}\prod_{i\in\mathbb{Z}_{n}}\mathcal{O}_{\mathbf{Z}}\left[i,f\left(i\right)\right]\mod\left\{ \begin{array}{c}
\underset{i\in\mathbb{Z}_{2^{n}}}{\prod}\mathbf{Z}\left[i,f\left(i\right),\text{lex}_{\text{GL}}\left(\sigma,\gamma\right)\right]-\underset{j\in\mathbb{Z}_{n}}{\prod}\mathbf{Y}\left[j,f\left(j\right),\text{lex}_{\text{End}}\left(f\right)\right]\\
\left(\sigma,\gamma\right)\in\text{GL}_{\text{lg}n}\left(\mathbb{F}_{2}\right)\times\text{GL}_{\text{lg}n}\left(\mathbb{F}_{2}\right)\\
f\in\text{End}_{\text{lg}n}\left(\mathbb{F}_{2}\right)
\end{array}\right\} ,
\]
and is given by
\[
\sum_{0\le t\le\lg n}\left|\nicefrac{\left(\text{GL}_{\text{lg}n}\left(\mathbb{F}_{2}\right)\times\text{GL}_{\text{lg}n}\left(\mathbb{F}_{2}\right)\right)}{\mathcal{A}\left(f\right)}\right|\prod_{\begin{array}{c}
f\in\text{End}_{\text{lg}n}\left(\mathbb{F}_{2}\right)\\
\text{rank}\left(f\right)=t
\end{array}}\left(\prod_{i\in\mathbb{Z}_{n}}\mathbf{Y}\left[i,f\left(i\right),\text{lex}\left(f\right)\right]\right)^{\left|\mathcal{A}\left(f\right)\right|}
\]
where $\mathcal{A}\left(f\right)$ denotes the $f$--Left-Right invariant
subgroup of $\text{GL}_{\text{lg}n}\left(\mathbb{F}_{2}\right)\times\text{GL}_{\text{lg}n}\left(\mathbb{F}_{2}\right)$
in other words
\[
\mathcal{A}\left(f\right):=\left\{ \left(\sigma,\gamma\right)\in\text{GL}_{\text{lg}n}\left(\mathbb{F}_{2}\right)\times\text{GL}_{\text{lg}n}\left(\mathbb{F}_{2}\right):\sigma f\gamma=f\right\} .
\]
We bound the Chow-rank of $Q_{T}$ via the orbital argument. Consider
the equality 
\[
\sum_{f\in\text{End}_{\text{lg}n}\left(\mathbb{F}_{2}\right)}\prod_{i\in\mathbb{Z}_{n}}a_{i,f\left(i\right)}=\sum_{0\le u<\rho}\prod_{0\le v<d}\left(\mathbf{H}\left[u,v,0\right]+\sum_{0\le i,j<n}\mathbf{H}\left[u,v,1+n\,i+j\right]\,a_{i\,j}\right)
\]
Substituting on both side of the equal sign above entries of $\mathbf{A}$
by corresponding entries of the orbital matrix yields 
\[
\sum_{f\in\text{End}_{\lg n}\left(\mathbb{F}_{2}\right)}\prod_{i\in\mathbb{Z}_{n}}\mathcal{O}_{\mathbf{Z}}\left[i,f\left(i\right)\right]\equiv\sum_{f\in\text{End}_{\lg n}\left(\mathbb{F}_{2}\right)}K_{f}\left(\mathbf{H}\right)\,\prod_{i\in\mathbb{Z}_{n}}\mathcal{O}_{\mathbf{Z}}\left[i,f\left(i\right)\right]\mod\left\{ \begin{array}{c}
\left(a_{ij}\right)^{2}-a_{ij}\\
0\le i,j<n
\end{array}\right\} ,
\]
where the polynomial $K_{f}\left(\mathbf{H}\right)$ is given by
\[
K_{f}\left(\mathbf{H}\right)=\left.\sum_{\left\{ 1\le d_{k}<\rho\,:\,k\right\} }\prod_{k\in\mathbb{Z}_{n}}\left(\frac{\partial}{\sqrt[d_{k}]{d_{k}!}\,\partial a_{if\left(i\right)}}\right)^{d_{k}}\sum_{0\le u<\rho}\,\prod_{0\le v<d}\left(\mathbf{H}\left[u,v,0\right]+\sum_{0\le i,j<n}\mathbf{H}\left[u,v,1+n\,i+j\right]\,a_{ij}\right)\right\rfloor _{\mathbf{A}=\mathbf{0}_{n\times n}}
\]
with the corresponding coefficients in the canonical representative
of the congruence class
\[
\sum_{f\in\text{End}_{\lg n}\left(\mathbb{F}_{2}\right)}\prod_{i\in\mathbb{Z}_{n}}\mathcal{O}_{\mathbf{Z}}\left[i,f\left(i\right)\right]\text{mod}\left\{ \begin{array}{c}
\underset{i\in\mathbb{Z}_{n}}{\prod}\mathbf{Z}\left[i,f\left(i\right),\text{lex}\left(\sigma,\gamma\right)\right]-\sqrt[\left|\mathcal{A}\left(f\right)\right|]{\frac{\underset{j\in\mathbb{Z}_{n}}{\prod}\mathbf{Y}\left[j,f\left(j\right),\text{lex}\left(f\right)\right]}{\nicefrac{\left(\text{GL}_{\text{lg}n}\left(\mathbb{F}_{2}\right)\times\text{GL}_{\text{lg}n}\left(\mathbb{F}_{2}\right)\right)}{\mathcal{A}\left(f\right)}}}\\
\left(\sigma,\gamma\right)\in\text{GL}_{\text{lg}n}\left(\mathbb{F}_{2}\right)\times\text{GL}_{\text{lg}n}\left(\mathbb{F}_{2}\right)\\
f\in\text{End}_{\text{lg}n}\left(\mathbb{F}_{2}\right)
\end{array}\right\} ,
\]
yields a systems of at most $\left(1+\lg n\right)$ equations in the
$\rho\cdot d\cdot\left(1+n^{2}\right)$ unknown entries for $\mathbf{H}\in\mathbb{C}^{\rho\times d\times n^{2}}$
respectively of the form
\[
\left\{ 1=\prod_{\begin{array}{c}
f\in\text{End}_{\lg n}\left(\mathbb{F}_{2}\right)\\
t=\text{Rank}\left(f\right)
\end{array}}K_{f}\left(\mathbf{H}\right)\,:\,t\in S\right\} \cup\left\{ 0=\prod_{\begin{array}{c}
f\in\text{End}_{\lg n}\left(\mathbb{F}_{2}\right)\\
t=\text{Rank}\left(f\right)
\end{array}}K_{f}\left(\mathbf{H}\right)\,:\,t\notin S\right\} .
\]
We know from the method of elimination via resultants that the corresponding
system necessarily admits a solution whenever the number unknowns
$\rho\cdot d\cdot\left(1+n^{2}\right)$ matches or exceeds the number
of algebraically independent constraints $\le1+\lg n$. Hence, when
\[
\rho=\left\lceil \frac{1+\lg n}{\left(1+n^{2}\right)\,d}\right\rceil =1,
\]
the number of variables matches or exceeds the number of algebraically
independent constraints. It follows from the degree lower bound $d\ge n$
that the desired claim holds. 
\end{proof}
In particular if we take $S=\left\{ \lg n\right\} $ then the corresponding
Boolean function is 
\[
F_{\text{GL}_{\text{lg}n}\left(\mathbb{F}_{2}\right)}\left(f\right)=\begin{cases}
\begin{array}{cc}
1 & \text{ if }f\in\text{GL}_{\text{lg}n}\left(\mathbb{F}_{2}\right)\\
\\
0 & \text{otherwise}
\end{array} & \text{for all }f\in\text{End}_{\text{lg}n}\left(\mathbb{F}_{2}\right).\end{cases}
\]
By Thrm. (10) $F_{\text{GL}_{\text{lg}n}\left(\mathbb{F}_{2}\right)}$
admits PDPs which can be specified via a polynomial $Q_{\text{GL}_{\text{lg}n}\left(\mathbb{F}_{2}\right)}\left(\mathbf{A}\right)$
whose Chow-rank is at most $\left(1+\lg n\right)$. If the orbital
argument used in the proof of Thrm. (\ref{PDP-Chow-Bound}) is carried
out for a Boolean function $F_{T}$ defined in Eq. (\ref{Monoid-Boolean-Function})
via the $n\times n$ orbital matrix
\[
\mathcal{O}_{\mathbf{Z}}\left[i,j\right]=\prod_{\left(\sigma,\gamma\right)\in\text{S}_{n}\times\text{S}_{n}}\mathbf{Z}\left[\sigma^{\left(-1\right)}\left(i\right),\gamma\left(j\right),\text{lex}_{\text{S}_{n}\times\text{S}_{n}}\left(\sigma,\gamma\right)\right],\text{ for all }0\le i,j<n,
\]
then it means that $F_{T}$ admits a PDP specified via a polynomial
$Q_{T}\left(\mathbf{A}\right)$ (up to binary algebraic relations)
of Chow-rank
\[
\rho\le\left\lceil \frac{\text{Pa}\left(n\right)}{\left(1+n^{2}\right)n}\right\rceil 
\]
where Pa$\left(n\right)$ denotes the integer partition function.

\section{Cauchy relations and PDP relaxations.}

We describe two optimal PDPs differing in their exponent parameter
for the Boolean function $F_{\text{S}_{n}}$ defined such that 
\[
F_{\text{S}_{n}}\left(f\right)=\begin{cases}
\begin{array}{cc}
1 & \text{ if }f\in\text{S}_{n}\\
0 & \text{otherwise}
\end{array} & \text{for all }f\in\mathbb{Z}_{n}^{\mathbb{Z}_{n}}.\end{cases}
\]
Our proposed PDPs will have exponent parameter $2$ and $1$ respectively
and will be specified to modulo a new set of polynomial size algebraic
relations presented in their expanded form.
\begin{thm}
\label{Optimal-PDP-F-inv-m2}There exist optimal PDPs for $F_{\text{S}_{n}}$
with exponent parameters $2$ specified via $\sum\prod\sum$ depth--3
arithmetic formulas whose underlying hypermatrices are of size $1\times{n+1 \choose 2}\times n^{2}$
\end{thm}

\begin{proof}
The proof follows from the observation that PDEs of $F_{\text{S}_{n}}$
where the exponent parameter is $m=2$, include expressions of the
form
\[
F_{\text{S}_{n}}\left(f\right)=\left(\frac{\partial^{n}\,\det\left(\mathbf{D}_{0}\mathbf{A}\mathbf{D}_{1}\right)}{\underset{i\in\mathbb{Z}_{n}}{\prod}\partial a_{i,f\left(i\right)}}\right)^{2},
\]
where $\mathbf{D}_{0}$ and $\mathbf{D}_{1}$ denote arbitrary diagonal
matrices whose diagonal entries are either $1$ or $-1$. By reducing
modulo Cauchy's algebraic relations 
\[
\left\{ \begin{array}{c}
a_{i,0}\left(a_{i,1}\right)^{j}\equiv a_{i,j}\\
\text{s.t.}\\
\begin{array}{c}
i\in\mathbb{Z}_{n}\\
0<j<n
\end{array}
\end{array}\right\} ,
\]
we optimally express det$\left(\mathbf{A}\right)$ via the Vandermonde
determinant identity
\begin{equation}
\det\left(\mathbf{A}\right)\equiv\left(\prod_{k\in\mathbb{Z}_{n}}a_{k,0}\right)\left(\prod_{0<i<j<n}\left(a_{j,1}-a_{i,1}\right)\right)\text{mod}\left\{ \begin{array}{c}
a_{i,0}\left(a_{i,1}\right)^{j}-a_{i,j}\\
\text{s.t.}\\
\begin{array}{c}
i\in\mathbb{Z}_{n}\\
0<j<n
\end{array}
\end{array}\right\} .\label{Determinant}
\end{equation}
Crucially, reductions modulo Cauchy's algebraic relations must be
performed in decreasing order of degrees of variables $\left\{ a_{i1}:i\in\mathbb{Z}_{n}\right\} .$
Thus completing the proof.
\end{proof}
\begin{thm}
\label{Optimal-PDP-F-inv-m1}There is an optimal PDPs for $F_{\text{S}_{n}}$
with exponent parameters $1$ specified via $\sum\prod\sum$ depth--3
arithmetic formulas whose underlying hypermatrices are of size $1\times n\times n^{2}$.
\end{thm}

\begin{proof}
The proof follows from the observation the PDE of $F_{\text{S}_{n}}$
where the exponent parameter is $m=1$, is given by
\[
F_{\text{S}_{n}}\left(f\right)=\frac{\partial^{n}\,\text{Per}\left(\mathbf{A}\right)}{\underset{i\in\mathbb{Z}_{n}}{\prod}\partial a_{i,f\left(i\right)}}.
\]
By reducing modulo algebraic relations
\[
\left\{ \begin{array}{c}
a_{i,k}\,a_{j,k}\equiv0\\
\text{s.t.}\\
\begin{array}{c}
0\le i<j<n\\
0\le k<n\\
\text{and}\\
a_{i,0}\left(a_{i,1}\right)^{j}\equiv a_{i,j}\\
\text{s.t.}\\
\begin{array}{c}
i\in\mathbb{Z}_{n}\\
0<j<n
\end{array}
\end{array}
\end{array}\right\} ,
\]
we optimally express Per$\left(\mathbf{A}\right)$ as follows
\begin{equation}
\text{Per}\left(\mathbf{A}\right)\equiv\left(\prod_{k\in\mathbb{Z}_{n}}a_{k,0}\right)\prod_{\begin{array}{c}
i\in\mathbb{Z}_{n}\\
0<j<n
\end{array}}\left(a_{i1}-\exp\left\{ \frac{2\pi\,j\sqrt{-1}}{n}\right\} \right)\text{ mod}\left\{ \begin{array}{c}
\left(a_{u,1}a_{v,1}\right)^{w}\\
\text{s.t.}\\
\begin{array}{c}
0\le u<v<n\\
0<w<n
\end{array}\\
\text{and}\\
a_{i,0}\left(a_{i,1}\right)^{j}-a_{i,j}\\
\text{s.t.}\\
\begin{array}{c}
i\in\mathbb{Z}_{n}\\
0<j<n
\end{array}
\end{array}\right\} .\label{Permanent}
\end{equation}
Thus completing the proof.
\end{proof}
The argument used to prove Thrm. (\ref{Optimal-PDP-F-inv-m2}) and
Thrm. (\ref{Optimal-PDP-F-inv-m1}) captures an important difference
separating the permanent from the determinant. Namely, the determinant
is obtained by reducing a Chow-rank one polynomial of total degree
${n+1 \choose 2}$ depending only on $2n$ variables taken from $\left\{ a_{ij}:0\le i,j<n\right\} $
modulo $2{n \choose 2}$ polynomial size algebraic relations presented
in their expanded form. Whereas the permanent is obtained by reducing
a Chow-rank one polynomial of total degree ${n+1 \choose 2}$ in $2n$
variables taken from $\left\{ a_{i,j}:0\le i,j<n\right\} $ modulo
$\left(n+1\right)\,{n \choose 2}$ algebraic relations presented in
their expanded form.
\begin{conjecture}
\label{Det-vs-Per-Conjecure}There exists no Chow--rank one polynomial
$Q_{\text{S}_{n}}\left(\mathbf{A}\right)$ of total degree $O\left(n^{2}\right)$
depending asymptotically only on $O\left(n\right)$ variables taken
from $\left\{ a_{i,j}:0\le i,j<n\right\} $ which reduces to Per$\left(\mathbf{A}\right)$
modulo $O\left(n^{2}\right)$ polynomial size algebraic relations
presented in their expanded form.
\end{conjecture}

By analogy to the determinant case, natural candidates for refuting
Conj. (\ref{Det-vs-Per-Conjecure}) are polynomial constructions devised
from the permanent of $\left(n-1\right)\times\left(n-1\right)$ Vandermonde
matrix given explicit as
\[
\sum_{\sigma\in\text{S}_{n-1}\subset\left(\mathbb{Z}_{n}\backslash\left\{ 0\right\} \right)^{\mathbb{Z}_{n}\backslash\left\{ 0\right\} }}\,\prod_{i\in\mathbb{Z}_{n}\backslash\left\{ 0\right\} }\left(a_{\sigma\left(i\right),1}\right)^{i-1}.
\]
Unfortunately when $n>3$, in contrast to the determinant setting,
the permanent of $\left(n-1\right)\times\left(n-1\right)$ Vandermonde
matrix has a non trivial Galois group over the field of fraction $\mathbb{Q}\left(a_{1,1},\cdots,a_{\left(k-1\right),1},a_{\left(k+1\right),1},\cdots,a_{\left(n-1\right),1}\right)$
when viewed as a univariate polynomial in the variable $a_{k,1}$
for all $0<k<n$. Consequently it does not split into linear factors
over $\mathbb{Q}\left(a_{1,1},\cdots,a_{\left(k-1\right),1},a_{\left(k+1\right),1},\cdots,a_{\left(n-1\right),1}\right)$.
Alternatively, we may consider the Chow-rank one polynomial construction
\[
\text{Per}\left(\mathbf{A}\right)\equiv\left(\prod_{k\in\mathbb{Z}_{n}}a_{k,0}\right)\left(\prod_{0<i<j<n}\left(a_{j,1}+a_{i,1}\right)\right)\text{mod}\left\{ \begin{array}{c}
\left(a_{u,1}a_{v,1}\right)^{w}\\
\text{s.t.}\\
\begin{array}{c}
0\le u<v<n\\
0<w<n
\end{array}\\
\text{and}\\
a_{i,0}\left(a_{i,1}\right)^{j}-a_{i,j}\\
\text{s.t.}\\
\begin{array}{c}
i\in\mathbb{Z}_{n}\\
0<j<n
\end{array}
\end{array}\right\} .
\]
Unfortunately we see that the construction above requires the same
number of algebraic relations.

Having obtained an optimal implicit description of the determinant
polynomial, we used it to devise other efficient PDPs. 
\begin{thm}
Let $T\subset\mathbb{Z}_{n}^{\mathbb{Z}_{n}}$ be defined such that
\[
T=\left\{ f\in\mathbb{Z}_{n}^{\mathbb{Z}_{n}}:\left|f^{\left(n-1\right)}\left(\mathbb{Z}_{n}\right)\right|=1\right\} ,
\]
then there exist a PDP with exponent parameter $m=1$, for the Boolean
function 
\[
F_{T}\left(f\right)=\begin{cases}
\begin{array}{cc}
1 & \text{ if }\left|f^{\left(n-1\right)}\left(\mathbb{Z}_{n}\right)\right|=1\\
0 & \text{otherwise}
\end{array}, & \text{for all}\:f\in\mathbb{Z}_{n}^{\mathbb{Z}_{n}}\end{cases},
\]
specified via a polynomial which admits a Chow decomposition of rank
at most $n$.
\end{thm}

\begin{proof}
The proof of the claim follows from Tutte's Directed Matrix Theorem
\cite{ZEILBERGER198561,Tutte1948} which asserts that 
\[
P_{T}\left(\mathbf{A}\right)=\left(\sum_{f\in T}\prod_{i\in\mathbb{Z}_{n}}a_{i\,f\left(i\right)}\right)=
\]
\[
\sum_{i\in\mathbb{Z}_{n}}\mathbf{A}\left[i,i\right]\det\left\{ \left(\text{diag}\left(\mathbf{A}\mathbf{1}_{n\times1}\right)-\mathbf{A}\right)\left[\begin{array}{cccccc}
0 & \cdots & i-1 & i+1 & \cdots & n-1\\
0 & \cdots & i-1 & i+1 & \cdots & n-1
\end{array}\right]\right\} 
\]
using the optimal implicit description for the determinant in Eq.
(\ref{Determinant}), the desired PDP stems from the identity 
\[
F_{T}\left(f\right)=\left(\frac{\partial^{n}}{\underset{i\in\mathbb{Z}_{n}}{\prod}\partial a_{i,f\left(i\right)}}\right)\sum_{i\in\mathbb{Z}_{n}}\mathbf{A}\left[i,i\right]\det\left\{ \left(\text{diag}\left(\mathbf{A}\mathbf{1}_{n\times1}\right)-\mathbf{A}\right)\left[\begin{array}{cccccc}
0 & \cdots & i-1 & i+1 & \cdots & n-1\\
0 & \cdots & i-1 & i+1 & \cdots & n-1
\end{array}\right]\right\} .
\]
\end{proof}
\begin{thm}
Let $T\subset\mathbb{Z}_{n-1}^{\mathbb{Z}_{n-1}}$ defined such that
\[
T=\left\{ f\in\mathbb{Z}_{n-1}^{\mathbb{Z}_{n-1}}:f^{\left(n-2\right)}\left(\mathbb{Z}_{n-1}\right)=\left\{ i:\begin{array}{c}
i\in\mathbb{Z}_{n-1}\\
f\left(i\right)=i
\end{array}\right\} \right\} ,
\]
then there exists a PDP with exponent parameter $1$ for the Boolean
function 
\[
F_{T}\left(f\right)=\begin{cases}
\begin{array}{cc}
1 & \text{ if }G_{f}\:\text{ contains no cycle of length}>1\\
0 & \text{otherwise}
\end{array}, & \text{for all}\:f\in\mathbb{Z}_{n-1}^{\mathbb{Z}_{n-1}}\end{cases},
\]
specified via a polynomial which admits a Chow decomposition of rank
at most $n$.
\end{thm}

\begin{proof}
The proof of the claim follows from Tutte's Directed Matrix Theorem
\cite{ZEILBERGER198561,Tutte1948} from which we get 
\[
P_{T}\left(\mathbf{A}\right)=\left(\sum_{f\in T}\prod_{i\in\mathbb{Z}_{n-1}}a_{i,f\left(i\right)}\right)=
\]
\[
\det\left\{ \left(\text{diag}\left(\mathbf{A}\mathbf{1}_{n\times1}\right)-\mathbf{A}\right)\left[:n-2,:n-2\right]\right\} \mod\left\{ \begin{array}{c}
\mathbf{A}\left[j,n-1\right]-\mathbf{A}\left[j,j\right]\\
j\in\mathbb{Z}_{n-1}
\end{array}\right\} 
\]
using the optimal implicit description for the determinant in Eq.
(\ref{Determinant}), the desired PDP stem from the identity 
\[
F_{T}\left(f\right)=\left(\frac{\partial^{n}\,P_{T}\left(\mathbf{A}\right)}{\underset{i\in\mathbb{Z}_{n}}{\prod}\partial a_{i,f\left(i\right)}}\right).
\]
\end{proof}
Note that the sets S$_{n}$, 
\[
\left\{ f\in\mathbb{Z}_{n}^{\mathbb{Z}_{n}}:\left|f^{\left(n-1\right)}\left(\mathbb{Z}_{n}\right)\right|=1\right\} ,
\]
as well as 
\[
\left\{ f\in\mathbb{Z}_{n}^{\mathbb{Z}_{n}}:f^{\left(n-1\right)}\left(\mathbb{Z}_{n}\right)=\left\{ i:\begin{array}{c}
i\in\mathbb{Z}_{n}\\
f\left(i\right)=i
\end{array}\right\} \right\} ,
\]
are normal subsets $\mathbb{Z}_{n}^{\mathbb{Z}_{n}}$ of size $n^{n-1}$
and $\left(n+1\right)^{\left(n-1\right)}$ respectively.

\subsection{PDP relaxations}

Recall that if $T$ denotes some arbitrary subset of the transformation
monoid $\mathbb{Z}_{n}^{\mathbb{Z}_{n}}$, then the corresponding
relaxation $F_{T}\,:\mathbb{Z}_{n}^{\mathbb{Z}_{n}}\rightarrow\mathbb{N}$
is such that
\[
F_{T}\left(f\right)\text{ is }\begin{cases}
\begin{array}{cc}
\ne0 & \text{ if }f\in T\\
0 & \text{otherwise}
\end{array} & \forall\:f\in\mathbb{Z}_{n}^{\mathbb{Z}_{n}}\end{cases}.
\]
We see that the non-vanishing support of $F_{T}$ implicitly tests
for membership of $f$ into $T$. Therefore relaxed PDPs of $F_{T}$
are of the form 
\[
F_{T}\left(f\right)=\left(\frac{\partial^{n}}{\underset{i\in\mathbb{Z}_{n}}{\prod}\partial a_{i,f\left(i\right)}}\left(Q_{T}\left(\mathbf{A}\right)\text{mod}\left\{ \begin{array}{c}
h_{u}\left(\mathbf{A}\right)\\
0\le u<k
\end{array}\right\} \right)\right)^{m},\quad\forall\:f\in\mathbb{Z}_{n}^{\mathbb{Z}_{n}},
\]
where $\left\{ \begin{array}{c}
h_{u}\left(\mathbf{A}\right)\\
0\le u<k
\end{array}\right\} $ denotes polynomial size set of algebraic relations presented in their
expanded form. By construction it must be the case that
\[
\left(Q_{T}\left(\mathbf{A}\right)\text{mod}\left\{ \begin{array}{c}
h_{u}\left(\mathbf{A}\right)\\
0\le u<k
\end{array}\right\} \right)\equiv P_{T}\left(\mathbf{A}\right)\in\left\{ \sum_{g\in T}\omega_{g}\,c_{g}\,\prod_{i\in\mathbb{Z}_{n}}a_{i,g\left(i\right)}\,:\,\begin{array}{c}
c_{g}\in\mathbb{N}\\
\left(\omega_{g}\right)^{m}=1\\
g\in T
\end{array}\right\} .
\]

\begin{thm}
Let T denotes the largest subset of permutations in $S_{n}\subset\mathbb{Z}_{n}^{\mathbb{Z}_{n}}$
whose graphs are a Spanning Union of Directed Even Cycles (SUDEC for
short) then
\[
F_{T}\,:\mathbb{Z}_{n}^{\mathbb{Z}_{n}}\rightarrow\left\{ 0,1\right\} 
\]
 defined such that 
\[
F_{T}\left(f\right)=\begin{cases}
\begin{array}{cc}
1 & \text{ if }G_{f}\,\text{is a SUDEC}\\
0 & \text{otherwise}
\end{array} & \text{for all}\:f\in\mathbb{Z}_{n}^{\mathbb{Z}_{n}}\end{cases}
\]
admits an efficient PDP relaxation with exponent parameter $2$.
\end{thm}

\begin{proof}
The proof follows from Tutte's skew symmetric matrix construction
\cite{Tutte1947}. Using the optimal implicit description for the
determinant in Eq. (\ref{Determinant}), the desired relaxed PDP is
\[
\left(\left(\frac{\partial^{n}}{\underset{i\in\mathbb{Z}_{n}}{\prod}\partial a_{i,f\left(i\right)}}\right)\left(\det\left(\mathbf{A}-\mathbf{A}^{\top}\right)\text{mod}\left\{ \begin{array}{c}
a_{k,i}\,a_{k,j},\\
\text{s.t.}\\
\begin{array}{c}
0\le i,\,j,\,k<n\end{array}
\end{array}\right\} \right)\right)^{2}
\]
\end{proof}
Examples discussed thus far above were either PDPs of PDP relaxations.
We describe here an optimal PDE relaxation which tests for a fixed
$g\in\mathbb{Z}_{n}^{\mathbb{Z}_{n}}$, wether or not the input function
$f$ lies in the left $g$--coset of $\mathbb{Z}_{n}^{\mathbb{Z}_{n}}$.
\begin{thm}
For an arbitrary $g\in\mathbb{Z}_{n}^{\mathbb{Z}_{n}}$ let 
\[
T_{g}=\left\{ f\in\mathbb{Z}_{n}^{\mathbb{Z}_{n}}\,:\,\exists\,h\in\mathbb{Z}_{n}^{\mathbb{Z}_{n}},\:\text{ s.t. }\:f=gh\right\} 
\]
and then the Boolean function
\[
F_{T_{g}}\left(f\right)=\begin{cases}
\begin{array}{cc}
1 & \text{ if }f\in R_{g}\\
0 & \text{otherwise}
\end{array} & \text{for all}\:f\in\mathbb{Z}_{n}^{\mathbb{Z}_{n}}\end{cases}
\]
admits an optimal PDE relaxation with exponent parameter 1.
\end{thm}

\begin{proof}
The proof of the claim follows from the identity 
\[
\prod_{i\in\mathbb{Z}_{n}}\left(\sum_{j\in\mathbb{Z}_{n}}a_{i,g\left(j\right)}\right)=\sum_{f\in T_{g}}\left|\left\{ h\in\mathbb{Z}_{n}^{\mathbb{Z}_{n}}:f=gh\right\} \right|\prod_{i\in\mathbb{Z}_{n}}\mathbf{A}\left[i,f\left(i\right)\right]
\]
The desired PDE relaxation is thus given by 
\[
F_{T_{g}}\left(f\right)=\left(\frac{\partial^{n}}{\underset{i\in\mathbb{Z}_{n}}{\prod}\partial a_{i,f\left(i\right)}}\right)\prod_{u\in\mathbb{Z}_{n}}\sum_{v\in\mathbb{Z}_{n}}a_{u,g\left(v\right)}
\]
\end{proof}
\begin{thm}
Let T denotes the largest subset of $\mathbb{Z}_{n}^{\mathbb{Z}_{n}}$
whose graphs are connected (i.e. unicyclic)
\[
F_{T}\,:\mathbb{Z}_{n}^{\mathbb{Z}_{n}}\rightarrow\left\{ 0,1\right\} 
\]
 defined such that 
\[
F_{T}\left(f\right)=\begin{cases}
\begin{array}{cc}
1 & \text{ if }G_{f}\,\text{ is unicyclic}\\
0 & \text{otherwise}
\end{array} & \text{for all}\:f\in\mathbb{Z}_{n}^{\mathbb{Z}_{n}}\end{cases}
\]
admits an efficient relaxed PDP with exponent parameter $1$.
\end{thm}

\begin{proof}
The proof of the claim follows from Tutte's Directed Matrix Theorem
\cite{ZEILBERGER198561,Tutte1948}, from which we get
\[
Q_{T}\left(\mathbf{A}\right)=\left(\sum_{0\le i,j<n}a_{i,j}\right)\det\left\{ \left(\text{diag}\left(\left(\mathbf{A}+\mathbf{A}^{\top}\right)\mathbf{1}_{n\times1}\right)-\left(\mathbf{A}+\mathbf{A}^{\top}\right)\right)\left[:n-1,:n-1\right]\right\} 
\]
\[
\equiv\left(\sum_{f\in T}\left|f^{\left(n-1\right)}\left(\mathbb{Z}_{n}\right)\right|\prod_{i\in\mathbb{Z}_{n}}a_{i,f\left(i\right)}\right)\mod\left\{ \begin{array}{c}
a_{k,i}\,a_{k,j}\\
\text{s.t.}\\
\begin{array}{c}
0\le i,j,k<n\end{array}
\end{array}\right\} .
\]
Using the optimal implicit description for the determinant in Eq.
(\ref{Determinant}), the desired relaxed PDP is
\[
\left(\frac{\partial^{n}}{\underset{i\in\mathbb{Z}_{n}}{\prod}\partial a_{i,f\left(i\right)}}\right)\left(Q_{T}\left(\mathbf{A}\right)\text{mod}\left\{ \begin{array}{c}
a_{k,i}\,a_{k,j}\\
\text{s.t.}\\
\begin{array}{c}
0\le i,j,k<n\end{array}
\end{array}\right\} \right)
\]
\end{proof}
There are of course natural examples of Boolean functions over $\mathbb{Z}_{n}^{\mathbb{Z}_{n}}$
which expectedly admit no efficient PDPs relaxation. 
\begin{conjecture}
The Boolean function 
\[
F_{\text{comp}}\,:\,\mathbb{Z}_{n}^{\mathbb{Z}_{n}}\rightarrow\mathbb{N}
\]
defined such that 
\[
F_{\text{comp}}\left(f\right)=\begin{cases}
\begin{array}{cc}
1 & \text{ if there exist }g\in\mathbb{Z}_{n}^{\mathbb{Z}_{n}}\:\text{such that }f=g\circ g\\
\\
0 & \text{otherwise}
\end{array}\end{cases}
\]
admits no efficient PDP relaxations.
\end{conjecture}

We may broaden slightly the computational model to include arithmetic
circuits whose gates are restricted to operations
\[
\left\{ \text{add}\left(\centerdot,\centerdot\right),\,\text{mul}\left(\centerdot,\centerdot\right),\,\text{exp}\left(\centerdot,\centerdot\right),\,\frac{\partial\centerdot}{\partial\centerdot},\,\text{log}_{\centerdot}\left(\centerdot\right),\,\text{mod}\left(\centerdot,\centerdot\right)\right\} .
\]
The gates above respectively correspond to addition, multiplication,
exponentiation, partial differentiation, logarithm and modular gates.
For simplicity each gate has fan-in equal to two. A partial differentiation
gate outputs the partial derivative of its left input with respect
to its right input. Whereas the output of addition, multiplication
gates are single-valued, the output of other gates nay be multivalued.
For instance, a logarithm gate outputs the multivalued logarithm of
the logarithm of its right input taken with respect to the logarithmic
basis specified by its left input. Modular gates output the remainder
the Euclidean division. We conclude this section by describing small
circuits in this broader computational model for construction akin
to PDPs expressing matrix inversion.
\begin{thm}
In the proposed model of computation there are constructions akin
to PDPs for expressing matrix inversion modulo Cauchy's algebraic
relations and specified via a Chow--rank $1$ polynomial
\end{thm}

\begin{proof}
The proof follows from the optimal expression of the determinant described
in Eq (\ref{Determinant}) and the well known identity
\[
\mathbf{A}^{-1}=\nabla_{\mathbf{A}^{\top}}\ln\left(\det\mathbf{A}\right).
\]
where 
\[
\left(\nabla_{\mathbf{A}^{\top}}F\left(\mathbf{A}\right)\right)\left[i,j\right]=\frac{\partial\,F\left(\mathbf{A}\right)}{\partial\,a_{j,i}},\ \forall\:0\le i,j<n.
\]
It follows that the desired PDP is given by 
\[
\mathbf{A}^{-1}=\nabla_{\mathbf{A}^{\top}}\ln\left\{ \left(\prod_{0\le i<n}a_{i,0}\right)\left(\prod_{0\le i<j<n}\left(a_{j,1}-a_{i,1}\right)\right)\text{mod}\left\{ \begin{array}{c}
a_{i,0}\left(a_{i,1}\right)^{j}-a_{i,j}\\
\text{s.t.}\\
\begin{array}{c}
i\in\mathbb{Z}_{n}\\
0<j<n
\end{array}
\end{array}\right\} \right\} 
\]
\end{proof}
Note that optimal PDP like constructions for inverting matrices yield
asymptotically optimal PDP like construction for multiplying matrices
via the well known reduction identity
\[
\left(\begin{array}{ccc}
\mathbf{I}_{n} & \mathbf{X} & \mathbf{0}_{n\times n}\\
\mathbf{0}_{n\times n} & \mathbf{I}_{n} & \mathbf{Y}\\
\mathbf{0}_{n\times n} & \mathbf{0}_{n\times n} & \mathbf{I}_{n}
\end{array}\right)^{-1}=\left(\begin{array}{ccc}
\mathbf{I}_{n} & -\mathbf{X} & \mathbf{X}\mathbf{Y}\\
\mathbf{0}_{n\times n} & \mathbf{I}_{n} & -\mathbf{Y}\\
\mathbf{0}_{n\times n} & \mathbf{0}_{n\times n} & \mathbf{I}_{n}
\end{array}\right).
\]

\section{Orbital bound for graph isomorphism and sub-isomorphism instances
via group actions.}

We introduce conjugacy class variants of Boolean functions $F_{\subseteq S}$,
$F_{\supseteq S}$ and $F_{=S}$ as Boolean functions defined with
respect to some given graph $G$ such that 
\[
F_{\underset{\sim}{\subset}G}\left(\mathbf{A}_{H}\right)=\begin{cases}
\begin{array}{cc}
1 & \text{ if }H\underset{\sim}{\subset}G\\
0 & \text{otherwise}
\end{array}, & F_{\underset{\sim}{\supset}G}\left(\mathbf{A}_{H}\right)\end{cases}=\begin{cases}
\begin{array}{cc}
1 & \text{ if }H\underset{\sim}{\supset}G\\
0 & \text{otherwise}
\end{array},\end{cases}
\]
\[
\text{and}
\]
\[
F_{\simeq G}\left(\mathbf{A}_{H}\right)=\begin{cases}
\begin{array}{cc}
1 & \text{ if }H\simeq G\\
0 & \text{otherwise}
\end{array},\end{cases}
\]
where $\mathbf{A}_{H}\in\left\{ 0,1\right\} ^{n\times n}$ denotes
the adjacency matrix of the $n$-vertex graph $H$. Let $\mathcal{O}_{\mathbf{Z}}$
denote the $n\times n$ orbital matrix whose entries (are monomials
in entries of a symbolic $n\times n\times\left(n!\right)$ hypermatrix
$\mathbf{Z}$) depict edge orbits induced by the action of the symmetric
group on the vertex set
\[
\mathcal{O}_{\mathbf{Z}}\left[i,j\right]=\prod_{\sigma\in\text{S}_{n}}\mathbf{Z}\left[\sigma\left(i\right),\,\sigma\left(j\right),\,\text{lex}_{\text{S}_{n}}\left(\sigma\right)\right],\quad\forall\,\left(i,j\right)\in\mathbb{Z}_{n}\times\mathbb{Z}_{n},
\]
\[
\text{where}
\]
\[
\text{lex}_{\text{S}_{n}}\left(\sigma\right)=\sum_{k\in\mathbb{Z}_{n}}\left(n-1-k\right)!\,\left|\left\{ \sigma\left(i\right)>\sigma\left(k\right):0\le i<k<n\right\} \right|.
\]
A lower bounds on the number of terms per factor in an optimal PDE/PDP
follows from the prime factorization of the number of non vanishing
terms occurring in the expanded form of multilinear polynomials used
to specify a PDE. For instance, consider the Boolean functions $F_{\simeq G}$
where $G$ is an arbitrary rigid $n$-vertex graph. Then PDPs for
$F_{\simeq G}$ are of the form
\[
F_{\simeq G}\left(\mathbf{A}_{H}\right)=\left(\left.\prod_{\left(i,j\right)\in\mathbb{Z}_{n}\times\mathbb{Z}_{n}}\left(\frac{\partial}{\partial a_{i,j}}\right)^{\mathbf{A}_{H}\left[i,j\right]}P_{\simeq G}\left(\mathbf{A}\right)\right\rfloor _{\mathbf{A}=\mathbf{0}_{n\times n}}\right)^{m},
\]
where 
\[
P_{\simeq G}\left(\mathbf{A}\right)\in\left\{ \sum_{\sigma\in\nicefrac{\text{S}_{n}}{\text{Aut}\left(G\right)}}\omega_{\sigma G\sigma^{-1}}\,\prod_{\left(i,j\right)\in\mathbb{Z}_{n}\times\mathbb{Z}_{n}}a_{ij}^{\mathbf{A}_{G}\left[\sigma\left(i\right),\sigma\left(i\right)\right]}\,:\,\left(\omega_{\sigma G\sigma^{-1}}\right)^{m}=1\right\} ,
\]
Let the prime factorization of the number of non-vanishing terms in
the expanded form of $P_{\simeq G}$ be given by
\[
\left|\nicefrac{\text{S}_{n}}{\text{Aut}\left(G\right)}\right|=n!=\prod_{p\in\mathbb{P}}p^{\alpha_{p}},
\]
where $\mathbb{P}\subset\mathbb{N}$ denotes the set of all primes.
Given that $G$ is rigid we know that
\[
\alpha_{p}=\sum_{j\ge1}\left\lfloor \frac{n}{p^{j}}\right\rfloor ,\ \forall\,p\in\mathbb{P}.
\]
 The smallest depth--3 $\sum\prod\sum$ formula expressing a multilinear
polynomial whose expanded form has $\underset{p\in\mathbb{P}}{\prod}p^{\alpha_{p}}$
non vanishing terms is of size 
\[
1\times\left(\sum_{p\in\mathbb{P}}\alpha_{p}\right)\times\left(1+n^{2}\right).
\]
This lower-bounds is seldom achievable, as seen from the fact that
$P_{\simeq G}$ typically has Chow--Rank $>1$. Using the orbital
argument we derive upper bound on the Chow--rank of polynomial used
to specify PDPs of $F_{\underset{\sim}{\subset}G}$, $F_{\underset{\sim}{\supset}G}$
and $F_{\simeq G}$ prescribed modulo binary algebraic relations.
\begin{thm}
Let $G$ be a given graph on $n$ vertices. Let PDPs for Boolean functions
$F_{\underset{\sim}{\subset}G}$ and $F_{\underset{\sim}{\supset}G}$
be given by
\[
F_{\underset{\sim}{\subset}G}\left(\mathbf{A}_{H}\right)=\left(\left.\frac{\partial^{\left|E\left(H\right)\right|}\left(Q_{\underset{\sim}{\subset}G}\left(\mathbf{A}\right)\text{mod}\left\{ \begin{array}{c}
\left(a_{i,j}\right)^{2}-a_{i,j}\\
0\le i,j<n
\end{array}\right\} \right)}{\underset{\left(i,j\right)\in E\left(H\right)}{\prod}\partial a_{i,j}}\right\rfloor _{\mathbf{A}=\mathbf{0}_{n\times n}}\right)^{m}
\]
and
\[
F_{\underset{\sim}{\supset}G}\left(\mathbf{A}_{H}\right)=\left(\left.\frac{\partial^{\left|E\left(H\right)\right|}\left(Q_{\underset{\sim}{\supset}G}\left(\mathbf{A}\right)\text{mod}\left\{ \begin{array}{c}
\left(a_{i,j}\right)^{2}-a_{i,j}\\
0\le i,j<n
\end{array}\right\} \right)}{\underset{\left(i,j\right)\in E\left(H\right)}{\prod}\partial a_{i,j}}\right\rfloor _{\mathbf{A}=\mathbf{0}_{n\times n}}\right)^{m},
\]
Let optimal Chow--decompositions over $\mathbb{C}$ of $Q_{\underset{\sim}{\subset}G}$
as well as $Q_{\underset{\sim}{\supset}G}$ be given by 
\[
Q_{\underset{\sim}{\subset}G}\left(\mathbf{A}\right)=\sum_{0\le u<\rho}\,\prod_{0\le v<d}\left(\mathbf{B}\left[u,v,0\right]+\sum_{0\le i,j<n}\mathbf{B}\left[u,v,1+n\,i+j\right]\,a_{i,j}\right),
\]
\[
Q_{\underset{\sim}{\supset}G}\left(\mathbf{A}\right)=\sum_{0\le u<\rho^{\prime}}\,\prod_{0\le v<d^{\prime}}\left(\mathbf{B}^{\prime}\left[u,v,0\right]+\sum_{0\le i,j<n}\mathbf{B}^{\prime}\left[u,v,1+n\,i+j\right]\,a_{i,j}\right).
\]
Then bounds on the sizes of hypermatrices $\mathbf{B}\in\mathbb{C}^{\rho\times d\times\left(1+n^{2}\right)}$
and $\mathbf{B}^{\prime}\in\mathbb{C}^{\rho^{\prime}\times d^{\prime}\times\left(1+n^{2}\right)}$
which underly depth--3 arithmetic formulas used to express $Q_{\underset{\sim}{\subset}S}$
and $Q_{\underset{\sim}{\supset}S}$ are such that
\[
\rho\le\left\lceil \frac{\left|\left\{ \mathbf{A}_{H}\in\nicefrac{\left\{ 0,1\right\} ^{n\times n}}{\text{Iso}}:H\underset{\sim}{\subset}G\right\} \right|+\left|\left\{ \mathbf{A}_{H}\in\nicefrac{\left\{ 0,1\right\} ^{n\times n}}{\text{Iso}}:\begin{array}{c}
H\underset{\not\sim}{\subset}G\\
\left|E\left(H\right)\right|\le\left|E\left(G\right)\right|
\end{array}\right\} \right|}{\left(1+n^{2}\right)\,d}\right\rceil 
\]
\[
\text{and}
\]
\[
\rho^{\prime}\le\left\lceil \frac{\left|\left\{ \mathbf{A}_{H}\in\nicefrac{\left\{ 0,1\right\} ^{n\times n}}{\text{Iso}}:H\underset{\sim}{\supset}G\right\} \right|+\left|\left\{ \mathbf{A}_{H}\in\nicefrac{\left\{ 0,1\right\} ^{n\times n}}{\text{Iso}}:\begin{array}{c}
H\underset{\not\sim}{\supset}G\\
\left|E\left(H\right)\right|\ge\left|E\left(G\right)\right|
\end{array}\right\} \right|}{\left(1+n^{2}\right)\,d^{\prime}}\right\rceil .
\]
\end{thm}

\begin{proof}
It suffices to work out the upper bound for the size of $\mathbf{B}\in\mathbb{C}^{\rho\times d\times\left(1+n^{2}\right)}$,
for the argument is identical for $\mathbf{B}^{\prime}\in\mathbb{C}^{\rho^{\prime}\times d^{\prime}\times\left(1+n^{2}\right)}$.
By definition, PDPs with exponent parameter $m=1$ prescribed modulo
Boolean relations are such that 
\[
\sum_{H\underset{\sim}{\subset}G}\prod_{\left(i,j\right)\in E\left(H\right)}a_{ij}\equiv\sum_{0\le u<\rho}\,\prod_{0\le v<d}\left(\mathbf{B}\left[u,v,0\right]+\sum_{0\le i,j<n}\mathbf{B}\left[u,v,1+n\,i+j\right]a_{i,j}\right)\text{ mod}\left\{ \begin{array}{c}
\left(a_{i,j}\right)^{2}-a_{i,j}\\
0\le i,j<n
\end{array}\right\} .
\]
By expanding the expression on the right-hand side and reducing it
modulo prescribed relations we get the equality
\[
\sum_{H\underset{\sim}{\subset}G}\prod_{\left(i,j\right)\in E\left(H\right)}a_{i,j}\equiv\sum_{H\subseteq\mathbb{K}_{n}}K_{H}\left(\mathbf{B}\right)\,\prod_{\left(i,j\right)\in E\left(H\right)}a_{i,j},
\]
the multivariate polynomial $K_{H}\left(\mathbf{B}\right)$ is given
by
\[
K_{H}\left(\mathbf{B}\right)=
\]
\[
\left(\left.\sum_{\left\{ d_{ij}\ge1:\left(i,j\right)\in E\left(H\right)\right\} }\prod_{\left(i,j\right)\in E\left(H\right)}\left(\frac{\partial}{\sqrt[d_{ij}]{d_{ij}!}\,\partial a_{i,j}}\right)^{d_{ij}}\sum_{0\le u<\rho}\prod_{0\le v<d}\left(\mathbf{B}\left[u,v,0\right]+\sum_{0\le i,j<n}\mathbf{B}\left[u,v,1+n\,i+j\right]\,a_{i,j}\right)\right\rfloor _{\mathbf{A}=\mathbf{0}_{n\times n}}\right)
\]
\begin{equation}
,
\end{equation}
substituting entries of $\mathbf{A}$ with the corresponding entries
of the orbital matrix $\mathcal{O}_{\mathbf{Z}}$ yields
\[
\sum_{H\underset{\sim}{\subset}G}\,\prod_{\left(i,j\right)\in E\left(H\right)}\mathcal{O}_{\mathbf{Z}}\left[i,j\right]=\sum_{H\subseteq\mathbb{K}_{n}}K_{H}\left(\mathbf{B}\right)\,\prod_{\left(i,j\right)\in E\left(H\right)}\mathcal{O}_{\mathbf{Z}}\left[i,j\right].
\]
Now we do modulo operations on both sides of this equation. Equating
corresponding coefficients on both sides of the equal sign, which
are coefficients in respective canonical representative congruence
classes
\[
\sum_{H\underset{\sim}{\subset}G}\,\prod_{\left(i,j\right)\in E\left(H\right)}\mathcal{O}_{\mathbf{Z}}\left[i,j\right]\text{ mod}\left\{ \begin{array}{c}
\underset{\left(i,j\right)\in E\left(H\right)}{\prod}\mathbf{Z}\left[i,j,\text{lex}_{\text{S}_{n}}\left(\sigma\right)\right]-\underset{\left(i,j\right)\in E\left(H\right)}{\prod}\mathbf{Y}\left[i,j,\text{lex}\left(H\right)\right]\\
H\underset{\sim}{\subset}G,\,\sigma\in\text{S}_{n}
\end{array}\right\} \equiv\sum_{H\subset G}\,\prod_{K\simeq H}\underset{\left(i,j\right)\in E\left(K\right)}{\prod}\mathbf{Y}\left[i,j,\text{lex}\left(K\right)\right],
\]
where 
\[
\text{lex}\left(H\right)=\sum_{\left(i,j\right)\in E\left(H\right)}2^{n\cdot i+j},\ \text{for all }H\subseteq\mathbb{K}_{n}
\]
and the corresponding coefficients in the canonical representative
of the congruence class
\[
\sum_{H\subseteq\mathbb{K}_{n}}K_{H}\left(\mathbf{B}\right)\,\prod_{\left(i,j\right)\in E\left(H\right)}\mathcal{O}_{\mathbf{Z}}\left[i,j\right]\text{ mod}\left\{ \begin{array}{c}
\underset{\left(i,j\right)\in E\left(H\right)}{\prod}\mathbf{Z}\left[i,j,\text{lex}_{\text{S}_{n}}\left(\sigma\right)\right]-\underset{\left(i,j\right)\in E\left(H\right)}{\prod}\mathbf{Y}\left[i,j,\text{lex}\left(H\right)\right]\\
\sigma\in\text{S}_{n},\,H\subseteq\mathbb{K}_{n}
\end{array}\right\} 
\]

\[
\equiv\sum_{H\subset G}K_{H}\left(\mathbf{B}\right)\,\prod_{K\simeq H}\underset{\left(i,j\right)\in E\left(K\right)}{\prod}\mathbf{Y}\left[i,j,\text{lex}\left(K\right)\right]+\sum_{H\underset{\not\sim}{\subset}G}K_{H}\left(\mathbf{B}\right)\,\prod_{K\simeq H}\underset{\left(i,j\right)\in E\left(K\right)}{\prod}\mathbf{Y}\left[i,j,\text{lex}\left(K\right)\right]
\]
yields a system of 
\[
\left|\left\{ \mathbf{A}_{H}\in\nicefrac{\left\{ 0,1\right\} ^{n\times n}}{\text{Iso}}:H\underset{\sim}{\subset}G\right\} \right|+\left|\left\{ \mathbf{A}_{H}\in\nicefrac{\left\{ 0,1\right\} ^{n\times n}}{\text{Iso}}:\begin{array}{c}
H\underset{\not\sim}{\subset}G\\
\left|E\left(H\right)\right|\le\left|E\left(G\right)\right|
\end{array}\right\} \right|
\]
equations in the $\rho\cdot d\cdot\left(1+n^{2}\right)$ unknown entries
for $\mathbf{B}\in\mathbb{C}^{\rho\times d\times\left(1+n^{2}\right)}$
after mering the same terms on both sides. $\left|\left\{ \mathbf{A}_{H}\in\nicefrac{\left\{ 0,1\right\} ^{n\times n}}{\text{Iso}}:H\underset{\sim}{\subset}G\right\} \right|$
stands for the number of graphs $H$ that are subgraph-isomorphic
to $G$ and the number of terms with a non-zero coefficient in the
canonical representative of the congruence class, while $\left|\left\{ \mathbf{A}_{H}\in\nicefrac{\left\{ 0,1\right\} ^{n\times n}}{\text{Iso}}:\begin{array}{c}
H\underset{\not\sim}{\subset}G\\
\left|E\left(H\right)\right|\le\left|E\left(G\right)\right|
\end{array}\right\} \right|$ stands for the number of graphs that are not subgraph-isomorphic
to $G$, which is also the number of terms with 0 coefiicients in
the canonical representative of the congruence class. Clearly $d\ge\left|E\left(G\right)\right|$
and we know that by eliminating variables via the method of resultants,
the latter system of equations necessarily admits a solution whenever
the number unknowns $\rho\cdot d\cdot\left(1+n^{2}\right)$ matches
or exceeds the number of algebraically independent constraints. We
see that setting 
\[
\rho=\left\lceil \frac{\left|\left\{ \mathbf{A}_{H}\in\nicefrac{\left\{ 0,1\right\} ^{n\times n}}{\text{Iso}}:H\underset{\sim}{\subset}G\right\} \right|+\left|\left\{ \mathbf{A}_{H}\in\nicefrac{\left\{ 0,1\right\} ^{n\times n}}{\text{Iso}}:\begin{array}{c}
H\underset{\not\sim}{\subset}G\\
\left|E\left(H\right)\right|\le\left|E\left(G\right)\right|
\end{array}\right\} \right|}{\left(1+n^{2}\right)\,d}\right\rceil ,
\]
the number of variables matches or exceeds the number of algebraically
independent constraints. It follows from the degree lower bound $d\ge\left|E\left(G\right)\right|$
that we can take 
\[
\rho=\left\lceil \frac{\left|\left\{ \mathbf{A}_{H}\in\nicefrac{\left\{ 0,1\right\} ^{n\times n}}{\text{Iso}}:H\underset{\sim}{\subset}G\right\} \right|+\left|\left\{ \mathbf{A}_{H}\in\nicefrac{\left\{ 0,1\right\} ^{n\times n}}{\text{Iso}}:\begin{array}{c}
H\underset{\not\sim}{\subset}G\\
\left|E\left(H\right)\right|\le\left|E\left(G\right)\right|
\end{array}\right\} \right|}{\left(1+n^{2}\right)\,\left|E\left(G\right)\right|}\right\rceil .
\]
\end{proof}
If we restrict the discussion to PDEs which test whether or not the
graph of the input function is isomorphism to the graph of the given
function $g\in\mathbb{Z}_{n}^{\mathbb{Z}_{n}}$, then the corresponding
Boolean function is of the form 
\[
F_{\simeq G_{g}}\left(f\right)=\begin{cases}
\begin{array}{cc}
1 & \text{ if there exist }\sigma\in\text{S}_{n}\:\text{such that }\sigma f\sigma^{\left(-1\right)}=g\\
\\
0 & \text{otherwise}
\end{array} & \text{for all }f\in\mathbb{Z}_{n}^{\mathbb{Z}_{n}}\end{cases}.
\]
PDEs of $F_{\simeq G_{g}}$ are of the form
\[
F_{\simeq G_{g}}\left(f\right)=\left(\frac{\partial^{n}\,P_{\simeq G_{g}}\left(\mathbf{A}\right)}{\underset{i\in\mathbb{Z}_{n}}{\prod}\partial a_{i,f\left(i\right)}}\right)^{m},\:\text{ for all }\:f\in\mathbb{Z}_{n}^{\mathbb{Z}_{n}},
\]
where
\[
P_{\simeq G_{g}}\left(\mathbf{A}\right)\in\left\{ \sum_{\sigma\in\nicefrac{\text{S}_{n}}{\text{Aut}G_{g}}}\omega_{\sigma}\,\prod_{i\in\mathbb{Z}_{n}}a_{i,\sigma g\sigma^{-1}\left(i\right)}\,:\,\left(\omega_{\sigma}\right)^{m}=1\right\} .
\]
The orbital argument yields PDP specified in term of a polynomial
$Q_{\simeq G_{g}}\left(\mathbf{A}\right)$ subject to 
\[
P_{\simeq G_{g}}\left(\mathbf{A}\right)\equiv\left(Q_{\simeq G_{g}}\left(\mathbf{A}\right)\mod\left\{ \begin{array}{c}
\left(a_{i,j}\right)^{2}-a_{i,j}\\
0\le i,j<n
\end{array}\right\} \right),
\]
$Q_{\simeq G_{g}}\left(\mathbf{A}\right)$ is of Chow-rank 
\[
\rho\le\left\lceil \frac{\kappa^{n}}{\left(n+n^{3}\right)\sqrt{n}}\right\rceil .
\]
for some real number $\kappa>1$.

\section{Orbital hypergraph isomorphism and sub-isomorphism PDPs.}

We describe hyperedges of $k$--uniform $n$-vertex hypergraph $H$
as a fixed subset subset of $E\left(H\right)\subseteq\mathbb{Z}_{n}^{\mathbb{Z}_{k}}$.
The monomial hyperedge list description of $H$ is
\[
\prod_{f\in E\left(H\right)}\mathbf{A}\left[f\left(0\right),\cdots,f\left(k-1\right)\right],
\]
where $\mathbf{A}$ denotes a symbolic side length $n$ hypermatrix
of order $m$ such that
\[
\mathbf{A}\left[i_{0},\cdots,i_{k-1}\right]=a_{i_{0},\cdots,i_{k-1}},\quad0\le i_{0},\cdots,i_{k-1}<n.
\]
At the limit where $k\rightarrow n$, an arbitrary hypergraph $H$
is specified by providing a fixed subset subset of $E\left(H\right)\subseteq\mathbb{Z}_{n}^{\mathbb{Z}_{n}}$.
Their orbit list generating polynomial yields an optimal PDP for the
Boolean function
\[
F_{\simeq H}\left(\mathbf{A}_{H^{\prime}}\right)=\begin{cases}
\begin{array}{cc}
1 & \text{ if }H^{\prime}\simeq H\\
0 & \text{otherwise}
\end{array},\end{cases}
\]
where $\mathbf{A}_{H}\in\left\{ 0,1\right\} ^{n\times\cdots\times n}$
denote the adjacency hypermatrix of $H$. Let $\mathcal{O}_{\mathbf{Z}}$
denote the orbital hypermatrix whose order is $n+1$ and side length
is equal to $n$. Entries of $\mathcal{O}_{\mathbf{Z}}$ depicts hyperedge
orbits induced by the action of the symmetric group on the vertex
set
\[
\mathcal{O}_{\mathbf{Z}}\left[i_{0},\cdots,i_{n-1}\right]=\prod_{\sigma\in\text{S}_{n}}\mathbf{Z}\left[\sigma\left(i_{0}\right),\cdots,\sigma\left(i_{n-1}\right),\text{lex}_{\text{S}_{n}}\left(\sigma\right)\right].
\]
Let 
\[
P_{\simeq H}\left(\mathbf{A}\right)=\prod_{f\in E\left(H\right)}a_{f\left(0\right),\cdots,f\left(k-1\right)},
\]
then the desired PDP is given by 
\[
F_{\simeq H}\left(\mathbf{A}_{H^{\prime}}\right)=\frac{\partial^{\left|E\left(H^{\prime}\right)\right|}}{\underset{f\in E\left(H^{\prime}\right)}{\prod}\partial a_{f\left(0\right),\cdots,f\left(n-1\right)}}
\]
\[
\left(P_{\simeq H}\left(\mathcal{O}_{\mathbf{Z}}\right)\text{mod}\left\{ \begin{array}{c}
\underset{g\in E\left(H\right)}{\prod}\mathbf{Z}\left[\sigma g\left(0\right),\cdots,\sigma g\left(n-1\right),\text{lex}_{\text{S}_{n}}\left(\sigma\right)\right]-\left(\begin{array}{cc}
1 & \frac{\underset{f\in E\left(H\right)}{\prod}a_{\sigma g\left(0\right),\cdots,\sigma g\left(n-1\right)}}{n!}\\
0 & 1
\end{array}\right)\\
\sigma\in\nicefrac{\text{S}_{n}}{\text{Aut}\left(H\right)}
\end{array}\right\} \right)\left[0,1\right].
\]
or alternatively 
\[
F_{\simeq H}\left(\mathbf{A}_{G}\right)=\left(\frac{\partial^{\left|\nicefrac{\text{S}_{n}}{\text{Aut}\left(G\right)}\right|\left|E\left(G\right)\right|}\,\omega_{H}\underset{\begin{array}{c}
g\in E\left(\sigma H\right)\\
\sigma\in\nicefrac{\text{S}_{n}}{\text{Aut}\left(H\right)}
\end{array}}{\prod}y_{g\left(0\right),\cdots,g\left(k-1\right),\text{lex}E\left(\sigma H\right)}}{\underset{\begin{array}{c}
f\in E\left(\sigma G\right)\\
\sigma\in\nicefrac{\text{S}_{n}}{\text{Aut}\left(G\right)}
\end{array}}{\prod}\partial y_{f\left(0\right),\cdots,f\left(n-1\right),\text{lex}E\left(\sigma G\right)}}\right)^{m},
\]
where 
\[
\text{lex}\left(E\left(R\right)\right)=\sum_{f\in E\left(R\right)}2^{\text{lex}_{\mathbb{Z}_{n}^{\mathbb{Z}_{n}}}\left(f\right)}.
\]
The first construction is a valid PDP since we know by Stirling approximation
that 
\[
n!\sim\left(\frac{n}{e}\right)^{n}\sqrt{2\pi n}
\]
is polynomial in the parameter $\left|\mathbb{Z}_{n}^{\mathbb{Z}_{n}}\right|$.
The latter construction describes an optimal PDE. Unfortunately adapting
the constructions above to sub-isomorphism instances does not result
in PDP for the set of algebraic relations needed is no longer polynomial
in the parameter $\left|\mathbb{Z}_{n}^{\mathbb{Z}_{n}}\right|$.
Fortunately PDPs inspire another approach to articulating the subtle
gap in complexity separating isomorphism instances from their sub-isomorphism
counterparts. Typically one considers specific isomorphism or sub-isomorphism
instances specified with two input hypergraphs. In such a setting
one seeks to determine whether or not the specific instance is a YES
instance or a NO instance. This restricted setting is very different
from the PDP constructions that we have described thus far. In PDP
construction that we have described we sought to construct a Boolean
functions which test for isomorphism or sub-isomorphism of a given
graph to any other graph. We see that determining whether or not the
specific instance is a YES instance or a NO instance is an easier
task.
\begin{thm}
Given $m$-uniform hypergraphs $H$ and $G$, the corresponding isomorphism
instance is a YES instance if and only if 
\[
0=\text{Discriminant}_{x}\left(x^{2}-p_{1}\,x+\frac{p_{1}^{2}-p_{2}}{2}\right)=\left(2p_{2}-p_{1}^{2}\right),
\]
where
\[
\left(\prod_{f\in E\left(H\right)}\mathcal{O}_{\mathbf{Z}}\left[f\left(0\right),\cdots,f\left(n-1\right)\right]+\prod_{g\in E\left(G\right)}\mathcal{O}_{\mathbf{Z}}\left[g\left(0\right),\cdots,g\left(n-1\right)\right]\right)
\]
\[
\text{mod}\left\{ \begin{array}{c}
\underset{h\in E\left(R\right)}{\prod}\mathbf{Z}\left[h\left(0\right),\cdots,h\left(n-1\right),\text{lex}_{\text{S}_{n}}\left(\sigma\right)\right]-\left(\sqrt[\left|\text{Aut}\left(R\right)\right|]{\underset{h\in R}{\prod}\mathbf{Y}\left[h\left(0\right),\cdots,h\left(n-1\right),\text{lex}\left(R\right)\right]}\right)^{k}\\
R\in\left(\underset{\sigma\in\nicefrac{\text{S}_{n}}{\text{Aut}\left(H\right)}}{\bigcup}\sigma H\right)\cup\left(\underset{\sigma\in\nicefrac{\text{S}_{n}}{\text{Aut}\left(G\right)}}{\bigcup}\sigma G\right)
\end{array}\right\} 
\]

\[
\equiv\left(\prod_{K\simeq H}\prod_{f\in E\left(K\right)}\mathbf{Y}\left[f\left(0\right),\cdots,f\left(n-1\right),\text{lex}\left(K\right)\right]^{k}+\prod_{K\simeq G}\prod_{f\in E\left(K\right)}\mathbf{Y}\left[f\left(0\right),\cdots,f\left(n-1\right),\text{lex}\left(K\right)\right]^{k}\right)=p_{k}
\]
\end{thm}

\begin{proof}
The key idea here is that if $H\simeq G$ then both $p_{1}$ and $p_{2}$
are monomials and $p_{2}=\frac{\left(p_{1}\right)^{2}}{2}$, while
if $H$ is not isomorphic to $G$ then $p_{1},p_{2}$ each have two
terms. For a specific isomorphism instance the claim immediately follows
from the observation that the canonical representative of 
\[
p_{k}\equiv\left(\prod_{f\in E\left(H\right)}\mathcal{O}_{\mathbf{Z}}\left[f\left(0\right),\cdots,f\left(n-1\right)\right]+\prod_{f\in E\left(G\right)}\mathcal{O}_{\mathbf{Z}}\left[f\left(0\right),\cdots,f\left(n-1\right)\right]\right)
\]
\[
\text{mod}\left\{ \begin{array}{c}
\underset{f\in E\left(R\right)}{\prod}\mathbf{Z}\left[f\left(0\right),\cdots,f\left(n-1\right),\text{lex}_{\text{S}_{n}}\left(\sigma\right)\right]-\left(\sqrt[\left|\text{Aut}\left(R\right)\right|]{\underset{h\in R}{\prod}\mathbf{Y}\left[f\left(0\right),\cdots,f\left(n-1\right),\text{lex}\left(R\right)\right]}\right)^{k}\\
R\in\left(\underset{\sigma\in\nicefrac{\text{S}_{n}}{\text{Aut}\left(H\right)}}{\bigcup}\sigma H\right)\cup\left(\underset{\sigma\in\nicefrac{\text{S}_{n}}{\text{Aut}\left(G\right)}}{\bigcup}\sigma G\right)
\end{array}\right\} 
\]
equals 
\[
p_{k}=2\prod_{\sigma\in\nicefrac{\text{S}_{n}}{\text{Aut}\left(H\right)}}\left(\underset{f\in E\left(\sigma H\right)}{\prod}\mathbf{Y}\left[f\left(0\right),\cdots,f\left(n-1\right),\,\text{lex}\left(\sigma H\right)\right]\right)^{k}
\]
for a YES instance and equals 
\[
\prod_{\sigma\in\nicefrac{\text{S}_{n}}{\text{Aut}\left(H\right)}}\left(\underset{f\in E\left(\sigma H\right)}{\prod}\mathbf{Y}\left[f\left(0\right),\cdots,f\left(n-1\right),\,\text{lex}\left(\sigma H\right)\right]\right)^{k}+\prod_{\sigma\in\nicefrac{\text{S}_{n}}{\text{Aut}\left(G\right)}}\left(\underset{g\in E\left(\sigma G\right)}{\prod}\mathbf{Y}\left[g\left(0\right),\cdots,g\left(n-1\right),\,\text{lex}\left(\sigma G\right)\right]\right)^{k}
\]
for a NO instance. The discriminant equation therefore follows from
Newton--Girard formulas.
\end{proof}
Note that the Chow--rank of the polynomial construction is at most
2 both before and after performing the reduction modulo prescribed
algebraic relations. Also note that the number of variables appearing
in the PDP can be reduced by considering an orbital matrix whose entries,
instead, depict the action of cosets of some canonically chosen set
of generators for the automorphism groups of hypergraphs $H$ and
$G$ respectively. This is best illustrated with isomorphism instances
defined over functional directed graphs. Let $f,g\in\mathbb{Z}_{n}^{\mathbb{Z}_{n}}$
and consider two distinct orbital matrices 
\[
\mathcal{O}_{\mathbf{Z}}\left[i,j\right]=\prod_{\begin{array}{c}
\sigma\in\nicefrac{\text{S}_{n}}{\text{Aut}\left(G_{f}\right)}\\
\gamma\in\text{Generator of }\text{Aut}\left(G_{f}\right)
\end{array}}\mathbf{Z}\left[\sigma\gamma\left(i\right),\sigma\gamma\left(j\right),\text{lex}_{\text{S}_{n}}\left(\sigma\gamma\right)\right].
\]
\[
\mathcal{O}_{\mathbf{Z}}^{\prime}\left[i,j\right]=\prod_{\begin{array}{c}
\sigma\in\nicefrac{\text{S}_{n}}{\text{Aut}\left(G_{g}\right)}\\
\gamma\in\text{Generator of }\text{Aut}\left(G_{g}\right)
\end{array}}\mathbf{Z}\left[\sigma\gamma\left(i\right),\sigma\gamma\left(j\right),\text{lex}_{\text{S}_{n}}\left(\sigma\gamma\right)\right].
\]
The expression of interest:
\[
p_{k}\equiv\left(\prod_{i\in\mathbb{Z}_{n}}\mathcal{O}_{\mathbf{Z}}\left[i,f\left(i\right)\right]+\prod_{j\in\mathbb{Z}_{n}}\mathcal{O}_{\mathbf{Z}}^{\prime}\left[j,g\left(j\right)\right]\right)
\]
\[
\text{mod}\left\{ \begin{array}{c}
\underset{i\in\mathbb{Z}_{n}}{\prod}\mathbf{Z}\left[i,h\left(i\right),\text{lex}_{\text{S}_{n}}\left(\sigma\right)\right]-\left(\sqrt[\left|\text{Aut}\left(G_{h}\right)\right|]{\underset{j\in\mathbb{Z}_{n}}{\prod}\mathbf{Y}\left[j,h\left(j\right),\text{lex}_{\mathbb{Z}_{n}^{\mathbb{Z}_{n}}}\left(h\right)\right]}\right)^{k}\\
\sigma\in\text{S}_{n},\ h\in\left(\underset{\begin{array}{c}
\sigma\in\nicefrac{\text{S}_{n}}{\text{Aut}\left(G_{f}\right)}\\
\gamma\in\text{Generator of }\text{Aut}\left(G_{f}\right)
\end{array}}{\bigcup}\sigma\gamma f\left(\sigma\gamma\right)^{-1}\right)\cup\left(\underset{\begin{array}{c}
\sigma\in\nicefrac{\text{S}_{n}}{\text{Aut}\left(G_{g}\right)}\\
\gamma\in\text{Generator of }\text{Aut}\left(G_{g}\right)
\end{array}}{\bigcup}\sigma\gamma g\left(\sigma\gamma\right)^{-1}\right)
\end{array}\right\} .
\]
The isomorphism instance is thus a YES instance if and only if 
\[
0=\text{Discriminant}_{x}\left(x^{2}-p_{1}x+\frac{p_{1}^{2}-p_{2}}{2}\right)=\left(2p_{2}-p_{1}^{2}\right).
\]
The number of substitutions prescribed by the search and replacement
procedure reduces in this setting to
\[
\left|\nicefrac{\text{S}_{n}}{\text{Aut}\left(G_{f}\right)}\right|\left|\text{Generator of }\text{Aut}\left(G_{f}\right)\right|+\left|\nicefrac{\text{S}_{n}}{\text{Aut}\left(G_{g}\right)}\right|\left|\text{Generator of }\text{Aut}\left(G_{g}\right)\right|.
\]
We now contrast the analysis above to sub-isomorphism instances. In
order to check sub-isomorphism, we may construct two polynomials 

\[
d\left(x\right)=\prod_{K\underset{\sim}{\subset}G}(x+\underset{f\in E(K)}{\prod}\mathbf{Y}\left[f\left(0\right),\cdots,f\left(n-1\right),\text{lex}\left(K\right)\right])
\]

\[
g\left(x\right)=\prod_{K\underset{\sim}{\subset}G\ or\ K\simeq H}(x+\underset{f\in E(K)}{\prod}\mathbf{Y}\left[f\left(0\right),\cdots,f\left(n-1\right),\text{lex}\left(K\right)\right])
\]
If $H$ is sub-isomorphic to $G$, then $g(x)$ divides $\left(d\left(x\right)\right)^{2}$.
For we see that every monomial in the entries of $\mathbf{Y}$ occuring
in factors of $g(x)$ appears at most twice in $\left(d\left(x\right)\right)^{2}$.
Otherwise, some monomial in the entries of $\mathbf{Y}$ occuring
in factors of $g\left(x\right)$ never occurs in a factor of $\left(d\left(x\right)\right)^{2}$.
Meanwhile, an orbital construction yields $d\left(x\right)$ and $g\left(x\right)$
by reducing modulo relations introduced in the previous theorem. Using
the foundamental theorem of symmetric polynomials we devise an explicit
expression for the the expanded form of $d\left(x\right)$ and $g\left(x\right)$
\begin{thm}
Given $m$-uniform hypergraphs $H$ and $G$, the corresponding sub-isomorphism
instance is a YES instance if and only if the polynomial 
\[
g\left(x\right)=\left(x^{1+2^{\left|E\left(G\right)\right|}}+\sum_{0<k\le1+2^{\left|E\left(G\right)\right|}}\left(-1\right)^{k}x^{1+2^{\left|E\left(G\right)\right|}-k}\sum_{{m_{1}+2m_{2}+\cdots+km_{k}=1+2^{\left|E\left(G\right)\right|}\atop m_{1}\ge0,\ldots,m_{k}\ge0}}\prod_{0<i\le1+2^{\left|E\left(G\right)\right|}}\frac{\left(-q_{k}\right)^{m_{i}}}{m_{i}!\,i^{m_{i}}}\right)
\]
divides the polynomial 
\[
\left(d\left(x\right)\right)^{2}=\left(x^{2^{\left|E\left(G\right)\right|}}+\sum_{0<k\le2^{\left|E\left(G\right)\right|}}\left(-1\right)^{k}x^{2^{\left|E\left(G\right)\right|}-k}\sum_{{m_{1}+2m_{2}+\cdots+km_{k}=1+2^{\left|E\left(G\right)\right|}\atop m_{1}\ge0,\ldots,m_{k}\ge0}}\prod_{0<i\le\eta}\frac{\left(-p_{k}\right)^{m_{i}}}{m_{i}!\,i^{m_{i}}}\right)^{2},
\]
where 
\[
\prod_{f\in E\left(G\right)}\left(1+\mathcal{O}_{\mathbf{Z}}\left[f\left(0\right),\cdots,f\left(n-1\right)\right]\right)
\]
\[
\text{mod}\left\{ \begin{array}{c}
\underset{f\in E\left(R\right)}{\prod}\mathbf{Z}\left[f\left(0\right),\cdots,f\left(n-1\right),\text{lex}_{\text{S}_{n}}\left(\sigma\right)\right]-\left(\sqrt[\left|\text{Aut}\left(R\right)\right|]{\underset{f\in E(R)}{\prod}\mathbf{Y}\left[f\left(0\right),\cdots,f\left(n-1\right),\text{lex}\left(R\right)\right]}\right)^{k}\\
R\subseteq\mathbb{Z}_{n}^{\mathbb{Z}_{n}}
\end{array}\right\} 
\]

\[
\equiv\sum_{K\underset{\sim}{\subset}G}\underset{f\in E(K)}{\prod}\mathbf{Y}\left[f\left(0\right),\cdots,f\left(n-1\right),\text{lex}\left(R\right)\right]^{k}=p_{k}
\]

\[
\left(\prod_{f\in E\left(H\right)}\mathcal{O}_{\mathbf{Z}}\left[f\left(0\right),\cdots,f\left(n-1\right)\right]+\prod_{f\in E\left(G\right)}\left(1+\mathcal{O}_{\mathbf{Z}}\left[f\left(0\right),\cdots,f\left(n-1\right)\right]\right)\right)
\]
\[
\text{mod}\left\{ \begin{array}{c}
\underset{f\in E\left(R\right)}{\prod}\mathbf{Z}\left[f\left(0\right),\cdots,f\left(n-1\right),\text{lex}_{\text{S}_{n}}\left(\sigma\right)\right]-\left(\sqrt[\left|\text{Aut}\left(R\right)\right|]{\underset{f\in E(R)}{\prod}\mathbf{Y}\left[f\left(0\right),\cdots,f\left(n-1\right),\text{lex}\left(R\right)\right]}\right)^{k}\\
R\subseteq\mathbb{Z}_{n}^{\mathbb{Z}_{n}}
\end{array}\right\} 
\]

\[
\equiv\left(\sum_{K\underset{\sim}{\subset}G}\underset{f\in E(K)}{\prod}\mathbf{Y}\left[f\left(0\right),\cdots,f\left(n-1\right),\text{lex}\left(R\right)\right]^{k}+\sum_{K\simeq H}\underset{f\in E(K)}{\prod}\mathbf{Y}\left[f\left(0\right),\cdots,f\left(n-1\right),\text{lex}\left(R\right)\right]^{k}\right)=q_{k}
\]
\end{thm}

\begin{proof}
The claim follows from the observation that for a YES sub-isomorphism
instance the monomial support of the canonical representative of the
congruence class 
\[
\prod_{f\in E\left(G\right)}\left(1+\mathcal{O}_{\mathbf{Z}}\left[f\left(0\right),\cdots,f\left(n-1\right)\right]\right)
\]
\[
\text{mod}\left\{ \begin{array}{c}
\underset{f\in E\left(R\right)}{\prod}\mathbf{Z}\left[f\left(0\right),\cdots,f\left(n-1\right),\text{lex}_{\text{S}_{n}}\left(\sigma\right)\right]-\sqrt[\left|\text{Aut}\left(R\right)\right|]{\underset{f\in E(R)}{\prod}\mathbf{Y}\left[f\left(0\right),\cdots,f\left(n-1\right),\text{lex}\left(R\right)\right]}\\
R\subseteq\mathbb{Z}_{n}^{\mathbb{Z}_{n}}
\end{array}\right\} ,
\]
matches the monomial support of the canonical representative of the
congruence class
\[
\prod_{f\in E\left(H\right)}\mathcal{O}_{\mathbf{Z}}\left[f\left(0\right),\cdots,f\left(n-1\right)\right]+\prod_{f\in E\left(G\right)}\left(1+\mathcal{O}_{\mathbf{Z}}\left[f\left(0\right),\cdots,f\left(n-1\right)\right]\right)
\]
\[
\text{mod}\left\{ \begin{array}{c}
\underset{f\in E\left(R\right)}{\prod}\mathbf{Z}\left[f\left(0\right),\cdots,f\left(n-1\right),\text{lex}_{\text{S}_{n}}\left(\sigma\right)\right]-\sqrt[\left|\text{Aut}\left(R\right)\right|]{\underset{f\in E(R)}{\prod}\mathbf{Y}\left[f\left(0\right),\cdots,f\left(n-1\right),\text{lex}\left(R\right)\right]}\\
R\subseteq\mathbb{Z}_{n}^{\mathbb{Z}_{n}}
\end{array}\right\} ,
\]
Furthermore, for such an instance the two polynomial differ by exactly
one of the non-vanishing integer coefficient being incremented by
one. Using the Newton-Girard formulas we derive the polynomial division
property.
\end{proof}
We see that the Chow--rank of polynomials start out having Chow--Rank
at most 2 prior to the reduction and increases to at least $2^{n}$
and at most $2^{n}+1$ after performing the reduction. The PDP construction
therefore exhibits an unconditional exponential separation between
isomorphism and sub-isomorphism instances.

\bibliographystyle{amsalpha}
\bibliography{On_Partial_Differential_Encodings_with_Application_to_Boolean_Circuits}

\end{document}